\journal{the Journal of Symbolic Computation}
\begin{document}

\begin{frontmatter}
\title{Isolating Bounded and Unbounded Real Roots of a Mixed Trigonometric-Polynomial}
\author[label1]{Rizeng Chen}
\ead{xiaxueqaq@stu.pku.edu.cn}
\author[label1]{Haokun Li}
\ead{haokunli@pku.edu.cn}
\author[label1]{Bican Xia}
\ead{xbc@math.pku.edu.cn}
\author[label1]{Tianqi Zhao}
\ead{zhaotq@pku.edu.cn}
\author[label2]{Tao Zheng}
\ead{taozheng@amss.ac.cn}
\affiliation[label1]{organization={School of Mathematical Sciences, Peking University},
            city={Beijing},
            country={China}}
\affiliation[label2]{organization={Key Laboratory of Mathematics Mechanization, Mathematics and Systems Science, Chinese Academy of Sciences},
            city={Beijing},
            country={China}}
\begin{abstract}
Mixed trigonometric-polynomials (MTPs) are functions of the form $f(x,\sin{x},\allowbreak \cos{x})$ with $f\in\Q[x_1,x_2,x_3]$. In this paper, an algorithm ``isolating" all the real roots of an MTP is provided and implemented. It automatically divides the real roots into two parts: one consists of finitely many ``bounded" roots in an interval $[\mu_-,\mu_+]$ while the other consists of probably countably many ``periodic" roots in $\mathbb{R}\backslash[\mu_-,\mu_+]$. For bounded roots, the algorithm returns isolating intervals and corresponding  multiplicities while for periodic roots, it returns finitely many mutually disjoint small intervals $I_i\subset[-\pi,\pi]$, integers $c_i>0$ and multisets of root multiplicity $\{m_{j,i}\}_{j=1}^{c_i}$ such that any periodic root $t>\mu_+$ is in the set $(\sqcup_i\cup_{k\in\mathbb{N}}(I_i+2k\pi))$ and any interval $I_i+2k\pi\subset(\mu_+,\infty)$ contains exactly $c_i$ periodic roots with multiplicities $m_{1,i},...,m_{c_i,i}$, respectively. The effectiveness and efficiency of the algorithm are shown by experiments.
Besides, the method used to isolate the roots in $[\mu_-,\mu_+]$ is applicable to any other bounded interval as well. The algorithm takes advantages of the weak Fourier sequence technique and deals with the intervals period-by-period without scaling the coordinate so to keep the length of the sequence short. The new approaches can easily be modified to decide whether there is any root, or whether there are infinitely many roots in unbounded intervals of the form $(-\infty,a)$ or $(a,\infty)$ with $a\in\Q$.
\end{abstract}

\begin{keyword}
mixed trigonometric-polynomial\sep real root isolation \sep periodic root \sep unbounded root \sep mixed trigonometric-polynomial inequality
\end{keyword}
\end{frontmatter}
\section{Introduction}
Real root isolation has been considered as a symbolic way to compute the real roots of a univariate polynomial or function. Real root isolation for polynomials is of fundamental importance in computational real algebraic geometry (\emph{e.g.}, it is a routing sub-algorithm for cylindrical algebraic decomposition \cite{collins1975quantifier}), and there are many efficient algorithms and popular tools in computer algebra systems such as Maple and Mathematica to isolate the real roots of polynomials.

There are also some research on real root isolation for non-polynomial functions: In \cite{mitchell1990robust}, interval arithmetic has been used to isolate real roots of a large class of analytic functions. The algorithm there mainly considers deciding the exclusion or the inclusion of a root in an interval. Recent years, Strzebo{\'n}ski presented a real root isolation procedure for univariate \emph{exp–log–arctan functions} \cite{strzebonski2008real,strzebonski2012real}.
The procedure is based on the technique of weak Fourier sequence and semi-Fourier sequence. Meanwhile, Strzebo{\'n}ski studied real root isolation for the so-called tame elementary functions using the concept of false derivative \cite{strzebonski2009real}. The completeness of these algorithms of Strzebo{\'n}ski relies on Schanuel's conjecture. 
In \cite{achatz2008deciding,mccallum2012deciding}, for developing a decision procedure for a certain class of
sentences of first order logic involving integral polynomials and
a certain specific analytic transcendental function trans$(x)$, Achatz, McCallum and Weispfenning provided a recursive real root isolation algorithm for  a certain kind of
generalised integral polynomial in trans$(x)$: $f(x,\text{trans}(x))$ (where $f(x, y)$ is a given polynomial in $y$ whose coefficients are elements of the ring of fractions of $\mathbb{Z}[x]$ with respect to powers of a specific
integral polynomial $d(x)$), using \emph{pseudo-differentiation} and
Rolle’s theorem in the spirit of \cite{collins1976polynomial}, and also a classical result of
Lindemann \cite{shidlovskii2011transcendental}. The list of
transcendental functions to which their decision method directly
applies includes $\exp(x),\ln(x)$ and $\arctan(x)$. The decision procedure when trans$(x)=\exp(x)$ is implemented and the corresponding root isolation algorithm is believed to be more efficient than the one 
provided in \cite{maignan1998solving}. Inspired by \cite{achatz2008deciding} and \cite{strzebonski2011cylindrical}, Xu \emph{et al.} studied a class of quantified exponential polynomial formulas extending polynomial ones \cite{xu2015quantifier}. 
The isolation algorithm for an exponential polynomial used therein is modified from the one in \cite{achatz2008deciding}.
In \cite{huang2018positive}, Huang \emph{et al.} developed some algorithms for isolating the positive roots of a class of functions called poly-powers, while in \cite{wang2021symbolic}, Wang and Xu focused on isolating the real roots of a certain class of radical expressions. 


A {\em mixed trigonometric-polynomial} (MTP) refers to a function of the form $f(x,\sin x, \cos x)$ with $f$ being a trivariate polynomial having rational coefficients. In this paper, we present an algorithm ``isolating" all the real roots (there are probably infinitely many of them) of such an MTP. While all inverse trigonometrical functions are exp-log-arctan functions, this is not true for the MTPs. In fact, since an exp-log-arctan function has finitely many real roots (\cite[Claim 4]{strzebonski2012real}), any MTP with infinitely many real roots is not an exp-log-arctan function (\emph{e.g.}, $\sin x$, $\cos x$, $2x\sin x+1$, \emph{etc.}). On the other hand, an MTP is usually not a tame elementary function since the variable varies over the whole $\R$. Hence, our result is not covered by the ones of Strzebo{\'n}ski. However, in Section \ref{sec:IBR}, while isolating the real roots of an MTP in a bounded interval, we take advantages of the weak Fourier sequences of a function introduced in \cite{strzebonski2008real,strzebonski2012real} (it can be also practicable to use the false derivative technique in \cite{strzebonski2009real} since any MTP restricted to a bounded interval is tame). Nevertheless, we do not need Schanuel's conjecture since we only consider trigonometric functions in one variable. Note that both the methods used in \cite{strzebonski2012real} and \cite{mccallum2012deciding} are applicable for isolating the real roots of the function of the form $f(\arctan x,x)$, with $f$ being a bivariate polynomial having rational coefficients. This is equivalent to isolating the real roots of the function $f(\hat{x},\tan \hat{x})$ with $\hat{x}\in(-\frac{\pi}{2},\frac{\pi}{2})$ (see, \emph{e.g.}, \cite[Section 6]{mccallum2012deciding}). The goal of the present paper is to isolate all the real roots of an MTP, 
which is a non-trivial generalisation.

The topic of automatically proving inequalities involving trigonometric functions, which is closely related to the topic of the present paper, has also drawn more and more attention these years due to their applications to engineering problems and different branches of science \cite{chen2020automated}. Usually, the term ``MTP inequalities" refers to inequalities of the form $g(x)>0$ or $g(x)\geq0$ with $g$ being an MTP. A so-called ``Natural Approach" introduced in \cite{mortici2011natural} by Mortici, approximates $\sin x$ and $\cos x$ by their Taylor polynomials and reduce the problem of proving an MTP inequality to the problem of proving some polynomial (or rational function) inequalities. That approach leads to a series of subsequent studies \cite{chen2012sharp,bercu2016pade,bercu2017natural,makragic2017method,chenliu2016}, in which a great deal of inequalities involving trigonometric functions are solved automatically. Later in  \cite{chen2020automated}, Chen and Liu provided a complete algorithm proving MTP inequalities and discussed systematically, for the first time, the termination of the algorithm. However, these algorithms only prove MTP inequalities over bounded intervals (\emph{e.g.}, $(0,\frac{\pi}{2})$ or $(0,\frac{\pi}{4}]$, \emph{etc.}). Till recently, Chen and Ge \cite{chen2022automated} tried to develop a procedure automatically proving MTP inequalities on the unbounded interval $(0,\infty)$. Unfortunately, we found some mistakes therein. Therefore, the problems of proving an MTP inequality, and of isolating real roots in an unbounded interval remain open so far. 

In this paper, an algorithm ``isolating" all the real roots of an MTP is provided (Algorithm \ref{alg:complete-algorithm}).  It automatically divides the real roots into two parts: one consists of finitely many ``bounded" roots in an interval $[\mu_-,\mu_+]$ with $\mu_-<0<\mu_+$, while the other consists of probably countably many ``periodic" roots in $\mathbb{R}\backslash[\mu_-,\mu_+]$. For the bounded roots, the algorithm returns some isolating intervals and the multiplicity of each of them. For those periodic roots greater than $\mu_+$, it returns finitely many mutually disjoint small intervals $I_i\subset[-\pi,\pi]$, integers $c_i>0$ and multisets of root multiplicity $\{m_{j,i}\}_{j=1}^{c_i}$ such that any periodic root $t>\mu_+$ is in the set $(\sqcup_i\cup_{k\in\mathbb{N}}(I_i+2k\pi))$ and any interval $I_i+2k\pi\subset(\mu_+,\infty)$ contains exactly $c_i$ periodic roots with multiplicities $m_{1,i},...,m_{c_i,i},$ respectively. For those periodic roots less than $\mu_-$, it returns some similar results. The algorithm is implemented with {\tt Maple2021}. A large number of examples (either from the literature or randomly generated) have been tested to show its effectiveness and efficiency.

In particular, this is the first algorithm to decide whether or not a given MTP has infinitely many real roots: it does iff the algorithm returns at least one periodic root. And of course, the algorithm is also able to decide whether or not an MTP has a real root (when there are real roots, we can compute the multiplicity of all of them). This allows us to prove or disprove an MTP inequality over $\mathbb{R}$. Our results also indicate, for the first time, that the ``distributions" of the roots of an MTP in the ``periods" $(-\pi,\pi]+2k\pi$, with $k>0$ (\emph{resp.}, $k<0$) sufficiently large, share a same pattern, which seems to be a basic law about the real roots of MTPs.

Besides, the method used to isolate the roots in $[\mu_-,\mu_+]$ is applicable to any other bounded interval as well. It takes advantages of the weak Fourier sequence technique and deals with the intervals period-by-period without scaling the coordinate (see, \emph{e.g.}, \cite[Algorithm 4.1, Step 2]{chen2022automated}) so to keep the length of the sequence short. An interesting problem is to compare the efficiency of the false derivative technique with the weak Fourier technique for isolating real roots of an MTP in a bounded interval, which is beyond the scope of this paper.

As a direct application, we discuss briefly at the end of the paper how one can modified our new approaches to prove or disprove an MTP inequality on an interval of the form $(a,b)$, $[a,b)$, $(a,b]$, $[a,b]$, with $a,b\in\Q\cup\{\pm\infty\}$.

The rest of the paper is organized as follows.
The next section introduces the problem and some basic results that will be used from time to time. In Section \ref{sec:farroots}, we introduce the algorithm ``isolating" periodic roots while in Section \ref{sec:IBR} we provide the procedure computing the isolating intervals for bounded roots. The main algorithm ``isolating" all real roots of an MTP is presented in Section \ref{sec:wholealgorithm}. And in Section \ref{sec:experiments}, experimental results are provided to indicate the effectiveness and efficiency of the algorithm. Finally, Section \ref{sec:conclusion} summarizes the paper and provides some remarks relating to the topic of proving MTP inequalities.

\section{The Problem}\label{sec:Preprocessing}
\subsection{Some basic results}
Denote by $\mathbb{N}, \Z, \Q, \R$ and $\C$ the sets of natural, integer, rational, real and complex numbers, respectively. For any subset $A\subset\mathbb{R}$, set $A_{>0}=\{a\in A\,|\,a>0\}$. The meanings of the notation $A_{\geq0}$, $A_{\leq0}$ and $A_{<0}$ are similar. For any $\mathbb{A}\in\{\Z, \Q, \R,\C\}$, $\mathbb{A}[x_1,\ldots,x_m]$ denotes the corresponding ring of polynomials in the variables $x_1,\ldots,x_m$ with coefficients in $\mathbb{A}$. For any $f\in\mathbb{A}[x_1,\ldots,x_m]$, $\deg(f,x_i)$, $\lc(f,x_i)$ and $\discrim(f,x_i)$ denote the degree, the leading coefficient and the discriminant of $f$ with respect to $x_i$, respectively. Moreover, $\tdeg(f,[x_{i_1},\ldots,x_{i_{t}}])$ denotes the total degree of the subsequence of variables $[x_{i_1},\ldots,x_{i_{t}}]$.

{\bf The problem} considered in this paper is to solve the real roots of the following equation:
\begin{equation}\label{eq:goalequation}
f(x,\sin{x},\cos{x})=0,~{\rm where}~f\in \Q[x,y,z].
\end{equation}
We provide some basic observations in the following.

\begin{lemma}\label{lemma:NoLimit}
For any nonzero $f\in\Q[x,y,z]$, the real root set of $f(x,\sin{x},\cos{x})$  has no accumulation point in $\mathbb{R}$.
\end{lemma}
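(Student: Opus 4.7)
My plan is to exploit the real\nobreakdash-analyticity of $F(x):=f(x,\sin x,\cos x)$. Since $\sin$ and $\cos$ extend to entire holomorphic functions on $\mathbb{C}$ and $f$ is a polynomial, $F$ is the restriction to $\mathbb{R}$ of an entire function, hence is real analytic on all of $\mathbb{R}$. By the identity theorem for real analytic functions on a connected open set, any accumulation point of the zero set of $F$ inside $\mathbb{R}$ forces $F\equiv 0$. It therefore suffices to rule out $F\equiv 0$ whenever the polynomial $f$ is nonzero.

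For the second step I would argue by contradiction and a degree reduction in $x$. Write $f=\sum_{i=0}^{n}x^{i}g_{i}(y,z)$ with leading coefficient $g_{n}\neq 0$. Because $(\sin x,\cos x)$ is bounded, $F(x)/x^{n}=g_{n}(\sin x,\cos x)+O(1/x)$ as $|x|\to\infty$, so the assumption $F\equiv 0$ forces $g_{n}(\sin x,\cos x)\to 0$. But this function is $2\pi$-periodic, so it must already vanish identically; that is, $g_{n}$ is zero on the entire real unit circle $\{y^{2}+z^{2}=1\}$. A direct polynomial division of $g_{n}$ by $y^{2}+z^{2}-1$ in the variable $y$, followed by evaluating the remainder at $y=\pm\sqrt{1-z^{2}}$, then shows $(y^{2}+z^{2}-1)\mid g_{n}$, say $g_{n}=(y^{2}+z^{2}-1)\,h_{n}$. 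Replacing $f$ by $f-x^{n}(y^{2}+z^{2}-1)\,h_{n}(y,z)$ leaves $F$ unchanged (since $y^{2}+z^{2}-1$ vanishes on the curve $(x,\sin x,\cos x)$) but strictly decreases $\deg_{x}f$; iterating, $f$ is reduced to $0$ modulo the ideal $(y^{2}+z^{2}-1)$.

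The main obstacle is really a matter of reading the hypothesis ``nonzero $f$'' correctly: $f=y^{2}+z^{2}-1$ is a nonzero polynomial for which $F\equiv 0$, so the statement must be understood modulo the relation $y^{2}+z^{2}-1=0$, which is the natural convention for MTPs. Equivalently, one preprocesses $f$ by reducing modulo this ideal; after such reduction the degree-reduction argument above yields $F\not\equiv 0$ whenever $f\neq 0$. With this convention, real analyticity plus the leading\nobreakdash-coefficient contradiction close the loop and show that the zero set of $F$ is locally finite in $\mathbb{R}$, hence has no accumulation point there.
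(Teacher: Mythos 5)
Your first step is exactly the paper's argument: the paper also passes to the entire function $h(z)=f(z,\sin z,\cos z)$ and uses the identity theorem, so an accumulation point of real zeros forces $h\equiv 0$. Where you genuinely add something is the second step. The paper closes with the single assertion that $h$ cannot vanish identically ``since $f$ is not [zero] over $\R$'', which, read against the stated hypothesis ``nonzero $f\in\Q[x,y,z]$'', is not a proof: as you point out, $f=y^2+z^2-1$ is a nonzero polynomial with $f(x,\sin x,\cos x)\equiv 0$, so the implication ``$f\neq 0$ as a polynomial $\Rightarrow h\not\equiv 0$'' is false and the lemma as literally phrased needs the hypothesis to be read modulo the ideal $(y^2+z^2-1)$ (equivalently, that the MTP itself is not identically zero, which is how the lemma is actually used, and which the paper checks separately in Lemma \ref{lemma:multiplicityofkpi} and Algorithm \ref{alg:complete-algorithm} via the tangent half-angle polynomial $g$ and a citation of \cite{chen2020automated}). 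Your repair is correct and self-contained: the leading-coefficient estimate together with $2\pi$-periodicity forces the top coefficient $g_n$ to vanish on the real unit circle; division by $y^2+z^2-1$ (monic in $y$) and evaluation of the linear remainder at $y=\pm\sqrt{1-z^2}$ gives $(y^2+z^2-1)\mid g_n$; and the induction on $\deg_x f$ shows $f(x,\sin x,\cos x)\equiv 0$ exactly when $f\in(y^2+z^2-1)$. So your proposal shares the paper's analytic core but, unlike the paper's one-line justification, actually proves the nonvanishing of $h$ under the corrected hypothesis, at the modest cost of making that hypothesis explicit; this also yields, in passing, the characterization of identically vanishing MTPs that the paper delegates to the literature.
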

\begin{proof}
The complex-valued function $h(z)=f(z,\sin{z},\cos{z})$ is analytic over $\C$. By definition, $h|_{\mathbb{R}}=f(x,\sin{x},\cos{x})$. If the set of the real roots of $f$ has an accumulation point in $\mathbb{R}$, then the set of the complex roots of $h$ has an accumulation point in $\mathbb{C}$ as well, which indicates $h\equiv0$ over $\mathbb{C}$. But $h$ can not be identically $0$ over $\mathbb{C}$, since $f$ is not over $\mathbb{R}$.
\end{proof}

\begin{lemma}\label{lemma:piRealRoot}
For any $q(x)\in\Q[x]$, ``$r\pi$ is a real root of $q(x)$ for any $r\in\Q\setminus\{0\}$" iff ``$q(x)$ is the zero polynomial in $\Q[x]$".
\end{lemma}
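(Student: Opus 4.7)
The backward direction is immediate: if $q$ is the zero polynomial in $\Q[x]$, then $q(t)=0$ for every real $t$, so in particular $q(r\pi)=0$ for every nonzero rational $r$. So the content is entirely in the forward direction, and my plan is to reduce that direction to a counting argument on the roots of a polynomial.

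For the forward direction, the key observation is that the set $S = \{r\pi \mid r \in \Q\setminus\{0\}\}$ is infinite. Indeed, since $\pi \neq 0$, the map $r \mapsto r\pi$ from $\Q\setminus\{0\}$ to $\R$ is injective, and $\Q\setminus\{0\}$ is countably infinite. Assuming the hypothesis, $q(x)$ vanishes at every element of $S$, so $q$ has infinitely many distinct real roots. But any nonzero polynomial in $\Q[x]$ of degree $d$ has at most $d$ complex (hence real) roots. Therefore $q$ must be the zero polynomial.

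I do not anticipate any genuine obstacle here; the only point to be careful about is confirming that distinct nonzero rationals $r$ produce distinct real numbers $r\pi$, which is immediate from $\pi \neq 0$. It is worth noting that the transcendence of $\pi$ plays no role in this argument as stated: the quantifier ``for any $r$'' makes the hypothesis strong enough that one needs nothing more than the finiteness of the root set of a nonzero univariate polynomial. (Were the lemma instead asserting the much stronger conclusion that some single nonzero rational $r$ with $q(r\pi)=0$ forces $q=0$, one would need to invoke Lindemann's theorem to say that $r\pi$ is transcendental and so cannot be a root of a nonzero $q\in\Q[x]$.) Consequently, I would present the proof in just a few lines, first disposing of the trivial implication and then applying the fundamental fact about the cardinality of the zero set of a nonzero polynomial.
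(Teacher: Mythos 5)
Your argument is internally correct, but it proves a weaker statement than the one the paper intends and actually uses. You read the hypothesis of the forward implication as a universal one ("$q(r\pi)=0$ for \emph{all} nonzero rational $r$"), and under that reading the counting argument on the root set of a nonzero univariate polynomial does suffice. However, the quantifier "for any $r\in\Q\setminus\{0\}$" is meant to scope over the whole biconditional: for each fixed nonzero rational $r$, "$r\pi$ is a root of $q$" iff "$q$ is the zero polynomial in $\Q[x]$". This is visible both in the paper's own one-line proof, which says the "only if" part follows from the fact that $r\pi$ is transcendental for any $r\in\Q\setminus\{0\}$, and in the way the lemma is applied immediately afterwards: the paper must decide, for a given $k$, whether $(2k+1)\pi$ is a root of $f(x,0,-1)$, and concludes that this holds iff $f(x,0,-1)$ is the zero polynomial. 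Your version cannot deliver that conclusion: knowing that one particular $(2k+1)\pi$ is a root gives you a single root, not infinitely many, so the cardinality argument never gets started.

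The missing ingredient is precisely the one you set aside in your parenthetical remark: by Lindemann's theorem $\pi$ is transcendental, hence $r\pi$ is transcendental for every nonzero rational $r$, and a transcendental number cannot be a root of a nonzero polynomial with rational coefficients; therefore $q(r\pi)=0$ for even one such $r$ forces $q=0$. That transcendence argument is the paper's entire proof of the nontrivial direction, and it is what the subsequent application requires. So, relative to the intended statement, your proposal has a genuine gap; the repair is simply to replace the counting argument by the transcendence argument you already identified as the alternative.
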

\begin{proof}
The ``if" part is obvious. The ``only if" part follows from the fact that $r\pi$ is transcendental for any $r\in\Q\setminus\{0\}$.
\end{proof}

\begin{theorem}\label{thm:BothAlgeNum}\cite[Chapter 2.3]{shafarevich1998number}
    If $\alpha\in\C$ and $\alpha\ne0$, then there is at least one transcendental number in each pair $(\alpha,\sin\alpha)$, $(\alpha,\cos\alpha)$, $(\alpha,\tan\alpha)$,
    $(\alpha,\cot\alpha)$.\qed
\end{theorem}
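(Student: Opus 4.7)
The plan is to deduce the statement from the classical Hermite--Lindemann theorem: for any nonzero algebraic $\beta\in\C$, the number $e^{\beta}$ is transcendental. All four cases follow by the same algebraic trick of writing each trigonometric value as a rational expression in $u:=e^{i\alpha}$ (note $u\ne0$ since the exponential never vanishes).

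For $(\alpha,\sin\alpha)$ I would argue by contradiction. Assume both $\alpha\ne0$ and $\sin\alpha$ are algebraic. From $\sin\alpha = (u-u^{-1})/(2i)$ one obtains
\[
u^{2} - (2i\sin\alpha)\,u - 1 = 0,
\]
a polynomial equation for $u$ whose coefficients lie in $\overline{\Q}$. Hence $u = e^{i\alpha}$ is algebraic. But $i\alpha$ is a nonzero algebraic number, so Hermite--Lindemann forces $e^{i\alpha}$ to be transcendental, a contradiction.

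The pair $(\alpha,\cos\alpha)$ is handled identically using $\cos\alpha = (u+u^{-1})/2$, which yields $u^{2} - 2\cos(\alpha)\,u + 1 = 0$. For $(\alpha,\tan\alpha)$ (tacitly assuming $\cos\alpha\ne 0$), if both were algebraic one would solve
\[
u^{2} \;=\; \frac{1+i\tan\alpha}{1-i\tan\alpha},
\]
the denominator being nonzero because $\tan\alpha = -i$ would contradict the identity $\sin^{2}\alpha+\cos^{2}\alpha=1$. Thus $u^{2}$, hence $u$, is algebraic, giving the same contradiction. The case $(\alpha,\cot\alpha)$ is completely analogous, with $u^{2}=(\cot\alpha+i)/(\cot\alpha-i)$.

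The main obstacle is of course Hermite--Lindemann itself, which I invoke as a black box; granted that theorem, the argument is pure manipulation inside $\overline{\Q}$, which explains why the authors simply cite Shafarevich rather than reproduce a proof.
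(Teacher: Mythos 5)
Your proof is correct: each reduction expresses $e^{i\alpha}$ as an algebraic function of the allegedly algebraic pair, and Hermite--Lindemann applied to the nonzero algebraic exponent $i\alpha$ yields the contradiction. The paper gives no proof of its own (it only cites Shafarevich with \qed), and your argument is exactly the standard deduction that the cited source uses, so there is nothing to reconcile.
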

\begin{definition}\label{definition:recip}
For a nonzero $g\in\Q[x,y]$, we define its \emph{reciprocal polynomial} $\mathcal{R}(g)$ as $g(x,\frac{1}{y})y^{\deg(g,y)}$. Additionally, we set $\mathcal{R}(0)=0$.
\end{definition}
From the definition it is obvious that 

\;\;1) $\mathcal{R}(g_1g_2)=\mathcal{R}(g_1)\mathcal{R}(g_2)$ for any $g_1,g_2\in\Q[x,y]$, 

\;\;2) if $g=\sum_{i=0}^{\deg(g,y)}a_i(x)y^i$ with $a_0(x)\neq0$ (\emph{i.e.}, $y\nmid g$), then $\mathcal{R}(\mathcal{R}(g))=g$, and

\;\;3) $y$ does not divide $\mathcal{R}(g)$ for any nonzero $g$.


\noindent{}We then have some key properties of $\mathcal{R}$ as follows.

\begin{proposition}\label{proposition:RkeepIrr}
Suppose $g\in\Q[x,y]$ and $g\neq cy$ for any $c\in\mathbb{Q}$. Then ``$g$ is irreducible in $\Q[x,y]$" implies ``$\mathcal{R}(g)$ is irreducible in $\Q[x,y]$".
\end{proposition}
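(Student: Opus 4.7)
The plan is to derive a contradiction from a hypothetical nontrivial factorization of $\mathcal{R}(g)$, exploiting the fact that $\mathcal{R}$ is an involution on the subring of polynomials not divisible by $y$, together with the multiplicativity property (1) already stated in the paper.

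First I would dispose of the case $y\mid g$. If $y$ divides $g$, write $g=y\cdot q$ with $q\in\Q[x,y]$; since $g$ is irreducible and $y$ is not a unit in $\Q[x,y]$, the factor $q$ must be a unit, i.e.\ $q\in\Q\setminus\{0\}$. This forces $g=cy$ for some $c\in\Q$, which is excluded by hypothesis. So we may assume $y\nmid g$, and property (2) yields $\mathcal{R}(\mathcal{R}(g))=g$.

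Next, suppose for contradiction that $\mathcal{R}(g)=h_1h_2$ with $h_1,h_2\in\Q[x,y]$ both non-units. By property (3), $y\nmid\mathcal{R}(g)$, hence $y\nmid h_i$ for $i=1,2$, and property (2) gives $\mathcal{R}(\mathcal{R}(h_i))=h_i$. Applying $\mathcal{R}$ to the factorization and invoking multiplicativity, I obtain
\[
g=\mathcal{R}(\mathcal{R}(g))=\mathcal{R}(h_1)\,\mathcal{R}(h_2).
\]
It only remains to check that neither $\mathcal{R}(h_i)$ is a unit: if $\mathcal{R}(h_i)=c\in\Q\setminus\{0\}$, then $h_i=\mathcal{R}(\mathcal{R}(h_i))=\mathcal{R}(c)=c$ would itself be a unit, contradicting the choice of $h_i$. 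Hence the displayed identity is a nontrivial factorization of $g$, contradicting its irreducibility.

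The only delicate point — and what I would flag as the main thing to get right — is the bookkeeping around divisibility by $y$ and the precise meaning of ``unit'' in $\Q[x,y]$. The hypothesis $g\neq cy$ is exactly what rules out the trivial obstruction $y\mid g$ that would otherwise break the involution property; once that case is handled, everything else reduces to routine application of properties (1), (2), (3). I do not foresee any substantive analytic or algebraic difficulty beyond this.
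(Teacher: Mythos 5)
Your proof is correct, and it takes a genuinely different route from the paper's. The paper argues by induction on $\tdeg(g,[x,y])$: it factors $\mathcal{R}(g)$ into irreducibles $c_0h_1^{p_1}\cdots h_t^{p_t}$, invokes the inductive hypothesis to conclude that each $\mathcal{R}(h_i)$ is again irreducible, and then reads off a nontrivial factorization of $g=\mathcal{R}(\mathcal{R}(g))$. You dispense with the induction entirely: after ruling out $y\mid g$ (which, as you note, is exactly what the hypothesis $g\neq cy$ is for), you apply the involution and multiplicativity to a hypothetical factorization $\mathcal{R}(g)=h_1h_2$ into non-units and check only that each $\mathcal{R}(h_i)$ is a non-unit, via $\mathcal{R}(\mathcal{R}(h_i))=h_i$ and $\mathcal{R}(c)=c$ — which is all that contradicting the irreducibility of $g$ requires. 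This is shorter and it sidesteps the degree bookkeeping implicit in the paper's inductive step (note that $\mathcal{R}$ can increase total degree, e.g.\ $\mathcal{R}(x^3+y)=x^3y+1$, so the applicability of the inductive hypothesis to the factors $h_i$ is not entirely immediate in the paper's formulation); what the paper's version buys in exchange is the slightly stronger observation that $\mathcal{R}$ carries the irreducible factorization of $\mathcal{R}(g)$ to one of $g$. The only cosmetic omission on your side is the remark that $\mathcal{R}(g)$ itself is nonzero and not a unit (nonzero because $g\neq0$; not a unit because $\mathcal{R}(g)=c$ would give $g=\mathcal{R}(\mathcal{R}(g))=c$, contradicting irreducibility), which the definition of irreducibility formally requires and which the paper likewise leaves implicit.
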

\begin{proof}
The proof is inductive on the total degree of $g$: If $\tdeg(g,[x,y])=0$, the conclusion is trivially true since $g$ is not irreducible. Suppose that $\tdeg(g,[x,y])=n>0$ and the implication holds for any polynomial ($\neq cy$) of total degree less that $n$. Assume $\mathcal{R}(g)$ is reducible and its factorization is $c_0h_1^{p_1}\cdots h_t^{p_t}$ with $h_i$ irreducible, $\sum_{j=1}^{t}p_j\geq2$ and $c_0\in\mathbb{Q}\backslash\{0\}$. Moreover, each $h_i\neq cy$ for any nonzero $c\in\Q$ because $y\nmid\mathcal{R}(g)$. Since $g$ is irreducible and $g\neq cy$ for any $c\in\Q$, $y\nmid g$. Thus $g=\mathcal{R}(\mathcal{R}(g))=c_0\mathcal{R}(h_1)^{p_1}\cdots \mathcal{R}(h_t)^{p_t}$. By the inductive hypothesis, each $\mathcal{R}(h_i)$ is irreducible. Since $\sum_{j=1}^{t}p_j\geq2$, $g$ is reducible, which is a contradiction.
\end{proof}

For any polynomial $g\in\Q[x,y]$ of positive total degree, $g$ is said to be \emph{square-free}, if the factorization of $g$ is $cg_{1}\cdots g_{t}$ with $c\in\Q\setminus\{0\}$,  $t\in\Z_{\ge1}$, each $g_i$ irreducible and $g_i$ and $g_j$ co-prime for any $1\le i\ne j\le t$. Any constant polynomial is not square-free in our definition.

\begin{proposition}\label{proposition:recipKeepSquarefree}
Suppose $g\in\Q[x,y]$ and $g\neq cy$ for any $c\in\mathbb{Q}$. Then ``$g$ is square-free" implies ``$\mathcal{R}(g)$ is square-free".
\end{proposition}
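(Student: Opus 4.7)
The natural plan is to reduce to Proposition~\ref{proposition:RkeepIrr} via the multiplicativity of $\mathcal{R}$ (property~1), and then verify that, after applying $\mathcal{R}$ to each irreducible factor of $g$, the resulting factors are still irreducible and pairwise coprime.

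I would start by writing the square-free factorization $g=c\,g_1\cdots g_t$ with $c\in\mathbb{Q}\setminus\{0\}$, each $g_i$ irreducible, and $g_i,g_j$ coprime for $i\neq j$. Because the $g_i$ are distinct irreducibles, at most one of them can be of the form $c_iy$. I would split into two cases. In the generic case where no $g_i$ equals $c_iy$, Proposition~\ref{proposition:RkeepIrr} immediately gives that each $\mathcal{R}(g_i)$ is irreducible, and property~1 gives $\mathcal{R}(g)=c\,\mathcal{R}(g_1)\cdots\mathcal{R}(g_t)$. In the remaining case, say $g_1=c_1y$, the hypothesis $g\neq cy$ forces $t\geq 2$, so the factors $g_2,\ldots,g_t$ are irreducible and different from $cy$, and $\mathcal{R}(g)=(cc_1)\,\mathcal{R}(g_2)\cdots\mathcal{R}(g_t)$ with each $\mathcal{R}(g_j)$ irreducible by Proposition~\ref{proposition:RkeepIrr}.

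The key remaining point, in either case, is pairwise coprimality of the $\mathcal{R}(g_i)$ (for $i$ in the relevant index set). Since each such $\mathcal{R}(g_i)$ is irreducible, any common factor would force $\mathcal{R}(g_i)=\lambda\,\mathcal{R}(g_j)$ for some $\lambda\in\mathbb{Q}\setminus\{0\}$. Here I would apply $\mathcal{R}$ once more, invoking property~2 (which applies because $y\nmid g_i$ and $y\nmid g_j$ for the relevant indices, since $g_i,g_j$ are irreducible and not equal to $cy$), to conclude $g_i=\lambda\,g_j$, contradicting coprimality in the square-free factorization of $g$. Combined with property~3, which guarantees $y\nmid\mathcal{R}(g_i)$, this shows the $\mathcal{R}(g_i)$ are mutually coprime irreducibles.

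The only mild obstacle is bookkeeping around the edge case $g_1=c_1y$: one has to verify that after removing the vanishing $y$-factor the product is still non-constant, which is exactly where the hypothesis $g\neq cy$ is used. Once this is handled, the factorization $\mathcal{R}(g)=(\text{nonzero constant})\cdot\prod\mathcal{R}(g_i)$ (with the product ranging over $i$ such that $g_i\neq c_iy$, and containing at least one factor) satisfies the definition of square-free, completing the proof.
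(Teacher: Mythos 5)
Your proposal is correct and follows essentially the same route as the paper: factor $g$ into irreducibles, strip the possible $y$-factor (which $\mathcal{R}$ sends to a constant), apply Proposition~\ref{proposition:RkeepIrr} to each remaining factor, and prove pairwise coprimality of the $\mathcal{R}(g_i)$ by applying $\mathcal{R}$ twice via property~2. The paper merely packages your case split by writing $g=c_0y^{\ell}h_1\cdots h_t$ with $\ell\in\{0,1\}$; the substance is identical.
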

\begin{proof}
Factorize $g$ as $c_0y^\ell h_1\cdots h_t$, with $c_0\in\Q$ nonzero, $\ell=0\text{ or }1$ (corresponding to the cases $y\nmid g$ and $y\,|\,g$ respectively), $t\geq1$ (since $g\neq cy$ and $g\notin\Q$) and each $h_i$ irreducible. 

First, $\mathcal{R}(g)=c_0\mathcal{R}(h_1)\cdots\mathcal{R}(h_t)$ since $\mathcal{R}(y^\ell)=1$. 
Second, for each $h_i$, $h_i\neq cy$ for any $c\in\Q$ because $y\nmid h_i$.
So, by Proposition \ref{proposition:RkeepIrr}, $\mathcal{R}(h_i)$ is irreducible.
Finally, $\mathcal{R}(h_i)$ and $\mathcal{R}(h_j)$ are co-prime for $1\leq i\neq j\leq t$. Suppose they are not, then $\mathcal{R}(h_i)=c'\mathcal{R}(h_j)$ for a $c'\in\Q\backslash\{0\}$. Since $y\nmid h_i$ and $y\nmid h_j$, $h_i=\mathcal{R}(\mathcal{R}(h_i))=c'\mathcal{R}(\mathcal{R}(h_j))=c'h_j$, contradicting the fact that $h_i$ and $h_j$ are co-prime in the factorization of $g$.
\end{proof}

\subsection{Decomposing the problem into sub-problems}
Because
\[
\left\{
\begin{array}{rl}\vspace{2mm}
    \sin x&= \frac{2\tan{\frac{x}{2}}}{1+\tan^2(\frac{x}{2})}\\
    \cos x&= \frac{1-\tan^2{\frac{x}{2}}}{1+\tan^2({\frac{x}{2}})}
\end{array}\right.,\;\;x\neq 2k\pi+\pi~(k\in \Z),
\]
Eq. \eqref{eq:goalequation} is equivalent to
\begin{subnumcases}
{}
 f(x,\sin{x},\cos{x})=0, ~~~~~~~~~~~~~~~~~~~~~x=2k\pi+\pi\label{meq:1},\\
    f(x,\frac{2\tan{\frac{x}{2}}}{1+\tan^2{(\frac{x}{2})}},\frac{1-\tan^2{(\frac{x}{2})}}{1+\tan^2{(\frac{x}{2}})})=0,~  \hfill x\in (2k\pi-\pi,2k\pi+\pi), \label{meq:2_}
\end{subnumcases}
where Eq. \eqref{meq:1} is equivalent to $f(x,0,-1)=0,~x=(2k+1)\pi$.
By Lemma \ref{lemma:piRealRoot}, 
$(2k+1)\pi$ is a root of $f(x,0,-1)$ iff $f(x,0,-1)$ is the zero polynomial in $\Q[x]$.
Through cancelling the denominators, we define a polynomial as follows:
\begin{equation}\label{equation:gTurnedToTan}
    g(x,t)=f(2x,\frac{2t}{1+t^2},\frac{1-t^2}{1+t^2})\cdot(1+t^2)^{\tdeg(f,[y,z])}\in\mathbb{Q}[x,t].
\end{equation}
Since $1+\tan^2{(\frac{x}{2})}>0$, solving the real roots of Eq. (\ref{meq:2_}) is then equivalent to solving the real roots of the equation
\begin{equation}
    g(x,\tan{x})=0,~x\in(k\pi-\frac{\pi}{2},k\pi+\frac{\pi}{2}).\label{meq:2}
\end{equation}

\begin{example}\label{example:gtan}
Consider the following MTP
$$f(x,\sin{x},\cos{x})=x\cos^4{x}-2\sin{x}\cos^3{x}+\frac{4}{5}\sin^3{x}.$$
By Eq. \eqref{equation:gTurnedToTan}, we have 
\begin{align*}
g(x,t)=10xt^8+20t^7-40xt^6-28t^5+60xt^4+92t^3-40xt^2-20t+10x.
\end{align*}
\end{example}

In Eq. \eqref{meq:2}, we have to deal with infinitely many intervals $(k\pi-\frac{\pi}{2},k\pi+\frac{\pi}{2})$ while $k$ varies in $\Z$. Fortunately, Proposition \ref{prop:firstbound} and Corollary \ref{cor:main_case} below, together with
Theorem \ref{thm:location-of-the-roots} in the next section will set us free from doing that.
\begin{proposition}\label{prop:firstbound}
Set nonzero polynomial $g\in\Q[x,y]$, $\delta>0$ and $U\subset\mathbb{R}$ to be a bounded set. If $|\lc(g,x)|>\delta$ for any $y\in U$, then there is an integer $N>0$ such that $x>N$ implies $\forall y\in U,\; g(x,y)\neq0$.
\end{proposition}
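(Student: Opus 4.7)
The plan is to view $g(x,y)$ as a polynomial in $x$ whose coefficients are polynomials in $y$, and then show that on the bounded set $U$ the non-leading coefficients stay bounded, so that for $x$ large the leading term dominates uniformly in $y \in U$.

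First, I would write $g(x,y) = \sum_{i=0}^{d} a_i(y)\,x^i$, where $d = \deg(g,x)$ and $a_d(y) = \lc(g,x) \in \Q[y]$. The hypothesis reads $|a_d(y)| > \delta$ for all $y \in U$. If $d = 0$, then $g(x,y) = a_0(y)$ and $|g(x,y)| > \delta > 0$ on $U$ for every $x$, so the statement holds with any $N$. From now on assume $d \ge 1$.

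Next, since $U \subset \mathbb{R}$ is bounded, it is contained in some compact interval $[-R, R]$. Each $a_i(y)$ is a polynomial, hence continuous on $[-R, R]$, hence bounded there; set
\[
M_i := \sup_{y \in [-R,R]} |a_i(y)| < \infty, \qquad 0 \le i \le d-1.
\]
Note that this boundedness step only uses that $U$ is bounded, not that it is closed.

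Then I would apply the reverse triangle inequality: for any $x > 0$ and any $y \in U$,
\[
|g(x,y)| \;\ge\; |a_d(y)|\,x^d - \sum_{i=0}^{d-1} |a_i(y)|\,x^i \;>\; x^d\!\left(\delta - \sum_{i=0}^{d-1} M_i\, x^{i-d}\right).
\]
The parenthesized expression tends to $\delta > 0$ as $x \to \infty$, so one can choose an integer $N > 0$ large enough that $\sum_{i=0}^{d-1} M_i\, x^{i-d} < \delta$ whenever $x > N$; for every such $x$ and every $y \in U$ we then have $|g(x,y)| > 0$, which is the conclusion. The only subtlety is ensuring the bound on the lower-order coefficients is genuinely uniform over $U$, which is handled by enlarging $U$ to its closed bounded hull $[-R,R]$; beyond that, the argument is a routine uniform version of the familiar fact that a polynomial is eventually dominated by its leading term.
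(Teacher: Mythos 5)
Your proposal is correct and follows essentially the same route as the paper's proof: bound the non-leading coefficients $a_i(y)$ uniformly on the bounded set $U$, then choose $N$ so that the leading term $|a_d(y)|\,x^d>\delta x^d$ dominates the sum of the lower-order terms for $x>N$. The only cosmetic differences are your explicit treatment of the case $d=0$ and the passage to the compact hull $[-R,R]$, neither of which changes the substance of the argument.
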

\begin{proof}
Suppose $\deg(g,x)=n$ and $a_i(y)\in\mathbb{Q}[y]\;(i=0,\ldots,n)$ are the coefficients of $g$ \emph{w.r.t.} $x$. Now that $U$ is bounded, there is $B_i>0$ such that $|a_i(y)|<B_i$ for any $y\in U$. For sufficiently large $x$, we have $\delta |x|^n>\sum_{i=0}^{n-1}B_i|x|^i$. Therefore, for any $y\in U$,
\[|a_n(y)x^n|>\delta|x|^n>\sum\limits_{i=0}^{n-1}B_i|x|^i\geq|\sum\limits_{i=0}^{n-1}a_i(y)x^i|.\]
This indicates that $g(x,y)\neq0$ for any $y\in U$.
\end{proof}


By Proposition \ref{prop:firstbound}, the value of the tangent function at every ``periodic'' real root of $g(x,\tan{x})$ appears near a real root of $\lc(g,x)$.  
In order to locate these ``periodic'' real roots, we introduce the following definition.

\begin{definition}\label{def:potential-periodic-interval}
For any $g\in\Q[x,y]$ with $\deg(g,y)\ge 1$,
let $I$ be a finite list of open intervals $[(a_0,b_0),\ldots,(a_{s+1},b_{s+1})]$, 
where $s\in\Z_{\ge0}$, $a_0=-\infty$, $b_{s+1}=+\infty$, $b_0<0$, $a_{s+1}>0$, $a_j\in \Q~(1\le j\le s+1)$ and $b_j\in \Q~(0\le j\le s)$, such that
\begin{enumerate}[label=$($\roman*$)$]
\item  for each $j~(1\le j\le s)$, $a_j< b_j$ and $b_j \le a_{j+1}$, 
\item for each real root of $\lc(g,x)$, there exists an interval $(a_j,b_j)~(1\le j\le s)$ containing it $($there is no root of $\lc(g,x)$ in $(a_0,b_0)$ and $(a_{s+1},b_{s+1}))$,
\item \label{item:potential-periodic-interval-3} each interval $(a_j,b_j)~(1\le j\le s)$ has exactly one real root of $\lc(g,x)$.
\end{enumerate}
We call $I$ a \emph{potential periodic interval set} of $g$.

Generally, if we replace ``exactly" in \ref{item:potential-periodic-interval-3} by ``at most", we call $I$ a \emph{general potential periodic interval set} of $g$.
\end{definition}

\begin{example}\label{example:potential-periodic-interval-set}
Consider the polynomial in Example \ref{example:gtan}:
\begin{align*}
g(x,y)=10xy^8+20y^7-40xy^6-28y^5+60xy^4+92y^3-40xy^2-20y+10x.   
\end{align*}
We have $\lc(g,x)=10y^8-40y^6+60y^4-40y^2+10$. 
It has two real roots $-1$ and $1$ and the isolating intervals of them are $(-\frac{215}{128}, -\frac{19}{32})$ and $(\frac{19}{32}, \frac{215}{128})$, respectively.
Thus,
{\footnotesize
$$\left[\left(-\infty, -\frac{63}{16}\right), \left(-\frac{215}{128}, -\frac{19}{32}\right), \left(\frac{19}{32}, \frac{215}{128}\right), \left(\frac{63}{16}, +\infty\right)\right]$$}is a potential periodic interval set of $g$.
\end{example}

\begin{corollary}\label{cor:main_case}
Let $g\in\Q[x,y]$ be with $\deg(g,y)\ge 1$, $[(a_0,b_0),\ldots,(a_{s+1},b_{s+1})]$ be a general potential periodic interval set of $g$, and
\begin{align}
    U:=\R\setminus\bigcup_{j=0}^{s+1}(a_i,b_i).
\end{align}
Then, there exists a positive integer $N$ such that 
$$
\forall x\in \R ~\forall y\in U~(x>N \Rightarrow g(x,y)\neq 0).
$$
\end{corollary}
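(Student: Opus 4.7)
The plan is to reduce the statement to a direct application of Proposition \ref{prop:firstbound}, for which the only thing to verify is that $|\lc(g,x)|$ is bounded away from $0$ on $U$ by some positive $\delta$.

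First I would observe that $U$ is a compact subset of $\mathbb{R}$. Indeed, since $(a_0,b_0)=(-\infty,b_0)$ and $(a_{s+1},b_{s+1})=(a_{s+1},+\infty)$ are both among the removed intervals, $U\subset[b_0,a_{s+1}]$ is bounded, and $U$ is closed as the complement of an open set in $\mathbb{R}$.

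Next, I would show that $\lc(g,x)$, regarded as a polynomial in $y$, has no zero on $U$. By definition of a general potential periodic interval set, every real root of $\lc(g,x)$ lies in one of the open intervals $(a_j,b_j)$ for $1\le j\le s$, and hence lies outside $U$. Since $\lc(g,x)\in\Q[y]$ is a continuous function of $y$ that vanishes nowhere on the nonempty compact set $U$, $|\lc(g,x)|$ attains a strictly positive minimum $\delta>0$ on $U$ (the case $U=\emptyset$ is trivial, as the universally quantified implication holds vacuously for any $N$).

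With this $\delta$ in hand, I would invoke Proposition \ref{prop:firstbound} verbatim with the set $U$ and the bound $\delta$: it produces a positive integer $N$ such that $x>N$ forces $g(x,y)\neq 0$ for every $y\in U$, which is exactly the conclusion. I expect no real obstacle here; the only subtlety is to be careful that the argument uses only the ``at most one root'' property (rather than ``exactly one''), so that the result does apply to the more permissive \emph{general} potential periodic interval sets as stated.
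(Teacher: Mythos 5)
Your proof is correct and follows essentially the same route as the paper: the paper's own proof simply notes that $U$ is bounded and $|\lc(g,x)|$ is bounded below on $U$, then invokes Proposition \ref{prop:firstbound}, which is exactly your argument with the compactness and no-root-in-$U$ details spelled out. The only cosmetic point is that Proposition \ref{prop:firstbound} asks for a strict bound $|\lc(g,x)|>\delta$, so take, say, half of the attained minimum; this does not affect correctness.
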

\begin{proof}
Note that $|\lc(g,x)|$ has a lower bound on $y\in U$ and $U$ is bounded.
Then, the conclusion follows from Proposition \ref{prop:firstbound}.
\end{proof}


For the polynomial $g\in\Q[x,y]$ in Eq. \eqref{meq:2}, 
let $U,a_j,b_j$ be the same as in Corollary \ref{cor:main_case}.
Take $N_+, N_-\in\Z_{\ge1}$ such that $\forall x\in \R ~\forall y\in U~(x>N_+ \Rightarrow g(x,y)\neq 0)$ and
$\forall x\in \R ~\forall y\in U~(x<-N_- \Rightarrow g(x,y)\neq 0)$.
There exist $k_+\in\Z_{\ge0}$ and $k_-\in\Z_{\le0}$ such that
$k_+\pi-\frac{\pi}{2}<N_+<k_+\pi+\frac{\pi}{2}$ and
$k_-\pi-\frac{\pi}{2}<-N_-<k_-\pi+\frac{\pi}{2}$, respectively.
Thus, Eq. \eqref{meq:2} can be transformed equivalently into the following three cases:

\begin{enumerate}[label=$($\ref{meq:2}\alph*$)$]
    \item \label{item:MainThreeCase1}
        \begin{align*}
            g(x,\tan x)=0,~~~~&x\in(k\pi-\frac{\pi}{2},k\pi+\frac{\pi}{2})
            ~~~~~~~~~~~~~~~~~~~~~~~~~~~~~~~~~~~~~~~~
        \end{align*}
        where $k_{-}\leq k \leq k_{+}\;(k\in\Z)$;
    \item \label{item:MainThreeCase2}
    \begin{align*}
        g(x,\tan{x})=0,~~~~&x\in(k\pi-\frac{\pi}{2},k\pi+\arctan(b_0))\\
                         &x\in(k\pi+\arctan(a_i),k\pi+\arctan(b_i)),~{i=1,\ldots,s}\\
                         &x\in(k\pi+\arctan(a_{s+1}),k\pi+\frac{\pi}{2})
    \end{align*}
        where $k>k_{+}\;(k\in\Z)$;
    \item \label{item:MainThreeCase3}
    \begin{align*}
        g(x,\tan{x})=0,~~~~&x\in(k\pi-\frac{\pi}{2},k\pi+\arctan(b_0))\\
                         &x\in(k\pi+\arctan(a_i),k\pi+\arctan(b_i)),~{i=1,\ldots,s}\\
                         &x\in(k\pi+\arctan(a_{s+1}),k\pi+\frac{\pi}{2})
    \end{align*}
        where $k<k_{-}\;(k\in\Z)$.
\end{enumerate}

A possible way to solve case \ref{item:MainThreeCase1} for $k_-\le k\le k_+$ is to go back to the original MTP $f(x,\sin x,\cos x)$ and consider the roots in $(2k_-\pi-\pi,2k_+\pi+\pi)$. Replace $x$ in $f(x,\sin x,\cos x)$ by $\ell x$ with $\ell\in\Z_{>0}$ large enough such that the roots in $(2k_-\pi-\pi,2k_+\pi+\pi)$ ``shrink" to the roots of $f(\ell x,\sin \ell x,\cos \ell x)$ in $(-\pi/2,\pi/2)$. Then unify the sine and cosine functions with the tangent function in $f(\ell x,\sin \ell x,\cos \ell x)$ to obtain some $\hat{g}(x,\tan x)$. Finally, deal with the roots of $\hat{g}(-x, -\tan x)$ and $\hat{g}(x, \tan x)$ in the interval $(0,\pi/4)$. 
The reason why we do not use this method in Section \ref{sec:IBR} is that $\ell$  may be large and the ``degree" of $\tan x$ in $\hat{g}(x, \tan x)$ can be high. Instead, we will deal with case \ref{item:MainThreeCase1} period-by-period to avoid that problem.

The last case can be reduced to the second last one via replacing $g(x,\tan x)$ by $g(-x,-\tan x)$. In the next section, we focus on the asymptotic properties of the real roots corresponding to case \ref{item:MainThreeCase2}.

\section{Asymptotic Property of Roots}\label{sec:farroots}
In this section, we shall deal with case \ref{item:MainThreeCase2}, namely isolate the ``periodic'' roots of $g(x,\tan{x})$ where $g$ is a nonzero polynomial in $\Q[x,y]$.

\subsection{Detecting and locating periodic roots}

There are countably many ``periodic'' roots of $g(x,\tan{x})$, so they can not be located by the methods of traditional real root isolation.
However, these ``periodic'' roots can be divided into finitely many bunches via a general potential periodic interval set (recall Def. \ref{def:potential-periodic-interval}).
The roots in each bunch appear periodically, that is why we call them ``periodic'' roots.
In this subsection, we propose an algorithm to isolate these ``periodic'' roots by means of a general potential periodic interval set.

First, we introduce the definition of \emph{analytic delineable} and an important result about it from the work of McCallum \cite{mccallum1998improved}.

\begin{definition}[\cite{mccallum1998improved}]
Let $\vX$ denote the $(r-1)$-tuple $(x_1,\cdots,x_{r-1})$. An $r$-variate polynomial $f(\vX,x_r)$ over the reals is said to be \emph{analytic delineable} on a sub-manifold $S$ $($usually connected$)$ of $\R^{r-1}$, if 
\begin{enumerate}[label=$($\arabic*$)$]
    \item the portion of the real variety of $f$ that lies in the cylinder $S\times \R$ over $S$ consists of the union of the graphs of some $k\geq 0$ analytic functions $\theta_1<\cdots <\theta_k$ from $S$ into $\R$, and
    \item there exist positive integers $m_1$, $\cdots$, $m_k$ such that for every $a\in S$, the multiplicity of the root $\theta_i(a)$ of $f(a,x_r)$ $($considered as a univariate polynomial in $x_r$$)$ is $m_i$.
\end{enumerate}
Also, an $r$-variate polynomial $f(\vX,x_r)$ over $\R$ is said to be \emph{degree-invariant} on a subset $S$ of $\R^{r-1}$ if the total degree of $f(p,x_r)$ $($as a univariate polynomial in $x_r$$)$ is the same for every point $p$ of S.
\end{definition}

\begin{proposition}[{\cite[Theorem 2]{mccallum1998improved}}]\label{prop:McCallum}
Let $r\geq 2$ be an integer and $f(\vX,x_r)$ be a polynomial in $\R[\vX,x_r]$ of positive total degree. 
Suppose that $D(\vX):=\discrim(f,x_r)$ is a nonzero polynomial. 
Let $S$ be a connected submanifold of $\R^{r-1}$ such that $f$ is degree-invariant and does not vanish identically on $S$, and $D(\vX)>0$ or $D(\vX)<0$ over $S$. Then, $f$ is analytic delineable on $S$.\qed
\end{proposition}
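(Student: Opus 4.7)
The plan is to combine the implicit function theorem (to produce local analytic root branches) with a connectedness argument on $S$ (to extend these branches globally). Since $f$ is degree-invariant on $S$, the leading coefficient $\ell(\vX):=\lc(f,x_r)$ is nowhere zero on $S$, so for every $a\in S$ the univariate polynomial $f(a,x_r)$ has the same degree $d:=\deg(f,x_r)$. Because $D(\vX)$ is nonzero on $S$, the polynomial $f(a,x_r)$ has no multiple complex roots, and in particular $\partial f/\partial x_r$ does not vanish at any real root of $f(a,x_r)$. The implicit function theorem therefore gives, for each real root $\xi$ of $f(a_0,x_r)$ at any chosen base point $a_0\in S$, a neighborhood $V\subset S$ of $a_0$ and a unique real analytic function $\theta:V\to\R$ with $\theta(a_0)=\xi$ and $f(a,\theta(a))\equiv 0$ on $V$.

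Next I would prove that the number of real roots of $f(a,x_r)$ is constant on $S$. By continuity of the roots of a polynomial with respect to its coefficients (which uses the uniform degree coming from $\ell(\vX)\neq 0$), the multiset of complex roots of $f(a,x_r)$ varies continuously with $a$. Since $D(\vX)\neq 0$ throughout $S$, two distinct complex roots of $f(a,x_r)$ can never coalesce as $a$ moves in $S$: a complex-conjugate pair cannot merge into a real double root and split along the real axis, and two real roots cannot run into each other. Hence the number of real roots is locally constant, and by connectedness of $S$ it is globally constant, say $k$. Labelling the real roots at the base point $a_0$ as $\xi_1<\cdots<\xi_k$, the local analytic branches extend uniquely by analytic continuation along paths in $S$, with their ordering preserved by non-collision. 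This yields global analytic functions $\theta_1<\cdots<\theta_k:S\to\R$ whose graphs exhaust the real part of the variety of $f$ in $S\times\R$, and each $\theta_i(a)$ is a simple root of $f(a,x_r)$, so $m_i=1$ for all $i$.

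The main obstacle is justifying rigorously that the integer-valued function ``number of real roots of $f(a,x_r)$'' is locally constant on $S$. The argument must exclude two phenomena as $a$ moves within $S$: a real root escaping to infinity (prevented by degree-invariance, which keeps $|\ell(\vX)|$ bounded away from $0$ on any compact subset of $S$) and a pair of complex-conjugate roots becoming real through coalescence (prevented by the nonvanishing of $D$, since any such transition would momentarily create a real multiple root). Once these exclusions are made precise---most cleanly via the classical statement that the roots of a monic polynomial depend continuously on its coefficients, applied to the normalization $f/\ell$---the remainder of the proof is a routine global gluing of the implicit-function-theorem branches along paths in the connected manifold $S$.
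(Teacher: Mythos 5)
The paper does not prove this proposition --- it is stated as a cited result from \cite[Theorem 2]{mccallum1998improved}, with McCallum's hypothesis that $D$ be order-invariant on $S$ specialized to the stronger condition that $D$ be sign-definite there. There is thus no in-paper proof to compare against, and I assess your reconstruction on its own merits.

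The architecture of your proof is the right one, but the opening step contains an error. You assert that degree-invariance of $f$ on $S$ forces $\lc(f,x_r)$ to be nowhere zero on $S$. That does not follow from the definition in the paper, which only requires $\deg_{x_r}f(p,x_r)$ to be the same integer $d$ at every $p\in S$; $d$ may be strictly less than $n:=\deg(f,x_r)$, namely when, writing $f=\sum_{i\le n}a_i(\vX)x_r^i$, the top coefficient $a_n$ vanishes identically on $S$ while $a_{n-1}$ never does. A concrete instance: $f=x_1x_r^2+x_r-1$ and $S=\{0\}\subset\R$. The specialized degree is $1$, $D=1+4x_1$ equals $1\ne 0$ on $S$, $f$ does not vanish identically on $S$, yet $\lc(f,x_r)=x_1\equiv 0$ on $S$. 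In this situation your identification $d=\deg(f,x_r)$, and the subsequent argument that $|\lc(f,x_r)|$ is bounded away from zero on compact subsets of $S$ so that roots cannot escape to infinity, both break down. The repair is to use the effective leading coefficient $a_d$: degree-invariance makes $a_d$ nonzero throughout $S$, and the hypothesis $D\ne 0$ on $S$ rules out the degree dropping by more than one (two vanishing top coefficients already force $D\equiv 0$ on $S$, since $\mathrm{discrim}_n\big|_{a_n=0}=a_{n-1}^{2}\,\mathrm{discrim}_{n-1}$), so in fact $d\in\{n,n-1\}$; the same identity shows the degree-$d$ truncation of $f(p,x_r)$ stays square-free at every $p\in S$. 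With $\lc(f,x_r)$ replaced by $a_d$ throughout, the remainder of your argument --- implicit function theorem for local branches, local constancy of the number of real roots via non-coalescence and boundedness, global constancy by connectedness, and the global ordering $\theta_1<\cdots<\theta_k$ that (as you rightly observe) removes any monodromy concern --- is correct.
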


We restate a well-known result (which follows from Proposition \ref{prop:McCallum}) in Corollary \ref{corollary:bound-M} and redefine the concept of ``root function'' in Definition \ref{def:root-function-start-point}.

\begin{corollary}\label{corollary:bound-M}
Let $g$ be a square-free polynomial in $\Q[x,y]$ with $\deg(g,y)\ge 1$, 
and $M$ be the maximal real root of the polynomial 
$\lc(g,y)\cdot\discrim(g,y)$.
Then,
\begin{enumerate}
\item $g$ is analytic delineable on $(M, +\infty)$, and
\item there exist some analytic functions $\theta_1(x)<\cdots<\theta_r(x)\allowbreak\;(r\ge0)$ defined on $(M, +\infty)$ such that for any $x_0\in(M, +\infty)$, $g(x_0,\theta_i(x_0))=0$ and $\theta_1(x_0),\ldots,\theta_r(x_0)$ are all real roots of $g(x_0,y)$. \qed
\end{enumerate}

\end{corollary}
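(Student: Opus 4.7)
The plan is to reduce the statement directly to Proposition \ref{prop:McCallum} with $r=2$, taking the polynomial $f=g$, the distinguished variable $x_r=y$, and the connected $1$-dimensional submanifold $S=(M,+\infty)\subset\R$. Under that identification, $D(\vX)=\discrim(g,y)\in\Q[x]$, and the two conclusions we want (analytic delineability plus the existence of the root functions $\theta_1<\cdots<\theta_r$) are essentially the output of McCallum's theorem restated for $r=2$. So the whole task reduces to checking that the hypotheses of Proposition \ref{prop:McCallum} hold on $S$.

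First I would verify that $\discrim(g,y)$ is a nonzero polynomial in $\Q[x]$. This is the only nonobvious ingredient and is where square-freeness of $g$ is used. I would factor $g=h(x)\cdot q(x,y)$, where $h\in\Q[x]$ collects those irreducible factors of $g$ that lie in $\Q[x]$ and $q\in\Q[x,y]$ is the product of the remaining irreducible factors (each of positive $y$-degree). Then $q$ is primitive in $y$, and by Gauss's lemma its distinct irreducible factors in $\Q[x,y]$ remain distinct and irreducible in $\Q(x)[y]$, so $q$ is square-free as an element of $\Q(x)[y]$ and hence $\discrim(q,y)\neq0$. Since $\partial g/\partial y=h\cdot\partial q/\partial y$, a standard resultant computation shows that $\discrim(g,y)$ differs from $\discrim(q,y)$ only by a nonzero rational multiple and a power of $h$, so $\discrim(g,y)\neq0$ in $\Q[x]$.

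With that in hand, the remaining hypotheses are immediate from the choice of $M$. By construction, $M$ is the largest real root of $\lc(g,y)\cdot\discrim(g,y)$, so neither $\lc(g,y)$ nor $\discrim(g,y)$ vanishes anywhere on $S=(M,+\infty)$; hence $g$ is degree-invariant on $S$, and $\discrim(g,y)$, being a nonvanishing continuous function on a connected set, has constant sign on $S$. Also $g$ does not vanish identically on $S$ because $g(x_0,y)\in\Q[y]$ has $y$-degree $\deg(g,y)\ge1$ for every $x_0\in S$. Proposition \ref{prop:McCallum} then yields analytic delineability of $g$ on $S$ (claim (1)); and unpacking the definition of analytic delineability provides the ordered analytic functions $\theta_1<\cdots<\theta_r$ on $S$ whose values at any $x_0\in S$ are precisely the real roots of $g(x_0,y)$ (claim (2)). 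The main obstacle is really only the first step above; once the discriminant is known to be nonzero, the corollary is a direct specialization of the McCallum proposition. (The degenerate case in which $\lc(g,y)\cdot\discrim(g,y)$ has no real roots can be handled by setting $M=-\infty$, in which case the same argument runs on $S=\R$.)
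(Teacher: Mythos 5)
Your proposal is correct and matches the paper's intended route exactly: the paper introduces this corollary as ``a well-known result (which follows from Proposition~\ref{prop:McCallum})'' and offers no further proof, and your argument supplies precisely the routine verification of McCallum's hypotheses (most notably that $\discrim(g,y)\neq 0$, which you correctly derive from square-freeness via Gauss's lemma and the multiplicativity of resultants). The handling of the degenerate case $M=-\infty$ is a sensible extra observation that the paper leaves implicit.
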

\begin{definition}\label{def:root-function-start-point}
Keep the notation in Corollary \ref{corollary:bound-M}.
We call those analytic functions $\theta_1(x),\ldots,\theta_r(x)$ the
\emph{terminal-root-functions} of $g$ and $M$ the \emph{start-point} of the terminal-root-functions of $g$.
\end{definition}

\begin{example}\label{example:terminal-root-function}
Consider the polynomial $g(x,y)$ in Example \ref{example:potential-periodic-interval-set}.
We have $\lc(g,y)=-10x$ and
$\discrim(g,y)=-703687441776640000\cdot(163840000x^8 + 261529600x^6 + 83912256x^4 - 13333725x^2 - 10857025)$.
After isolating all real roots of them, we see that the maximal real root $M$ of $\discrim(g,y)$, which is in $(0,1)$, is the start-point of the terminal-root-functions of $g$.
And there are two terminal-root-functions of $g$, which are defined on $(M,+\infty)$ $($see Fig. \ref{fig:graph-Terminal-Root-Function}$)$.
\begin{figure}[ht]
    \centering
    \includegraphics[width=0.6\textwidth]{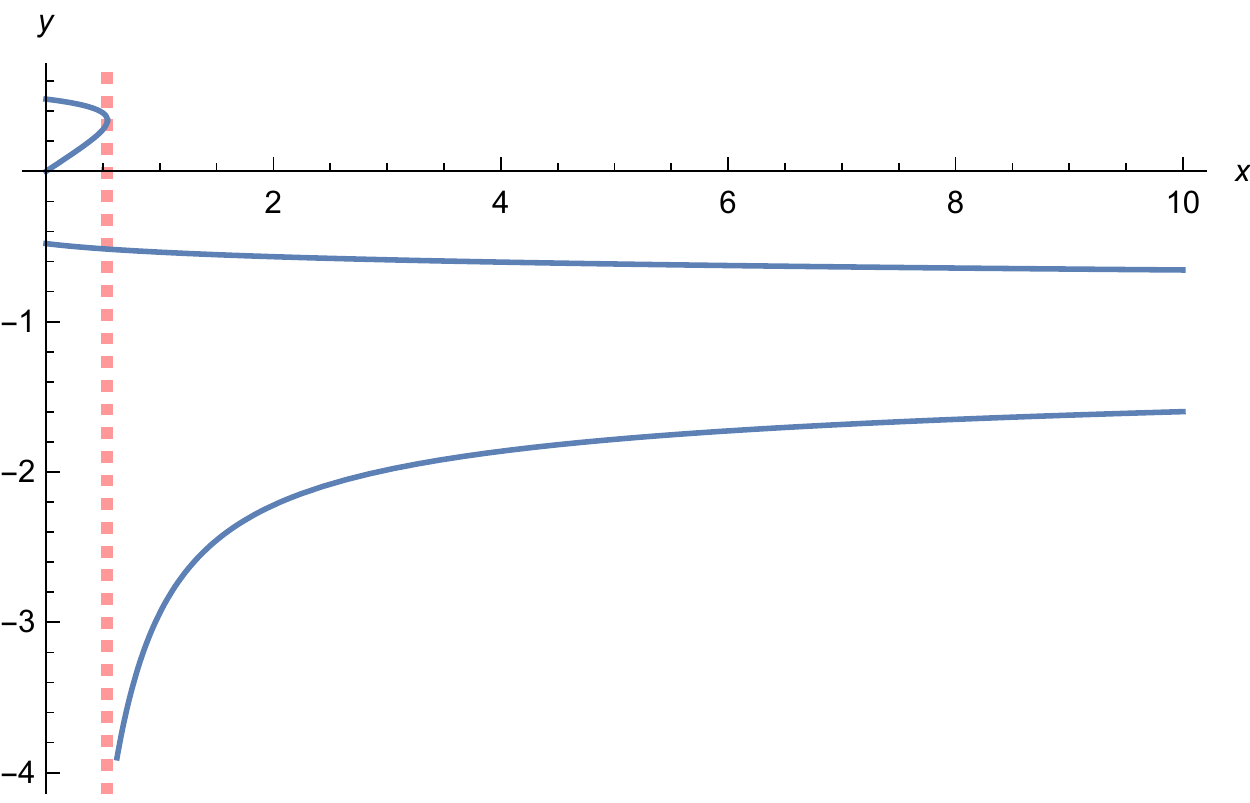}
    \caption{The figure of the terminal-root-functions of $g$. 
    The red dotted line shows the graph of the start-point $x=M$ and the graphs of the terminal-root-functions are plotted as blue curves.
    }
    \label{fig:graph-Terminal-Root-Function}
\end{figure}

\end{example}

In Theorem \ref{thm:location-of-the-roots} below, we compute a bound that guarantees well-defined terminal-root-functions $($as in Corollary \ref{corollary:bound-M}$)$, which are essential to locating the unbounded roots.

\begin{theorem}\label{thm:location-of-the-roots}
Let $g$ be a square-free polynomial in $\Q[x,y]$ with $\deg(g,y)\ge 1$, 
$[(a_0,b_0),\ldots,(a_{s+1},b_{s+1})]$ be a general potential periodic interval set of $g$, 
$k'$ be a natural number such that $M'=(k'+1/2)\pi$ is an upper bound of the start-point $M$ of the terminal-root-functions of $g$ and all real roots of the polynomial
$$\prod_{j=0}^{s}g(x,b_j)g(x,a_{j+1}),$$
and $\theta_1(x),\ldots,\theta_r(x)$ be the terminal-root-functions of $g$.
Then:
\begin{enumerate}[label=$($\arabic*$)$]
\item\label{item:thm:location-of-the-roots1} For  each $i~(1\le i \le r)$, there exists
$j_i~(0\le j_i \le s+1)$ such that $\theta_i(x)\in (a_{j_i},b_{j_i})$ whenever $x\in (M',+\infty)$.
\item\label{item:thm:location-of-the-roots2} 
For each real root $\xi$ of $g(x,\tan x)$ greater than $M'$, there exist $k>k'~(k\in \N)$ and $i$ $(1\le i \le r)$ such that
$$\xi\in\left(\arctan(a_{j_i})+k\pi, \arctan(b_{j_i})+k\pi\right).$$
\item\label{item:thm:location-of-the-roots3} 
For each $i~(1\le i \le r)$ and any $k>k'~(k\in \N)$, there exists  $$\xi\in\left(\arctan(a_{j_i})+\allowbreak k\pi, \arctan(b_{j_i})+k\pi\right)$$ such that   $\tan {\xi}=\theta_i(\xi)$.
\end{enumerate}
\end{theorem}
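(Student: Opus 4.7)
For Part (1), each terminal-root-function $\theta_i$ is analytic---hence continuous---on $(M',+\infty) \subseteq (M,+\infty)$, and by the choice of $M'$ the product $\prod_{j=0}^{s} g(x,b_j)g(x,a_{j+1})$ has no real root there. Therefore no $\theta_i(x)$ can equal any of the boundary values $b_0,a_1,b_1,\dots,a_{s+1}$ on this ray, and continuity confines each $\theta_i$ to a single connected component of $\R \setminus \{b_0,a_1,b_1,\dots,a_{s+1}\}$. These components are precisely the open intervals $(a_{j},b_{j})$ for $j=0,\dots,s+1$ together with the ``gaps'' $(b_{j},a_{j+1})$ for $j=0,\dots,s$. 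If the component housing $\theta_i$ is one of the former, I take $j_i$ accordingly; if it is a gap, then $\theta_i(x) \in U$ for every $x > M'$ (with $U$ as in Corollary \ref{cor:main_case}), and evaluating at any $x_0$ larger than both $M'$ and the bound $N$ produced by that corollary yields $g(x_0,\theta_i(x_0)) \neq 0$, contradicting the defining property of $\theta_i$.

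With Part (1) in hand, Part (2) follows by a transfer argument. For $\xi > M'$ with $g(\xi,\tan\xi)=0$, the value $\tan\xi$ is a real root of $g(\xi,y)\in\R[y]$, and Corollary \ref{corollary:bound-M} identifies all such real roots with $\theta_1(\xi),\dots,\theta_r(\xi)$; hence $\tan\xi=\theta_i(\xi)$ for some $i$, and Part (1) places $\tan\xi$ inside $(a_{j_i},b_{j_i})$. Writing $\xi=k\pi+\alpha$ with $\alpha\in(-\pi/2,\pi/2)$ and using monotonicity of $\arctan$ on that interval yields $\alpha\in(\arctan(a_{j_i}),\arctan(b_{j_i}))$, while $\xi>(k'+1/2)\pi$ forces $k>k'$. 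For Part (3), I would fix $i$ and $k>k'$ and apply the intermediate value theorem to the continuous function $h(x)=\tan x-\theta_i(x)$ on the closed interval $[\arctan(a_{j_i})+k\pi,\arctan(b_{j_i})+k\pi]$; this interval sits strictly inside a single period of $\tan$ (so $h$ is continuous there) and entirely above $M'$ (since $k>k'$). At its left endpoint $\tan x=a_{j_i}<\theta_i(x)$ by Part (1), so $h<0$; at its right endpoint $\tan x=b_{j_i}>\theta_i(x)$, so $h>0$. Hence $h$ has an interior zero, which is the asserted $\xi$.

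The main obstacle is Part (1): ruling out that some $\theta_i$ persistently lives in a gap component of $\R \setminus \bigcup_j (a_j,b_j)$ rather than in a potential periodic interval. The resolution combines two arguments of different flavors---continuity on the connected ray $(M',+\infty)$ together with the endpoint-avoidance coming from the choice of $M'$ pins each $\theta_i$ to a single component, while Corollary \ref{cor:main_case}, an asymptotic statement, rules out the unwanted gap components in the large-$x$ limit. Once these two ingredients are combined, the rest of the proof reduces to IVT with carefully chosen endpoints.
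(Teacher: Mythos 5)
Your proof is correct and follows essentially the same route as the paper: pin each $\theta_i$ to a single connected component of $\R\setminus\{b_0,a_1,\ldots,b_s,a_{s+1}\}$ via continuity and the endpoint-avoidance guaranteed by the choice of $M'$, rule out the gap components via Corollary \ref{cor:main_case}, and then deduce Parts (2) and (3) by the same transfer and intermediate-value arguments the paper uses. The only difference is cosmetic—you spell out explicitly that a gap component lies inside $U$ and invoke the bound $N$ from Corollary \ref{cor:main_case} directly, whereas the paper phrases this as ``for sufficiently large $x$.''
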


\begin{example}\label{example:thm:location-of-the-roots}
Consider the polynomial $g(x,y)$ in Example \ref{example:potential-periodic-interval-set}.
In Example \ref{example:potential-periodic-interval-set}, we compute a potential periodic interval set of $g(x,y)$:
{\footnotesize
$$\left[\left(-\infty, -\frac{63}{16}\right), \left(-\frac{215}{128}, -\frac{19}{32}\right),\allowbreak \left(\frac{19}{32}, \frac{215}{128}\right), \left(\frac{63}{16}, +\infty\right)\right].$$}And the isolating intervals of the real roots of the polynomial
{\footnotesize
$$g(x,-\frac{63}{16})\cdot g(x,-\frac{215}{128})\cdot g(x,-\frac{19}{32})\cdot g(x,\frac{19}{32})\cdot g(x,\frac{215}{128})\cdot g(x,\frac{63}{16})$$}are $(-8, -7)$, $(-4, -3)$, $(-1, 0)$, $(0, 1)$, $(3, 4)$ and $(7, 8)$.
In Example \ref{example:terminal-root-function}, we see that the start-point $M$ of the terminal-root-functions of $g$ is less than $1$.
Thus, we take $k'=3$ such that $(k'+1/2)\pi$ is an upper bound of $M$ and the real roots computed above. 
Note that the two terminal-root-functions of $g$ are bounded by $(-\frac{215}{128},-\frac{19}{32})$ when $x\in((k'+1/2)\pi,+\infty)$ $($see Fig. \ref{fig:thm:location-of-the-roots}$)$.
\begin{figure}[ht]
    \centering
    \includegraphics[width=0.6\textwidth]{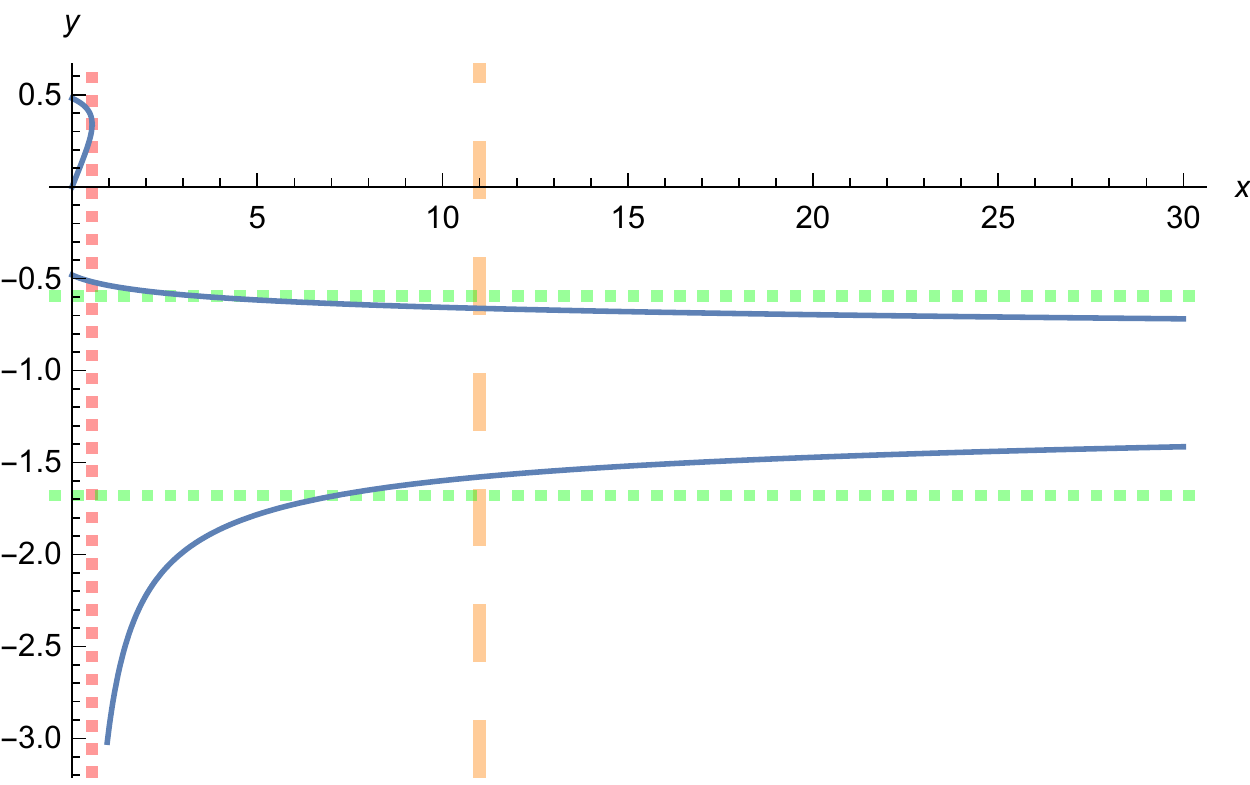}
    \caption{The figure of the terminal-root-functions of $g$ bounded by $(-\frac{215}{128},-\frac{19}{32}
    )$. 
    The red dotted line and the orange dashed line (which are both vertical) show the graph of the start-point $x=M$ and that of $x=(k'+1/2)\pi$, respectively.
    The two green dotted lines (which are horizontal) show the graphs of $y=-\frac{215}{128}$ and $y=-\frac{19}{32}$.
    When $x>(k'+1/2)\pi$, the terminal-root-functions (plotted as blue curves) are both bounded by the interval $(-\frac{215}{128},-\frac{19}{32}
    )$.
    }
    \label{fig:thm:location-of-the-roots}
\end{figure}

So, for every $k > 3~(k \in \Z)$, there are at least two real roots of $g(x,\tan{x})$ in $(k\pi-\arctan{\frac{215}{128}},k\pi-\arctan{\frac{19}{32}})$.
\end{example}
\begin{remark}\label{remark:decideexistence}
For each interval $(a_j,b_j)~(0\le j\le s+1)$, it may happen that none of the terminal-root-functions $\theta_i(x)$ is bounded by $(a_j,b_j)$ when $x>M'$. It is also possible that there can be one or more than one terminal-root-functions $\theta_{i_1}(x),\ldots,\theta_{i_t}(x)$, with $t\in\Z_{\ge1}$, that are bounded by $(a_j,b_j)$ when $x>M'$. By \ref{item:thm:location-of-the-roots2} and \ref{item:thm:location-of-the-roots3} in Theorem \ref{thm:location-of-the-roots}, there is no real root in the set $(M',\infty)\cap\tan^{-1}(a_j,b_j)$ in the former case, while in the latter case, there are at least $t$ real roots in each interval $(\arctan a_j+k\pi,\arctan b_j+k\pi)$ contained in $(M',\infty)$ 
$($in Line \ref{Line:alg1-detect-implicit-function} of Algorithm \ref{alg:AlgFarZero}, we distinguish these two cases$)$.
Denote these roots by $x^{(1)}_k<\cdots<x^{(t'_k)}_k$ for some $t'_k\geq t$. Since Theorem \ref{thm:location-of-the-roots} is valid for isolating intervals $(a_j,b_j)$ of arbitrarily small ``arc-tangent-length" $($\emph{i.e.}, $\arctan b_j-\arctan a_j)$ and any $M'$ large enough, we have
\[\lim_{k\rightarrow\infty}(x^{(t'_k)}_k-x^{(1)}_k)=0.\] This indicates that it is impossible to find real numbers $c_1,c_2\in(-\pi/2,\pi/2)$ such that they isolate, for instance, $x^{(1)}_k, x^{(2)}_k$ and $x^{(3)}_k$ $($if $t>2)$ uniformly. That is, there are no such $c_1,c_2$ satisfying  \[x^{(1)}_k<c_1+k\pi<x^{(2)}_k<c_2+k\pi<x^{(3)}_k\] for all sufficiently large integer $k$. This is why we only ``isolate" them with the interval $(\arctan a_j,\arctan b_j)+k\pi$ in the sense that every $x^{(\iota)}_k$ is contained in it. Nevertheless, in the next subsection we will prove that $t'_k=t$ for sufficiently large $k$, which means we can compute the exact number $t$ of real roots in the interval $(\arctan a_j,\arctan b_j)+k\pi$ that is sufficiently far from zero.
\end{remark}

\noindent{}\emph{\bf Proof of Theorem \ref{thm:location-of-the-roots}.}
\ref{item:thm:location-of-the-roots1} 
Note that $g(x,a_j)~(1\le j \le s+1)$ and $g(x,b_j)~(0\le j \le s)$ have no roots in $(M',+\infty)$.
Then, for every $i~(1\le i\le r)$ and for any $x\in(M',+\infty)$, $\theta_i(x)\not\in\{b_0,a_1,b_1,\ldots,b_s,a_{s+1}\}$.
By Proposition \ref{prop:McCallum}, $\theta_i(x)$ is continuous.
So, there exists $j_i~(0\le j_i\le s+1)$ such that $\theta_i((M',+\infty))\subseteq (a_{j_i},b_{j_i})$ or 
there exists $j_i~(0\le j_i\le s)$ such that $\theta_i((M',+\infty))\subseteq(b_{j_i},a_{j_i+1})$.
By Corollary \ref{cor:main_case}, for sufficiently large $x$, $\theta_i(x)\in \cup_{j=0}^{s+1}(a_j,b_j)$. Therefore, $\theta_i((M',+\infty))\subseteq (a_{j_i},b_{j_i})$ for some $0\leq j_i\leq s+1$.

\ref{item:thm:location-of-the-roots2}
If $\xi$ is a real root of $g(x,\tan{x})$ such that $\xi>M'$, then there exists $i~(1\le i\le r)$ such that 
$\theta_{i}(\xi)=\tan{\xi}$. By \ref{item:thm:location-of-the-roots1}, $\theta_i((M',+\infty))\subseteq (a_{j_i},b_{j_i})$. So, $\tan{\xi}\in (a_{j_i},b_{j_i})$. Thus, there exists $k~(k>k')$ such that $\xi\in (\arctan(a_{j_i})+k\pi,\arctan(b_{j_i})+k\pi)$. 

\ref{item:thm:location-of-the-roots3}
By \ref{item:thm:location-of-the-roots1}, for each $i~(1\le i\le r)$, $\theta_i((M',+\infty))\subseteq (a_{j_i},b_{j_i})$. So, for any $k~(k>k')$, we have
\begin{align*}
    \theta_i(\arctan(a_{j_i})+k\pi)>a_{j_i}=&\tan(\arctan(a_{j_i})+k\pi),\\
    \theta_i(\arctan(b_{j_i})+k\pi)<b_{j_i}=&\tan(\arctan(b_{j_i})+k\pi).
\end{align*}
Hence there exists $\xi\in(\arctan(a_{j_i})+k\pi, \arctan(b_{j_i})+k\pi) $ such that $\tan \xi-\theta_i(\xi)=0$. 
$\hfill{\square}$\par

\vspace{3mm}

\begin{algorithm}[!ht]
\scriptsize
\DontPrintSemicolon
\LinesNumbered
\SetKwInOut{Input}{Input}
\SetKwInOut{Output}{Output}
\Input{ 
$g$, a square-free polynomial in $\Q[x,y]$ such that $\deg(g,y)\geq 1$;\\
$I=[(a_0,b_0),\ldots,(a_{s+1},b_{s+1})]$, a  general potential periodic interval set of $g$.
}
\Output{
$k'$, a natural number;\\
$ret$, a finite set of pairs $((x_j^{-},x_j^{+}),c_j)$ such that for any $k>k'~(k\in\N)$, there are \emph{AT LEAST} $c_i$ roots for $g(x,\tan x)$ in the interval $(x_j^{-}+k\pi,x_j^{+}+k\pi)$.
}
\caption{\bf IsolatingFarZero}\label{alg:AlgFarZero}
\BlankLine
Find $k'\in\N$ such that
$(k'+\frac{1}{2})\pi$ is an upper bound for all real roots of $\lc(g,y)\cdot\discrim(g,y)\cdot\prod_{j=0}^{s}g(x,b_{j})g(x,a_{j+1})$. \label{Line:alg1-computation-of-M}\;
Choose an integer $x_0$ such that $x_0>(k'+\frac{1}{2})\pi$\label{Line:alg1-sample-point}.\; 
$c_j\leftarrow0$ for $j=0,\ldots,s+1$\label{Line:alg1-ini-cj}\;
\For{each real root $r$ of $g(x_0,y)$}{\label{Line:alg1-for-1}
    \For{$j$ from $0$ to $s+1$}   
    {
        \If{$r\in (a_j,b_j)$}{
            $c_{j}\leftarrow c_{j}+1$\label{Line:alg1-count-c}\;
            \textbf{break}\label{Line:alg1-for-1-end}\;
        }
    }
}
$ret\leftarrow \emptyset$\label{Line:alg1-organizing-output}\;
\For{$j$ from $0$ to $s+1$\label{Line:alg1-detect-implicit-function-1}}{
    \If{$c_j>0$\label{Line:alg1-detect-implicit-function}}{
    $x_{j}^{-}\leftarrow\arctan a_j,~x_{j}^{+}\leftarrow\arctan b_j$\;
		$ret \leftarrow ret \cup \{((x_{j}^{-},x_{j}^{+}),c_j)\}$\label{Line:alg1-output-end}\;
    }
}
\Return $k'$, $ret$\;
\end{algorithm}

\begin{definition}
Let $g$, $k'$ be the same as in Theorem \ref{thm:location-of-the-roots}, and $\theta(x)$ be a terminal-root-function of $g(x,y)$. 
We call the set
$\{x_0\in ((k'+1/2)\pi,+\infty) \mid \theta(x_0)=\tan{x_0} \}$
\emph{a bunch of periodic roots} of $g(x,\tan{x})$. 
\end{definition}

Summarizing all above, we propose Algorithm \ref{alg:AlgFarZero} to isolate the ``periodic" roots of $g(x,\tan{x})$.
Given a square-free polynomial $g\in\Q[x,y]$ with $\deg(g,y)\geq1$, and a general potential periodic interval set $I$ of $g$, Algorithm \ref{alg:AlgFarZero} computes a natural number $k'$ and a set of pairs $((x_{i}^{-},x_{i}^{+}),c_i)$, such that for any $k>k'~(k\in\N)$, there are at least $c_i$ real roots of $g(x,\tan x)$ in the interval $(x_{i}^{-}+k\pi,x_{i}^{+}+k\pi)$.

\begin{theorem}\label{thm:AlgFarZeroisCorrect}
Algorithm \ref{alg:AlgFarZero} is correct.
\end{theorem}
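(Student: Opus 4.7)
The plan is to reduce the correctness claim directly to Theorem \ref{thm:location-of-the-roots}. First I would verify that the $k'$ computed on Line~\ref{Line:alg1-computation-of-M} fulfills the hypotheses of that theorem: by construction, $M' := (k' + \tfrac{1}{2})\pi$ exceeds every real root of $\lc(g,y) \cdot \discrim(g,y)$ --- whose maximal real root is exactly the start-point of the terminal-root-functions of $g$ by Corollary \ref{corollary:bound-M} --- and every real root of $\prod_{j=0}^{s} g(x, b_j) g(x, a_{j+1})$. Hence Theorem \ref{thm:location-of-the-roots} applies, yielding terminal-root-functions $\theta_1 < \cdots < \theta_r$ of $g$ together with indices $j_1, \ldots, j_r \in \{0, \ldots, s+1\}$ such that $\theta_i((M',+\infty)) \subseteq (a_{j_i}, b_{j_i})$.

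Next I would interpret what the inner loops compute. Because $x_0 > M'$ (Line~\ref{Line:alg1-sample-point}), Corollary \ref{corollary:bound-M}(2) gives that the real roots of $g(x_0, y)$ are exactly $\theta_1(x_0), \ldots, \theta_r(x_0)$. The intervals $(a_j, b_j)$ in a general potential periodic interval set are pairwise disjoint (since $b_j \le a_{j+1}$ and they are open). Therefore each $\theta_i(x_0)$ falls into the unique interval $(a_{j_i}, b_{j_i})$, so after Lines~\ref{Line:alg1-ini-cj}--\ref{Line:alg1-for-1-end} we have $c_j = \#\{i : j_i = j\}$.

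Finally I would invoke Theorem \ref{thm:location-of-the-roots}(3): for every index $i$ with $j_i = j$ and every integer $k > k'$, there exists a root $\xi_{i,k} \in (\arctan a_{j_i} + k\pi, \arctan b_{j_i} + k\pi) = (x_j^{-} + k\pi, x_j^{+} + k\pi)$ of $g(x, \tan x)$ satisfying $\tan \xi_{i,k} = \theta_i(\xi_{i,k})$. For fixed $k$ and distinct $i \ne i'$ the corresponding roots are distinct: otherwise $\xi_{i,k} = \xi_{i',k}$ would force $\theta_i(\xi_{i,k}) = \theta_{i'}(\xi_{i,k})$, contradicting the strict ordering $\theta_1 < \cdots < \theta_r$ on $(M', +\infty)$. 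Hence the interval $(x_j^{-} + k\pi, x_j^{+} + k\pi)$ contains at least $c_j$ distinct real roots of $g(x, \tan x)$, which is exactly what the algorithm outputs.

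The main obstacle, in my view, is keeping the bookkeeping of indices clean --- the geometric content is entirely packaged inside Theorem \ref{thm:location-of-the-roots}, so the proof of Theorem \ref{thm:AlgFarZeroisCorrect} reduces to (a) checking that Algorithm \ref{alg:AlgFarZero}'s construction of $k'$ satisfies the theorem's hypotheses, (b) identifying $c_j$ correctly as the cardinality $\#\{i : j_i = j\}$ using disjointness of the $(a_j, b_j)$, and (c) verifying that distinct terminal-root-functions contribute distinct roots of $g(x, \tan x)$ inside each periodic interval. None of these is deep, but they must be combined in the right order.
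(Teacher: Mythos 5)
Your proposal is correct and follows essentially the same route as the paper, which simply observes that correctness of Algorithm \ref{alg:AlgFarZero} is immediate from (the proof of) Theorem \ref{thm:location-of-the-roots}; you have merely spelled out the details the paper leaves implicit (verification of the hypothesis on $k'$, the identification $c_j=\#\{i: j_i=j\}$ via Corollary \ref{corollary:bound-M} and disjointness of the intervals, and distinctness of the roots coming from distinct terminal-root-functions).
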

\begin{proof}
The process of Algorithm \ref{alg:AlgFarZero} is based on Theorem \ref{thm:location-of-the-roots}.
And it is clear that Algorithm \ref{alg:AlgFarZero} is correct by the proof of Theorem \ref{thm:location-of-the-roots}.
\end{proof}

\subsection{Ensuring exact quantity of periodic roots in an ``isolating" interval}\label{subsec:exactnumber}

In the subsection, we would strengthen Theorem \ref{thm:location-of-the-roots}: for all sufficiently large $k$, $\theta_i(x)$ and $\tan x$ intersects exactly once in the interval $(\arctan a_{j_i}+k\pi,\arctan b_{j_i}+k\pi)$. And a lower bound for such $k$ can be efficiently determined.


Depending on whether the derivative $\theta_i'(x)\ (x>M')$ is strictly bounded by $1$ or not,
the discussion is divided into two cases.
For the bounded case, the key observation is: the derivative of $\tan x$ is $1/\cos^2 x$, which is always $\geq 1$. Therefore, since $\theta_i'(x)$ is smaller than $1$, $y=\tan x$ can not intersect with $\theta_i(x)$ twice in a period $(k\pi -\pi/2,k\pi+\pi/2)$, where $k\in\N$ and $(k+1/2)\pi>M'$. 
For the unbounded case, we consider $1/\theta_i(x)$ instead of $\theta_i(x)$.

We first establish a lemma.
\begin{lemma}\label{lemma:existence-of-delta}
Let $g\in\Q[x,y]$ be square-free with $\deg(g,y)\geq1$.
There exists a finite set $S\subset\Q$ such that for any $\delta_0\in\Q\setminus S$, 
$\res(\frac{\partial g}{\partial x}+\delta_0 \frac{\partial g}{\partial y},g,y)$
is not the zero polynomial in $\Q[x]$.
\end{lemma}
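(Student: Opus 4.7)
The plan is to reduce the lemma to a divisibility question on the irreducible factors of $g$ in $\Q[x,y]$, exploiting the standard equivalence between the vanishing of $\res(h_{\delta_0},g,y)$ in $\Q[x]$ and the existence of a common factor of positive $y$-degree in $\Q[x,y]$ between $g$ and $h_{\delta_0}:=\frac{\partial g}{\partial x}+\delta_0\,\frac{\partial g}{\partial y}$. Given such an equivalence, the problem becomes purely combinatorial: for how many $\delta_0\in\Q$ can some irreducible factor of $g$ of positive $y$-degree divide $h_{\delta_0}$?

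First I would fix a square-free factorisation $g=c_0\,g_0\,g_1\cdots g_t$ in $\Q[x,y]$, with $c_0\in\Q^{\times}$, $g_0\in\Q[x]$ absorbing any $y$-free irreducible factors, and $g_1,\ldots,g_t$ the pairwise non-associate irreducible factors with $\deg(g_i,y)\ge 1$. I would then define $S$ to be the set of those $\delta_0\in\Q$ for which at least one $g_i$ ($1\le i\le t$) divides $h_{\delta_0}$ in $\Q[x,y]$, so that by the resultant equivalence this $S$ is precisely the set of $\delta_0$ making $\res(h_{\delta_0},g,y)$ vanish identically.

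The main step is to show that each $g_i$ is responsible for at most one element of $S$. Suppose $g_i\mid h_{\delta_0}$ and $g_i\mid h_{\delta_0'}$ for distinct rationals $\delta_0,\delta_0'$; subtracting and dividing by $\delta_0-\delta_0'\in\Q^{\times}$ yields $g_i\mid \frac{\partial g}{\partial y}$. Applying the Leibniz rule, and using that $g_i$ is coprime to $g_0$ (since $\deg(g_i,y)\ge 1>0=\deg(g_0,y)$) and to $g_k$ for $k\ne i$ (which is exactly what square-freeness provides), every summand in $\frac{\partial g}{\partial y}$ except $c_0\,g_0\bigl(\prod_{k\ne i}g_k\bigr)\frac{\partial g_i}{\partial y}$ is already a multiple of $g_i$. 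Hence $g_i\mid\frac{\partial g_i}{\partial y}$. But in characteristic zero $\deg(\partial g_i/\partial y,y)=\deg(g_i,y)-1<\deg(g_i,y)$, so the only way $g_i$ can divide $\partial g_i/\partial y$ is $\partial g_i/\partial y\equiv 0$, i.e., $g_i\in\Q[x]$, contradicting $\deg(g_i,y)\ge 1$. Collecting the (at most) one offending $\delta_0$ per $g_i$ gives $|S|\le t$, as required.

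The subtlest point I anticipate is the precise statement of the resultant criterion, because $\deg(h_{\delta_0},y)$ can drop as $\delta_0$ varies: the coefficient of $y^{\deg(g,y)}$ in $h_{\delta_0}$ is $\frac{d}{dx}\lc(g,y)$ (independent of $\delta_0$), while the coefficient of $y^{\deg(g,y)-1}$ depends affinely on $\delta_0$ and may vanish for one exceptional value. I would formulate the criterion inside the UFD $\Q[x,y]$ in the form ``$\res(h,g,y)=0$ iff $\gcd_{\Q[x,y]}(h,g)$ has positive $y$-degree, or the leading $y$-coefficients of both $h$ and $g$ vanish,'' and use that $\lc(g,y)\neq 0$ to rule out the degenerate branch uniformly in $\delta_0$. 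Once this is in place, the divisibility argument above runs without further modification.
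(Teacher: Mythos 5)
Your proposal is correct, but it reaches the finite exceptional set by a genuinely different mechanism than the paper. The paper keeps $\delta$ as an indeterminate and proves that $\res(\frac{\partial g}{\partial x}+\delta \frac{\partial g}{\partial y},g,y)$ is not the zero polynomial of $\Q[\delta,x]$: any common divisor $q$ of $g$ and $\frac{\partial g}{\partial x}+\delta \frac{\partial g}{\partial y}$ is $\delta$-free, hence divides $\frac{\partial g}{\partial y}$ as well, and square-freeness forces $\deg(q,y)=0$; then $S$ is taken to be the common rational roots of the coefficients of this bivariate resultant with respect to $x$, finite because the resultant is nonzero. You instead work with each fixed $\delta_0$, factor $g$, and show that an irreducible factor $g_i$ with $\deg(g_i,y)\ge1$ can divide $h_{\delta_0}$ for at most one $\delta_0$ (the difference quotient gives $g_i\mid\frac{\partial g}{\partial y}$, then Leibniz plus pairwise coprimality gives $g_i\mid\frac{\partial g_i}{\partial y}$, impossible in characteristic zero). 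Both arguments exploit square-freeness in the same place, namely to forbid a common factor of positive $y$-degree of $g$ and $\frac{\partial g}{\partial y}$, but your route buys an explicit bound $|S|\le t$ (the number of irreducible factors of positive $y$-degree) and only ever uses the resultant of the actual specialized pair, so you avoid the specialization issue that the paper passes over silently (the generic resultant evaluated at $\delta_0$ differs from $\res(h_{\delta_0},g,y)$ by a power of $\lc(g,y)$ when $\deg(h_\delta,y)$ drops — harmless since $\lc(g,y)\neq0$, but unremarked); the paper's route is shorter and needs no factorization of $g$. Two small refinements to your write-up: the clause ``both leading $y$-coefficients vanish'' is the nominal-degree version of the criterion and is vacuous for actual polynomials over $\Q[x]$, so the clean statement to quote is that for $\deg(g,y)\ge1$ one has $\res(h,g,y)=0$ iff $h=0$ or $\gcd(h,g)$ has positive $y$-degree; and the degenerate case $h_{\delta_0}=0$ can genuinely occur (e.g.\ $g=y-\delta_0 x$), but your definition of $S$ already absorbs it, since every $g_i$ divides the zero polynomial and your one-$\delta_0$-per-factor count still applies.
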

\begin{proof}
Regard $\delta$ as a variable. We claim that 
$\res(\frac{\partial g}{\partial x}+\delta \frac{\partial g}{\partial y},g,y)$ is not the zero polynomial in $\Q[\delta,x]$.
Note that $\frac{\partial g}{\partial y}$ is not the zero polynomial, since $\deg(g,y)\ge1$.
Then, $q:=\gcd(\frac{\partial g}{\partial x}+\delta \frac{\partial g}{\partial y},~g)$ is a nonzero polynomial in $\Q[x,y]$.
We also note that $q\mid\left(\frac{\partial g}{\partial x}+\delta \frac{\partial g}{\partial y}\right)$ implies $q\mid\frac{\partial g}{\partial y}$. Now that $g$ is square-free, $q\mid g$ and $q\mid\frac{\partial g}{\partial y}$ imply $\deg(q,y)=0$.  Thus, $\res(\frac{\partial g}{\partial x}+\delta \frac{\partial g}{\partial y},g,y)$ is not the zero polynomial in $\Q[\delta,x]$.
Hence we can let $S$ be the set of the common roots in $\Q$ of the coefficients of  $\res(\frac{\partial g}{\partial x}+\delta \frac{\partial g}{\partial y},g,y)$ with respect to $x$. The proof is then completed.
\end{proof}

\begin{lemma}\label{thm:derivative-bound}
Let $g\in\Q[x,y]$ be a square-free polynomial with $\deg(g,y)\geq1$,
$\delta_0$ a rational number such that $\res(\frac{\partial g}{\partial x}+\delta_0 \frac{\partial g}{\partial y},g,y)$ is not the zero polynomial,
 $\theta_1(x),\ldots,\theta_r(x)$ the terminal-root-functions of $g$, and $\hat{k}\in\N$ a natural number such that  $\hat{M}=(\hat{k}+1/2)\pi$ is an upper bound of the start-point of the terminal-root-functions of $g$ and all real roots of   $\res(\frac{\partial g}{\partial x}+\delta_0 \frac{\partial g}{\partial y},g,y)$. 
Then, the derivative of each $\theta_i(x)$ satisfies either one of the following properties:
\begin{enumerate}[label=$($\arabic*$)$]
    \item\label{item:thm:derivative-bound1} $\forall x>\hat{M}:$ $\theta_i'(x)<\delta_0$;
    \item\label{item:thm:derivative-bound2} $\forall x>\hat{M}:$ $\theta_i'(x)>\delta_0$.
\end{enumerate}
We call $\hat{k}$ a \emph{$\delta_0$-bound-point} of $g$.
In particular, if $\delta_0<1$ and property \ref{item:thm:derivative-bound1} holds for some $\theta_i(x)$, then for any $k>\hat{k}~(k\in\N)$, 
there is a unique $\xi\in (k\pi -\pi/2,  k\pi+\pi/2)$ such that $\theta_i(\xi)=\tan{\xi}$.
\end{lemma}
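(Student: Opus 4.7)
The plan is to obtain $\theta_i'$ in closed form via implicit differentiation, translate the equation $\theta_i'=\delta_0$ into a resultant‐vanishing condition, and conclude the constant‐sign alternative by continuity on a connected set; the uniqueness assertion will then follow from a monotonicity argument and the intermediate value theorem on a single period of $\tan$.

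First I would note that since $\hat{M}$ majorizes the start-point $M$, Corollary~\ref{corollary:bound-M} guarantees that every $\theta_i$ is real-analytic on $(M,+\infty)\supset(\hat{M},+\infty)$ and that $\theta_i(x)$ is a simple root of $g(x,\,\cdot\,)$ there, so $g_y(x,\theta_i(x))\neq 0$. Differentiating the identity $g(x,\theta_i(x))\equiv 0$ yields
\[
\theta_i'(x)\;=\;-\,\frac{g_x(x,\theta_i(x))}{g_y(x,\theta_i(x))}.
\]
If $\theta_i'(x_0)=\delta_0$ for some $x_0>\hat{M}$, then $(g_x+\delta_0 g_y)(x_0,\theta_i(x_0))=0$ together with $g(x_0,\theta_i(x_0))=0$ makes $\theta_i(x_0)$ a common root in $y$, forcing $\res(g_x+\delta_0 g_y,g,y)(x_0)=0$. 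This contradicts the hypothesis that $\hat{M}$ dominates every real root of that nonzero resultant. Hence $\theta_i'-\delta_0$ is continuous and nowhere zero on the connected interval $(\hat{M},+\infty)$, and must have constant sign, giving \ref{item:thm:derivative-bound1} or \ref{item:thm:derivative-bound2}.

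For the uniqueness clause, assume $\delta_0<1$ and that $\theta_i$ satisfies \ref{item:thm:derivative-bound1}, and let $k>\hat{k}$. Then $k\pi-\pi/2\geq(\hat{k}+\tfrac{1}{2})\pi=\hat{M}$, so the closed interval $[k\pi-\pi/2,k\pi+\pi/2]$ lies in the domain of $\theta_i$, where $\theta_i$ is analytic and hence bounded. Set $h(x):=\theta_i(x)-\tan x$; using $\sec^2 x\geq 1$ we have
\[
h'(x)\;=\;\theta_i'(x)-\sec^2 x\;<\;\delta_0-1\;<\;0
\]
on the open interval, so $h$ is strictly decreasing. Since $\tan x\to-\infty$ as $x\to(k\pi-\pi/2)^+$ and $\tan x\to+\infty$ as $x\to(k\pi+\pi/2)^-$ while $\theta_i$ stays bounded, $h$ ranges from $+\infty$ to $-\infty$; the intermediate value theorem produces a zero and strict monotonicity makes it unique.

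I expect the main technical point to be the resultant step: one must ensure that $\theta_i'(x_0)=\delta_0$ really does force $\res(g_x+\delta_0 g_y,g,y)(x_0)=0$, which relies on $g_y(x_0,\theta_i(x_0))\neq 0$ (simple roots past the discriminant) so that no ``root at infinity'' degeneration in the resultant occurs, and on Lemma~\ref{lemma:existence-of-delta} through the choice of $\delta_0$ ensuring that the resultant is a nonzero polynomial whose real roots are then all dominated by $\hat{M}$. Everything else is routine implicit differentiation and a one-period monotonicity argument.
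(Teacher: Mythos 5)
Your proposal is correct and follows essentially the same route as the paper: implicit differentiation gives $\theta_i'=-g_x/g_y$, the nonvanishing of $\res(g_x+\delta_0 g_y,g,y)$ beyond $\hat{M}$ rules out $\theta_i'(x)=\delta_0$ (the paper states the contrapositive of your resultant-vanishing step), continuity then forces a constant sign, and existence/uniqueness in each period follow from the intermediate value theorem and the monotonicity bound $1/\cos^2 x-\theta_i'(x)>1-\delta_0>0$, exactly as in the paper. Your extra care about the resultant specializing (no degeneration since $g_y\neq0$, or simply because the resultant lies in the ideal generated by the two polynomials) is sound and only makes explicit what the paper leaves implicit.
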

\begin{proof}
For each $\theta_i'(x)$, we prove that one of the two cases holds.
Note that $\hat{M}$ is greater than the start-point of the terminal-root-functions of $g$ (recall Definition \ref{def:root-function-start-point}).
So, when $x\in(\hat{M},+\infty)$, $\lc(g,y)(x)\neq0$, $\discrim(g,y)(x)\ne0$ and $g(x,\theta_i(x))=0$. 
Therefore, $\frac{\partial g}{\partial y}(x,\theta_i(x))\ne0$ and $\theta_i'(x)=-\frac{\partial g/\partial x}{\partial g/\partial y}(x,\theta_i(x))$ by the implicit function theorem. 
By the definition of $\hat{M}$, we have 
$\lc(g,y)(x)\neq0$ and $\res(\frac{\partial g}{\partial x}+\delta_0 \frac{\partial g}{\partial y},g,y)(x)\ne0$ for $x\in(\hat{M},+\infty)$.
Then, since $g(x,\theta_i(x))=0$, it is clearly that
$(\frac{\partial g}{\partial x}+\delta_0\frac{\partial g}{\partial y})(x,\theta_i(x))\ne 0$.
So, we have $\theta_i'(x)\ne\delta_0$.
Therefore, by the continuity of $\theta'_{i}(x)$, exactly one case holds. 
    
As for the last assertion, the existence is by the intermediate value theorem applied to $\tan x-\theta_i(x)$ on the interval $(k\pi -\pi/2,  k\pi+\pi/2)$. The uniqueness follows from the fact that $\tan x-\theta_i(x)$ is strictly increasing on that interval since $1/\cos^2x-\theta'_i(x) >1-\delta_0>0$.
\end{proof}

Keep the notation in Lemma \ref{thm:derivative-bound}.
By the ``In particular'' part of Lemma \ref{thm:derivative-bound}, 
when $\theta_i'(x)<\delta_0$ on the interval $(\hat{M},\infty)$ for some selected $\delta_0<1$,
we can guarantee that there exists exactly $1$ periodic root (induced by $\theta_i$) of $g(x,\tan{x})$ in $(k\pi-\pi/2,k\pi+\pi/2)$ with $k$ sufficiently large. 
In order to deal with the case with $\theta_i'(x)>\delta_0$, we consider $1/\theta_i(x)$ instead of $\theta_i(x)$.
In fact, $1/\theta_i(x)$ is also a terminal-root-function of the reciprocal polynomial
$h:=\mathcal{R}(g)$ (see Def. \ref{definition:recip}). If $y\nmid g$, we have $\deg(h,y)=\deg(g,y)\geq1$ by the definition. Moreover, since $g$ is square-free and clearly $h\neq cy$ for any $c\in\Q$, $h$ is also square-free (recall Prop. \ref{proposition:recipKeepSquarefree}).

Finally, we obtain the following theorem, which strengthens Theorem \ref{thm:location-of-the-roots}.

\begin{theorem}\label{thm:Uniqueness}
Let $g\in\Q[x,y]$ be square-free such that $\deg(g,y)\geq1$ and $y\nmid g$. Set
$h=\mathcal{R}(g)$ and take
$\delta_1\in\Q ~(0<\delta_1<1)$, $\delta_2\in\Q ~(-1<\delta_2<0)$ such that neither $\res(\frac{\partial g}{\partial x}+\delta_1 \frac{\partial g}{\partial y},g,y)$ nor $\res(\frac{\partial h}{\partial x}+\delta_2 \frac{\partial h}{\partial y},h,y)$ is the zero polynomial.
Let $k'$ be the same as in Theorem \ref{thm:location-of-the-roots}, 
$k''$ a $\delta_1$-bound-point of $g$ and
$k'''$ a $\delta_2$-bound-point of $h$ guaranteed by Lemma \ref{lemma:existence-of-delta}.
Set $k^{*}:=\max\{k',k'',k'''\}$ and let $\theta_1(x),\ldots,\theta_r(x)$ be the terminal-root-functions of $g$.
Then, for any $k>k^{*}~(k\in\Z)$, 
there is a unique $\xi\in (k\pi -\pi/2,k\pi +\pi/2)$ such that $\theta_i(\xi)=\tan \xi$.
\end{theorem}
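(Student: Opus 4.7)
The plan is to establish uniqueness for each terminal-root-function $\theta_i$ separately; existence of some $\xi \in (k\pi - \pi/2, k\pi + \pi/2)$ with $\theta_i(\xi) = \tan \xi$ already follows from Theorem \ref{thm:location-of-the-roots}\ref{item:thm:location-of-the-roots3}. The first move is to apply Lemma \ref{thm:derivative-bound} to $g$ with the choice $\delta_0 = \delta_1$: on $((k^*+1/2)\pi,\infty)$, either (A) $\theta_i'(x) < \delta_1$ throughout, or (B) $\theta_i'(x) > \delta_1$ throughout. Case (A) is handled immediately by the ``In particular'' clause of Lemma \ref{thm:derivative-bound}, since $\delta_1 < 1 \leq \sec^2 x$ forces $\tan x - \theta_i(x)$ to be strictly increasing on $(k\pi - \pi/2, k\pi + \pi/2)$, giving at most one crossing.

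Case (B) is the substantive one, and this is where the reciprocal polynomial $h = \mathcal{R}(g)$ enters. The key observation is that $\theta_i'(x) > \delta_1 > 0$ forces $\theta_i(x) \to +\infty$; combined with Theorem \ref{thm:location-of-the-roots}\ref{item:thm:location-of-the-roots1}, the enclosing interval $(a_{j_i}, b_{j_i})$ must be the unbounded one $(a_{s+1}, +\infty)$, so $\theta_i(x) > a_{s+1} > 0$ throughout. Consequently $\phi_i := 1/\theta_i$ is well-defined, analytic, and strictly positive, and the identity $h(x,\phi_i(x)) = 0$ follows directly from the definition of $\mathcal{R}$. Thus $\phi_i$ is (the restriction of) one of the terminal-root-functions of $h$; note that Proposition \ref{proposition:recipKeepSquarefree} together with the hypothesis $y \nmid g$ guarantees that $h$ is square-free with $\deg(h,y) \geq 1$, so Lemma \ref{thm:derivative-bound} is applicable to $h$.

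Applying Lemma \ref{thm:derivative-bound} to $h$ with $\delta_0 = \delta_2$ yields either $\phi_i' < \delta_2$ or $\phi_i' > \delta_2$ throughout $((k^*+1/2)\pi,\infty)$. Since $\phi_i(x) \to 0^+$ as $x \to \infty$, the first alternative would force $\phi_i \to -\infty$, a contradiction; hence $\phi_i'(x) > \delta_2 > -1$. Now the equation $\theta_i(\xi) = \tan \xi$ has no solution in $(k\pi - \pi/2, k\pi]$ (there $\theta_i > 0$ while $\tan \xi \leq 0$), so any solution lies in $(k\pi, k\pi + \pi/2)$, where both sides are strictly positive and the equation is equivalent to $\cot \xi = \phi_i(\xi)$. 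But
\[
\frac{d}{dx}\bigl(\cot x - \phi_i(x)\bigr) = -\csc^2 x - \phi_i'(x) < -1 - \delta_2 < 0,
\]
so $\cot x - \phi_i(x)$ is strictly decreasing on this half-period and vanishes at most once. Combining the two cases completes the argument.

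The main obstacle is precisely the passage to the reciprocal in Case (B): one has to argue that the relevant $\theta_i$ is not merely nonzero but in fact bounded away from zero (so $\phi_i$ is a bona fide terminal-root-function of $h$ on a suitable domain), then correctly translate between ``$\tan$ versus $\theta_i$'' and ``$\cot$ versus $\phi_i$,'' being vigilant that this equivalence only holds on the sub-interval $(k\pi, k\pi + \pi/2)$ where $\tan \xi$ and $\theta_i(\xi)$ share a sign, and finally sharpen the derivative bound using both $-\csc^2 x \leq -1$ and $\delta_2 > -1$ to secure strict monotonicity.
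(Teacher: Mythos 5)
Your proposal is correct and follows essentially the same route as the paper's own proof: split via Lemma \ref{thm:derivative-bound} into the case $\theta_i'<\delta_1$ (settled by the ``In particular'' clause) and the case $\theta_i'>\delta_1$, where one shows $\theta_i\to+\infty$ so that the enclosing interval is $(a_{s+1},+\infty)$, passes to $1/\theta_i$ as a terminal-root-function of $h=\mathcal{R}(g)$, rules out $(1/\theta_i)'<\delta_2$ by the limit $1/\theta_i\to0$, and concludes uniqueness on $(k\pi,k\pi+\pi/2)$ from the strict decrease of $\cot x-1/\theta_i(x)$. The only cosmetic difference is that you invoke Theorem \ref{thm:location-of-the-roots}\ref{item:thm:location-of-the-roots3} for existence while the paper reruns the intermediate value theorem directly, which amounts to the same thing.
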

\begin{proof}
Since $y\nmid g$, $\deg(g,y)\ge 1$ and $g$ is square-free, we see that $\deg(h,y)\geq1$ and $h$ is square-free. 
Thus, $k'''$ is well-defined. 
Set $\mu:=(k^{*}+1/2)\pi$.
By Lemma \ref{thm:derivative-bound}, suppose $\theta_1(x),\ldots,\theta_t(x)$ are in Case \ref{item:thm:derivative-bound1} (\emph{i.e.}, $\theta_i'(x)<\delta_1<1$ for $x\in (\mu,+\infty))$ and $\theta_{t+1}(x),\ldots,\theta_r(x)$ are in Case \ref{item:thm:derivative-bound2} (\emph{i.e.}, $\theta_i'(x)>\delta_1$ for $x\in (\mu,+\infty))$ with $0\le t\le r$. 
Then, by the ``In particular'' part in Lemma \ref{thm:derivative-bound}, for each $\theta_i~(1\le i\le t)$ in Case \ref{item:thm:derivative-bound1}, we already have the conclusion. 
In the following, we deal with those $\theta_i(x)$ with $t+1\leq i\leq r$.

First, we prove that for each $\theta_i(x)~(t+1\le i\le r)$, $\theta_i(x)>0$ on $(\mu,+\infty)$: Since $\theta_i'(x)$ is greater than a positive number $\delta_1$ by Lemma \ref{thm:derivative-bound}, $\theta_i(x)\rightarrow+\infty$ as $x\rightarrow+\infty$. By property \ref{item:thm:location-of-the-roots1} of Theorem \ref{thm:location-of-the-roots}, the image of $\theta_i$ on $(\mu,+\infty)$ is a subset of $(a_{s+1},+\infty)$ with $a_{s+1}>0$ therein (recall Definition \ref{def:potential-periodic-interval}).

Second, we prove the existence of $\xi$ for each $\theta_i(x)~(t+1\le i\le r)$.
Let $k$ be any integer greater than $k^{*}$.
Applying the intermediate value theorem to $\tan{x}-\theta_i(x)$ on $(k\pi-\pi/2,k\pi+\pi/2)$, we see that there exists $\xi\in(k\pi-\pi/2,k\pi+\pi/2)$ such that $\tan{\xi}=\theta_i(\xi)$.

Finally, we prove the uniqueness of $\xi$ for each $\theta_i(x)~(t+1\le i\le r)$: Suppose $\tan\xi=\theta_i(\xi)$ for some $\xi\in(k\pi-\pi/2,k\pi+\pi/2)$. Then, $\xi\in(k\pi,k\pi+\pi/2)$ since $\theta_i(x)>0$ for all $x\in(\mu,+\infty)$.
Note that $1/\theta_i(x)$ is well defined on $(\mu,+\infty)$, $h(x,1/\theta_i(x))=0$ whenever $x\in (\mu,+\infty)$, and $\mu$ is greater than the start-point of the terminal-root-functions of $h$. 
So, $1/\theta_i(x)$ is one of the terminal-root-functions of $h$ when $x\in (\mu,+\infty)$.
By Lemma \ref{thm:derivative-bound}, we have 
$(1/\theta_i(x))'<\delta_2$ or $(1/\theta_i(x))'>\delta_2$ for $x\in(\mu,+\infty)$ with some $-1<\delta_2<0$.
Since $\lim\limits_{x\rightarrow +\infty}\theta_i(x)=+\infty$ and $\lim\limits_{x\rightarrow +\infty}1/\theta_i(x)=0$, we have $(1/\theta_i(x))'>\delta_2$. 
Because $-1/\sin^2 x-(1/\theta_i(x))'<-1-\delta_2<0$, 
the function $\cot x-1/\theta_i(x)$ is strictly decreasing on the interval $(k\pi,k\pi+\pi/2)$.
Thus, $\xi$ is unique.
\end{proof}

\begin{example}
Consider the polynomial $g(x,y)$ in Example \ref{example:potential-periodic-interval-set}.
In Example \ref{example:thm:location-of-the-roots}, we have taken $k'=3$.
It remains to compute $k''$ and $k'''$ in Theorem \ref{thm:Uniqueness}.
We take
$\delta_1=\frac{999}{1000}$ and $k''=3$, which is a 
$\delta_1$-bound-point of $g$.
Note that $h=\mathcal{R}(g)=10xy^8-20y^7-40xy^6+92y^5+60xy^4-28y^3-40xy^2+20y+10x.$
We take
$\delta_2=-\frac{999}{1000}$ and $k'''=0$, which is a $\delta_2$-bound-point of $h$.
Thus, we have $k^{*}=\max\{k',k'',k'''\}=3$.
By Example \ref{example:thm:location-of-the-roots} and Theorem \ref{thm:Uniqueness}, we see that for every $k > 3~(k \in \Z)$, there are exactly two real roots of $g(x,\tan{x})$ in the interval $(k\pi-\arctan{\frac{215}{128}},k\pi-\arctan{\frac{19}{32}})$.
\end{example}

We propose Algorithm \ref{alg:AlgDerBound} to compute 
a better lower bound, which guarantees the uniqueness, for the ``periodic'' roots of $g(x,\tan{x})$.
Given a square-free polynomial $g\in\Q[x,y]$ such that $\deg(g,y)\geq1$ and $y\nmid g$, together with a general potential periodic interval set $I$, 
Algorithm \ref{alg:AlgDerBound} computes a natural number $k^{*}$ and a set of pairs $((x_{i}^{-},x_{i}^{+}),c_i)$ such that for any $k>k^{*}~(k\in\N)$, there are exactly $c_i$ real roots of $g(x,\tan x)$ in the interval $(x_{i}^{-}+k\pi,x_{i}^{+}+k\pi)$.
The process of Algorithm \ref{alg:AlgDerBound} is as follows.
First, by Algorithm \ref{alg:AlgFarZero}, we compute $k'\in\N$ 
and a set of isolating intervals $ret$ such that for any $k>k'~(k\in\N)$ and any pair $((x_i^{-},x_{i}^{+}),c_i)$ in $ret$, there are at least $c_i$ real roots of $g(x,\tan x)$ in  $(x_i^{-}+k\pi,x_i^{+}+k\pi)$.
Then, we compute $k''\in\N$ and  $k'''\in\N$ by Lemma \ref{thm:derivative-bound} in Lines \ref{line:alg2-k''-st}--\ref{line:alg2-k''-ed} and Lines \ref{line:alg2-k'''-st}--\ref{line:alg2-k'''-ed}, respectively.
Finally, the algorithm returns the natural number $k^{*}:=\max\{k',k'',k'''\}$ and the set $ret$.
The correctness of Algorithm \ref{alg:AlgDerBound} is guaranteed by Theorem \ref{thm:Uniqueness} and the termination of it is obvious.

\begin{algorithm}[ht]\label{exactroot}
\scriptsize
\DontPrintSemicolon
\LinesNumbered
\SetKwInOut{Input}{Input}
\SetKwInOut{Output}{Output}
\Input{ 
$g$, a square-free polynomial in $\Q[x,y]$ such that $\deg(g,y)\geq 1$ and $y\nmid g$;\\
$I=[(a_0,b_0),\ldots,(a_{s+1},b_{s+1})]$, a  general potential periodic interval set of $g$.
}
\Output{
$k^{*}$, a natural number;\\
$ret$, a finite set of pairs $((x_i^{-},x_i^{+}),c_i)$ such that 
for any $k>k^{*}~(k\in\N)$, there are \emph{EXACTLY} $c_i$ roots for $g(x,\tan x)$ in the interval $(x_i^{-}+k\pi,x_i^{+}+k\pi)$.
}
\caption{\bf CompleteIsolatingFarZero}\label{alg:AlgDerBound}
\BlankLine

$k',\;ret\leftarrow{\bf IsolatingFarZero}(g,I)$\label{line:alg2-k'}\;
$d_1\leftarrow \mathtt{res}(\frac{\partial g}{\partial x}+\delta \frac{\partial g}{\partial y},g,y)$ \#\emph{In this step, $\delta$ is a variable and $d_1\in\Q[\delta,x]$.}\label{line:alg2-k''-st}\; 
Choose $\delta_1 \in(0,1)\cap \mathbb{Q}$ such that $d_1(\delta_1,x)$ is not the zero polynomial in $\Q[x]$.\;
Find $k''\in\N$ such that
$(k''+\frac{1}{2})\pi$ is an upper bound for all real roots of the polynomial $\textcolor{white}{hhhhhhhhhhhh}d_1(\delta_1,x)\cdot\lc(g,y)\cdot\allowbreak\discrim(g,y)$.\label{line:alg2-k''-ed}\;
$h\leftarrow g(x,\frac{1}{y})\cdot y^{\deg(g,y)}$\label{line:alg2-k'''-st}\;
$d_2\leftarrow \mathtt{res}(\frac{\partial h}{\partial x}+\delta \frac{\partial h}{\partial y},h,y)$\;
Choose $\delta_2 \in(-1,0)\cap \mathbb{Q}$ such that $d_2(\delta_2,x)$ is not the zero polynomial in $\Q[x]$.\;
Find $k'''\in\N$ such that
$(k'''+\frac{1}{2})\pi$ is an upper bound for all real roots of the polynomial  $\textcolor{white}{hhhhhhhhhhhh}d_2(\delta_2,x)\cdot\lc(h,y)\cdot\allowbreak\discrim(h,y)$.\label{line:alg2-k'''-ed}\;
$k^{*}\leftarrow \max\{k',k'',k'''\}$\;
\Return $k^{*}$, $ret$
\end{algorithm}

\section{Isolating Bounded Real Roots}\label{sec:IBR}
In this section, we discuss case \ref{item:MainThreeCase1} in Section \ref{sec:Preprocessing}, i.e., given $k_+,k_-\in\Z$, isolating the real roots of
\begin{align}\label{eq:boudedeq}
    g(x,\tan x)=0,~~~~x\in(k\pi-\frac{\pi}{2},k\pi+\frac{\pi}{2})
\end{align}
where $g\in\Q[x,y]\setminus\{0\}$, $k\in\Z$ and $k_-\le k\le k_+$.

We solve \eqref{eq:boudedeq} in four intervals $(k\pi-\frac{\pi}{2},k\pi-\frac{\pi}{4}),(k\pi-\frac{\pi}{4},k\pi),(k\pi,k\pi+\frac{\pi}{4})$ and $(k\pi+\frac{\pi}{4},k\pi+\frac{\pi}{2})$.
In each interval, the problem is equivalent to isolating the roots of some $G(r\pi+\arctan{y},y)$ on $(0,1)$ where $G\in\Q[x,y]$ and $r\in\Q$, and thus we can use the frame of \cite[Algorithm 47]{strzebonski2012real} to deal with it.
In Section \ref{subsec:WeakFourierSeq}, we use Algorithm \ref{alg:WeakFourierSeq} to compute a weak Fourier sequence for $G(r\pi+\arctan{y},y)$.
In fact, Algorithm \ref{alg:WeakFourierSeq} is a simplified version of \cite[Algorithm 40]{strzebonski2012real} for our specific needs.
In Section \ref{subsec:IsoRoot}, Algorithm \ref{alg:BoundRSI} is proposed to isolate the real roots of Eq. \eqref{eq:boudedeq}.

\subsection{Weak Fourier sequences}\label{subsec:WeakFourierSeq}

Let us first review the definition of weak Fourier sequences and an important result about it.

\begin{definition}\cite[Def. 10]{strzebonski2012real}\label{def:WeakFourierSeq}
Let $I\subseteq\R$ be an open interval, and $G_1,\ldots,G_k:I\rightarrow\R$ be a sequence of differentiable functions defined on $I$.
The sequence $G_1,\ldots,G_k$ is a \emph{weak Fourier sequence on $I$}, if for all $y\in I$,
\begin{enumerate}
    \item for all $i\;(1\le i\le k-1)$, $\sign(G_{i+1}(y))=\sign(G_i'(y))$, and
    \item $G_k(y)\ne0$.
\end{enumerate}
We say $G_1,\ldots,G_k$ is a \emph{weak Fourier sequence for $G$ on $I$}, if $G_1,\ldots,G_k$ is a weak Fourier sequence on $I$ and $G_1(y)=G(y)$ for all $y\in I$.
\end{definition}

\begin{notation}
Let $a,b\in\R$ and $G_1,\ldots,G_k$ be a weak Fourier sequence on the open interval $(a,b)$.
Then, for any $i,j\;(1\le i\le j\le k)$,
\begin{enumerate}
    \item $\sign(G_{i}(x^{+})):=\sign(G_i(c))$, where $x\in[a,b)$, $c< b$ and $\forall y\in(x,c]\;G_i(y)\ne0$,
    \item $\sign(G_{i}(x^{-})):=\sign(G_i(d))$, where $x\in(a,b]$, $a< d$ and $\forall y\in[d,x)\;G_i(y)\ne0$,
    \item $\sgc_{i,j}(x)$, for $x\in(a,b)$, is the number of sign changes in the sequence $G_{i}(x),\ldots,G_{j}(x)$ with terms equal to zero removed,    
    \item $\sgc_{i,j}(x^{+})$, for $x\in[a,b)$, is the number of sign changes in the sequence $G_{i}(x^{+}),\ldots,G_{j}(x^{+})$,
    \item $\sgc_{i,j}(x^{-})$, for $x\in(a,b]$, is the number of sign changes in the sequence $G_{i}(x^{-}),\ldots,G_{j}(x^{-})$.
\end{enumerate}
\end{notation}

\begin{theorem}\cite[Thm. 13]{strzebonski2012real}\label{thm:WeakFourierSeq}
Let $I\subseteq\R$ be an open interval and let 
$G_1,\ldots,G_{k}:I\rightarrow\R$ be a weak Fourier sequence.
Then for any $x\in I$, $\sgc_{1,k}(x^{+})=\sgc_{1,k}(x)$ and $\sgc_{1,k}(x^{-})=\sgc_{1,k}(x)+r+2s$, where $r,s\in\N$ and $G_1(x)=\cdots=G_r(x)=0,G_{r+1}(x)\ne0$.
Moreover, $s=0$ unless there is a $t>r+1$ such that $G_{t}(x)=0$.\qed
\end{theorem}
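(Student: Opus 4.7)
The plan is to decompose $\{1,\ldots,k\}$ into ``blocks'' according to the zero pattern at $x$: an initial zero-block $G_1(x)=\cdots=G_r(x)=0$ with $G_{r+1}(x)\ne 0$ (possibly $r=0$), followed by alternating nonzero positions and internal zero-blocks of consecutive indices where $G_i(x)=0$. Since $G_k(y)\neq 0$ on all of $I$, no zero-block can reach index $k$, so this decomposition is well-defined. I would then analyze the contribution of each block to $\sgc_{1,k}(x^+)$, $\sgc_{1,k}(x^-)$, and $\sgc_{1,k}(x)$ separately and sum.

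First I would prove a ``propagation lemma'' for the right side by downward induction on $i$: if $G_{i+1}(y)$ has constant sign $\epsilon$ on $(x,x+\delta)$, then so does $G_i(y)$ after possibly shrinking $\delta$. Indeed, if $G_i(x)\ne 0$ this is just continuity, while if $G_i(x)=0$ the weak Fourier condition $\sign(G_{i+1}(y))=\sign(G_i'(y))$ gives $G_i'$ of sign $\epsilon$ on $(x,x+\delta)$, so $G_i(y)=G_i'(\xi)(y-x)$ has sign $\epsilon$ there as well. Consequently every zero ``inherits'' the sign of the nearest nonzero term of higher index at $x^+$, so the zero-removed sequence at $x$ exhibits exactly the same sign changes as the full sequence at $x^+$; this yields $\sgc_{1,k}(x^+)=\sgc_{1,k}(x)$.

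For $x^-$ the same downward induction applies, but now at a zero $G_i(x)=0$ with $\sign(G_{i+1}(x^-))=\epsilon$ one has $G_i'(y)$ of sign $\epsilon$ just to the left of $x$, so $G_i(y)\approx G_i'(\xi)(y-x)$ has sign $-\epsilon$ there. Signs therefore \emph{alternate} across each zero going leftward. The initial block then contributes exactly $r$ new sign changes in $G_1(x^-),\ldots,G_{r+1}(x^-)$ that are absent from the zero-removed sequence at $x$. For an internal zero-block of length $m$ flanked by nonzero $G_t$ and $G_{t+m+1}$, the alternation inside contributes $m$ sign changes while the boundary sign change between $G_t$ and its right neighbour flips according to the parity of $m$; a short four-case check (even $m$ vs.\ odd $m$, combined with whether the flanking signs $\sign(G_t(x))$ and $\sign(G_{t+m+1}(x))$ agree) shows the net per-block contribution is always a nonnegative even integer (namely $m$, $m+1$, or $m-1$, each of which is even in its case).

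Summing over all blocks yields $\sgc_{1,k}(x^-)-\sgc_{1,k}(x)=r+2s$ for some $s\ge 0$, and the decomposition makes the ``$s=0$ unless there is $t>r+1$ with $G_t(x)=0$'' clause immediate, since $s$ only accumulates contributions from internal zero-blocks. The main obstacle will be the parity bookkeeping for internal blocks; once that is dispatched, all remaining steps are routine applications of the weak Fourier property, continuity of the $G_i$, and the mean value theorem applied near each zero.
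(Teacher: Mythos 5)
The paper does not give its own proof of this theorem; it cites it directly from Strzeboński's paper (with the \verb|\qed| marking an omitted proof), so there is no internal argument to compare against. Your proposal is a correct, self-contained proof. The downward-propagation argument for $x^+$ (each zero inherits, via the mean value theorem and the weak Fourier condition, the constant sign of the next term to the right on a small right neighbourhood) correctly shows that filling in zeros with propagated signs introduces no new sign changes, giving $\sgc_{1,k}(x^+)=\sgc_{1,k}(x)$. On the left, the sign flip across each zero makes the $r$-long initial block contribute exactly $r$ extra changes. Your claimed four-case parity check for an internal block flanked by $G_t\ne 0$ and $G_{t+m+1}\ne 0$ (with $m$ interior zeros) does work out: writing $a=\sign G_t(x)$, $b=\sign G_{t+m+1}(x)$, the excess at $x^-$ over $x$ is $m$ when ($m$ even, $a=b$), $m+1$ when ($m$ odd, $a=b$), $m$ when ($m$ even, $a\ne b$), and $m-1$ when ($m$ odd, $a\ne b$) --- all even and nonnegative since $m\ge 1$. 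Summing gives $r+2s$ with $s=0$ when no internal block exists, i.e.\ when no $G_t(x)=0$ for $t>r+1$. One small wording point: in your propagation lemma you write ``then so does $G_i(y)$'' after positing that $G_{i+1}$ has constant sign $\epsilon$; when $G_i(x)\ne 0$, what continuity gives is that $G_i$ has \emph{some} constant sign (namely $\sign G_i(x)$), not necessarily $\epsilon$. The subsequent text makes clear you mean the former, but the lemma statement as written is slightly ambiguous and worth tightening.
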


Let $a,b$ be two real numbers such that $(a,b)\subseteq I$.
Remark that by the proof of \cite[Prop. 15]{strzebonski2012real}, if $\sgc_{1,k}(a^{+})=\sgc_{1,k}(b^{-})$, then $G_1$ has no real root in $(a,b)$, and if $\sgc_{1,k}(a^{+})=\sgc_{1,k}(b^{-})+1$, then $G_1$ has exactly one simple root in $(a,b)$.

To illustrate Algorithm \ref{alg:WeakFourierSeq}, we introduce the definition of the \emph{arctan-derivative}.

\begin{definition}\label{def:pseudo}
For any $G\in\Q(y)[x]$, let
\[
{\trueder G}:=\frac{\partial G}{\partial x}\cdot\frac{1}{1+y^2}+\frac{\partial G}{\partial y}\in\Q(y)[x].
\]
The \emph{arctan-derivative} of $G$ is defined as $G^{\pder}:={\trueder G}\cdot h(y)$, where $h(y)\in\Q[y]$ is the denominator of $\lc({\trueder G},x)\in\Q(y)$.
In particular, if $\lc({\trueder G},x)$ is a polynomial, then $h(y)=1$.
We call $h(y)$ the \emph{common-term} of $G^{\pder}$.
\end{definition}

\begin{remark}\label{remark:def:pseudo}
For any $G\in\Q(y)[x]$, $G^{\pder}\in\Q(y)[x]$ and $\lc(G^{\pder},x)$ is a polynomial in $\Q[y]$.
\end{remark}

Given $G(x,y)\in\Q[x,y]\setminus\{0\}$ and $r\in\Q$,
the process of Algorithm \ref{alg:WeakFourierSeq} is as follows:
Let $G_1$ be $G(x,y)$.
If $G_1$ is not a nonzero rational number, then compute the arctan-derivative $G_{1}^{\pder}$ of $G_1$, and set $G_2=G_{1}^{\pder}$.
If $G_2$ is not a nonzero rational number, then set $G_3=G_{2}^{\pder}$.
Repeat the process until some $G_k\;(k\ge1)$ is a nonzero rational number.
Then $G_1(r\pi+\arctan y,y),\allowbreak\ldots,G_k(r\pi+\arctan y,y)$ is a weak Fourier sequence for $G(r\pi+\arctan y,y)$ on $\R$.

\begin{algorithm}[!ht]
\scriptsize
\DontPrintSemicolon
\LinesNumbered
\SetKwInOut{Input}{Input}
\SetKwInOut{Output}{Output}
\Input{ $G(x,y)$, a polynomial in $\Q[x,y]\setminus\{0\}$; $r$, a rational number
}
\Output{$G_1(r\pi+\arctan{y},y),\ldots,G_k(r\pi+\arctan{y},y)$, a weak Fourier sequence for $G(r\pi+\arctan{y},y)$ on $\R$, where $G_i\in\Q(y)[x]$
}
\caption{{\bf WeakFourierSeq}}\label{alg:WeakFourierSeq}
\BlankLine
$G_1\leftarrow G(x,y)$\;
$i\leftarrow1$\;
\While{$G_i(x,y)$ is not a nonzero rational number}
{ $G_{i+1}(x,y)\leftarrow G_{i}^{\pder}(x,y)$\;
  $i\leftarrow i+1$
}
\Return{$G_1(r\pi+\arctan{y},y),\ldots,G_k(r\pi+\arctan{y},y)$}
\end{algorithm}

Note that we can easily obtain a weak Fourier sequence from a \emph{semi-Fourier sequence} (defined in \cite[Def. 14]{strzebonski2012real}) by \cite[Prop. 15]{strzebonski2012real}.
So, Algorithm \ref{alg:WeakFourierSeq} and \cite[Algorithm 40]{strzebonski2012real} are essentially the same when computing a weak Fourier sequence for $G(r\pi+\arctan{y},y)$, and thus the termination and correctness of Algorithm \ref{alg:WeakFourierSeq} are clear.
However, in order to explain some properties that will be used later, we prepare some lemmas and re-prove the termination and correctness of Algorithm \ref{alg:WeakFourierSeq}.

\begin{lemma}\label{lemma:DerEquiv0}
If $G\in\Q(y)[x]\setminus\Q$ such that $G(x,y)$ is well-defined on the whole plane $\R^2$,
then $G^{\pder}\in\Q(y)[x]\setminus\{0\}$.
\end{lemma}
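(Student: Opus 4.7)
The plan is to reduce the whole statement to showing $\trueder G\neq 0$. Once that holds, the leading coefficient $\lc(\trueder G,x)\in\Q(y)$ is a nonzero rational function of $y$, its denominator $h(y)\in\Q[y]$ is a nonzero polynomial, and therefore $G^{\pder}=\trueder G\cdot h(y)$ is a nonzero element of $\Q(y)[x]$; that $G^{\pder}$ lies in $\Q(y)[x]$ at all is automatic from the defining formula $\trueder G=\frac{\partial G}{\partial x}\cdot\frac{1}{1+y^2}+\frac{\partial G}{\partial y}$ together with $G\in\Q(y)[x]$.

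I would then split the argument by $n:=\deg(G,x)$. If $n=0$, then $G\in\Q(y)\setminus\Q$, so $\partial G/\partial x=0$ and $\trueder G=\partial G/\partial y$ is nonzero, because in characteristic zero the derivative of a non-constant rational function cannot vanish.

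If $n\geq 1$, write $G=\sum_{i=0}^{n}a_i(y)x^i$ with $a_n\neq 0$. A direct expansion shows that in $\trueder G$ the coefficient of $x^n$ equals $a_n'(y)$ and the coefficient of $x^{n-1}$ equals $\dfrac{n\,a_n(y)}{1+y^2}+a_{n-1}'(y)$. If $a_n(y)$ is non-constant, then $a_n'(y)\neq 0$ and hence $\trueder G\neq 0$ immediately. Otherwise $a_n=c\in\Q\setminus\{0\}$, and the vanishing of $\trueder G$ would force $a_{n-1}'(y)=-\dfrac{nc}{1+y^2}$ in $\Q(y)$.

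The main obstacle is to rule out this last equation, i.e.\ to show that $\tfrac{1}{1+y^2}$ has no antiderivative in $\Q(y)$. I would argue by embedding $\Q(y)\subset\C(y)$ and using partial fractions: every $f\in\C(y)$ can be written as $p(y)+\sum_{\alpha,k}c_{\alpha,k}/(y-\alpha)^k$, so that $f'=p'(y)-\sum_{\alpha,k}k\,c_{\alpha,k}/(y-\alpha)^{k+1}$ can only have poles of order $\geq 2$. Since $-\dfrac{nc}{1+y^2}=-\dfrac{nc}{2i}\Bigl(\dfrac{1}{y-i}-\dfrac{1}{y+i}\Bigr)$ has simple poles at $y=\pm i$ (as $nc\neq 0$), this is a contradiction, and hence $\trueder G\neq 0$ in every case. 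This gives $G^{\pder}\in\Q(y)[x]\setminus\{0\}$ as required.
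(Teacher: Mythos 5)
Your proof is correct, but it takes a genuinely different route from the paper. The paper argues via the chain-rule identity ${\trueder G}(\arctan y,y)=\frac{\mathrm{d}}{\mathrm{d}y}\,G(\arctan y,y)$: if $G^{\pder}=0$ then ${\trueder G}=0$, so $G(\arctan y,y)$ is constant on $\R$, and since $\arctan$ is not algebraic over the rational functions this forces $G\in\Q$, a contradiction; this is where the hypothesis that $G$ is well-defined on all of $\R^2$ is used (to evaluate $G$ along the curve $x=\arctan y$). You instead argue purely formally in $\Q(y)[x]$: after the same reduction $G^{\pder}\neq0\Leftrightarrow{\trueder G}\neq0$, you inspect the $x^n$ and $x^{n-1}$ coefficients of ${\trueder G}$, so that either $a_n'\neq0$ settles it, or $a_n=c\in\Q\setminus\{0\}$ and vanishing would force $a_{n-1}'=-nc/(1+y^2)$, which you exclude by the partial-fraction (residue) fact that a derivative of a rational function has no simple poles. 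Both arguments are sound; yours buys two things: it never uses the well-definedness hypothesis (so it proves the slightly stronger statement for arbitrary $G\in\Q(y)[x]\setminus\Q$), and it replaces the transcendence of $\arctan$ over $\Q(y)$ by the weaker and entirely algebraic fact that $1/(1+y^2)$ has no antiderivative in $\C(y)$. The paper's proof buys brevity and consistency, since it reuses the same transcendence-style facts invoked elsewhere (e.g., Theorem \ref{thm:BothAlgeNum} and Remark \ref{remark:bisect}). One cosmetic point: in your degree-$0$ case and in the nonconstant-$a_n$ case you rely on the standard characteristic-zero fact that a nonconstant element of $\Q(y)$ has nonzero derivative; it is worth stating that this is the rational-function version (via $p'q-pq'\neq0$ for coprime $p,q$ not both constant), since the coefficients $a_i(y)$ need not be polynomials.
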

\begin{proof}
Assume that $G^{\pder}$ is the zero polynomial in $\Q(y)[x]$.
By Definition \ref{def:pseudo}, ${\trueder G}$ is also the zero polynomial.
Then, ${\trueder G}(\arctan{y},y)\equiv0$.
Note that ${\trueder G}(\arctan{y},y)=\frac{\der G(\arctan{y},y)}{\der y}$.
So, $G(\arctan{y},y)\in\Q$ for any $y\in\R$. Since the arctangent function is not algebraic, $G\in\Q$, a contradiction.
\end{proof}

\begin{lemma}\label{lemma:pseudoproperty}
If $G\in\Q(y)[x]\setminus\Q$ and $\lc(G,x)$ is a polynomial, then
\begin{align}\label{eq:lemma:pseudoproperty}
    (\deg(G,x),\deg(\lc(G,x)))\succlex(\deg(G^{\pder},x),\deg(\lc(G^{\pder},x))),
\end{align}
where $\succlex$ denotes the lexicographic ordering.
\end{lemma}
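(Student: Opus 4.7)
The plan is to expand $\trueder G$ explicitly in powers of $x$ and inspect only its top two coefficients, then split the argument according to whether the leading coefficient $a_n(y) := \lc(G,x)$ is a constant or not.

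Writing $G = \sum_{i=0}^{n} a_i(y)\, x^i$ with $n = \deg(G,x)$ and $a_i \in \Q(y)$, a direct computation yields
\[
\trueder G \;=\; a_n'(y)\, x^n \;+\; \left( \tfrac{n\, a_n(y)}{1+y^2} + a_{n-1}'(y) \right) x^{n-1} \;+\; (\text{lower order in } x),
\]
so the whole question reduces to understanding these two coefficients. In the first case, $a_n$ is a non-constant polynomial, so $a_n'(y) \in \Q[y]\setminus\{0\}$. Then $\lc(\trueder G, x) = a_n'(y)$ is already a polynomial, the common-term is $h(y)=1$, and $G^{\pder} = \trueder G$. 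I would then read off $\deg(G^{\pder}, x) = n$ and $\deg(\lc(G^{\pder}, x)) = \deg(a_n) - 1 < \deg(\lc(G,x))$, which delivers \eqref{eq:lemma:pseudoproperty} through the second coordinate.

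In the second case, $a_n = c \in \Q\setminus\{0\}$ is a constant, so the $x^n$ coefficient of $\trueder G$ vanishes (note $n\ge 1$, else $G\in\Q$). The main obstacle is to show that the $x^{n-1}$ coefficient $\tfrac{nc}{1+y^2} + a_{n-1}'(y)$ is nonzero, i.e.\ to rule out the identity $a_{n-1}'(y) = -\tfrac{nc}{1+y^2}$ in $\Q(y)$. I would establish this by a residue/pole-order argument: viewing elements of $\Q(y)$ inside $\C(y)$, the partial-fraction expansion of any $a \in \Q(y)$ shows that $a'$ has only poles of order $\ge 2$, equivalently $\mathrm{Res}_{y=\alpha}\, a' = 0$ at every pole $\alpha\in\C$; but $\tfrac{1}{1+y^2} = \tfrac{1}{(y-i)(y+i)}$ has a \emph{simple} pole at $y=i$ with nonzero residue, so it cannot be the derivative of any element of $\Q(y)$. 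Hence $\deg(\trueder G, x) = n-1$. Since multiplying by $h(y)$ (which depends only on $y$) leaves the $x$-degree unchanged, $\deg(G^{\pder}, x) = n - 1 < n$, giving the strict lex inequality via the first coordinate. This residue argument in the second case is the only non-trivial step; everything else is an explicit coefficient expansion and a small degree count.
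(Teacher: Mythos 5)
Your proof is correct, and the overall strategy---expand $\trueder G$ in powers of $x$ and compare the top two coefficients---is the same as the paper's. The case split is also essentially the same once one notes that, under the hypothesis $\lc(G,x)\in\Q[y]$, ``$a_n$ constant'' is equivalent to ``$a_n'(y)=0$'', which is how the paper phrases it. Where you genuinely diverge is in the treatment of that second case. The paper simply observes $\deg(\trueder G,x)<\deg(G,x)$ and stops, implicitly taking for granted that $\trueder G\neq 0$ so that $G^{\pder}$ and its degree are well-defined (in the paper's actual use case this is backed up elsewhere by Lemma \ref{lemma:DerEquiv0} together with Lemma \ref{lemma:DerForm}, which give $G^{\pder}\neq 0$ for the special class of $G$ arising in Algorithm \ref{alg:WeakFourierSeq}). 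You instead prove the sharper statement that the $x^{n-1}$ coefficient $\frac{nc}{1+y^2}+a_{n-1}'(y)$ is nonzero, via the residue observation that $\frac{1}{1+y^2}$ has a simple pole at $y=i$ and hence cannot be the derivative of any rational function, whose poles always have order at least two. This is a clean and self-contained argument not present in the paper, and it has the merit of simultaneously ruling out $\trueder G=0$, making your version of Lemma \ref{lemma:pseudoproperty} stand on its own without appealing to the surrounding context. The trade-off is that the residue step is more machinery than is strictly needed for the lex inequality: any nonvanishing coefficient of $\trueder G$ would suffice to make $\deg(G^{\pder},x)<n$. Also, a small phrasing nit: you call showing that particular coefficient nonzero ``the main obstacle'', but the real obstacle is showing $\trueder G\neq 0$ at all---your coefficient computation just happens to be the most direct way to do both at once.
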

\begin{proof}
Recall Remark \ref{remark:def:pseudo}. The degrees $\deg(G^{\pder},x)$ and $\deg(\lc(G^{\pder},x))$ are well-defined.

If $\deg(G,x)=0$, we have $G\in\Q(y)\setminus\Q$, and thus $\lc(G,x)=G$.
So, $G$ is a polynomial in $\Q[y]\setminus\Q$.
It is easy to check that $G^{\pder}=\frac{\partial G}{\partial y}\in\Q[y]$. 
So, $\deg(\lc(G,x))=\deg(G)$, $\deg(G^{\pder},x)=0$ and $\deg(\lc(G^{\pder},x))=\deg(G)-1$. 
Then inequality \eqref{eq:lemma:pseudoproperty} holds.

If $\deg(G,x)\ge1$, suppose that 
\begin{align}\label{eq:G}
G=a_n(y)x^n+\sum_{j=0}^{n-1}a_j(y)x^j
\end{align}
where $n=\deg(G,x)\ge1$, $a_n(y)\in\Q[y]$ and $a_j(y)\in\Q(y)$ for $j=0,\ldots,n-1$.
Then,
\begin{align}\label{eq:truederG}
    {\trueder G}=a_n'(y)x^n+&\sum_{j=0}^{n-1}\big(a_j'(y)+\frac{(j+1)a_{j+1}(y)}{1+y^2}\big)x^{j}.
\end{align}
If $a_{n}'(y)$ is the zero polynomial, we have $\deg({\trueder G},x)<\deg(G,x)$.
By Definition \ref{def:pseudo}, $\deg(G^{\pder},x)=\deg({\trueder G},x)$.
So, $\deg(G^{\pder},x)<\deg(G,x)$ and inequality \eqref{eq:lemma:pseudoproperty} is true.
If $a_{n}'(y)$ is not the zero polynomial, then $\deg({\trueder G},x)=\deg(G,x)$.
Since $\lc({\trueder G},x)=a_{n}'(y)\in\Q[y]$, $G^{\pder}={\trueder G}$. Thus $\deg(G^{\pder},x)=\deg(G,x)$.
We also note that $\deg(\lc(G,x))=\deg(a_n(y))$ and $\deg(\lc(G^{\pder},x))=\deg(a_n(y))-1$.
The proof is completed.
\end{proof}

\begin{lemma}\label{lemma:DerForm}
If $G\in\Q(y)[x]$ has the following form
\begin{align}\label{eq:specialform}
    G=a_n(y)x^{n}+\sum_{j=0}^{n-1}\frac{a_j(y)}{(1+y^2)^{m_j}}x^{j},
\end{align}
where $n=\deg(G,x)$, $a_i(y)\in\Q[y]\;(0\le i\le n)$ and $m_j\in\N~(0\le j\le n-1)$, then the common-term of $G^{\pder}$ is $(1+y^2)^m$ for some $m\in\N$.
Furthermore, $G^{\pder}$ also has the form in Eq. \eqref{eq:specialform}.
\end{lemma}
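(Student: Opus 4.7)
The plan is to unfold $\trueder G$ directly from the ansatz \eqref{eq:specialform} and track the shape of each coefficient in $x$. First I would differentiate termwise: the $y$-derivative of $a_j(y)/(1+y^2)^{m_j}$ equals $(a_j'(y)(1+y^2) - 2m_j y\,a_j(y))/(1+y^2)^{m_j+1}$, which has the same ``polynomial over a power of $1+y^2$'' shape; and dividing $\partial G/\partial x$ by $1+y^2$ simply raises each denominator's power by one. Summing, every coefficient of $\trueder G$ in $x$ is a rational function of the form $p(y)/(1+y^2)^{k}$ with $p\in\Q[y]$ and $k\in\N$.

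Next, I would identify $\lc(\trueder G,x)$. If $a_n'(y)\neq 0$, it is simply $a_n'(y)\in\Q[y]$, with denominator $1$; otherwise $a_n$ is a constant, the leading term of $\trueder G$ drops to $x^{n-1}$ (or lower), and its coefficient is a $\Q$-linear combination of terms of the shape just identified, hence still a rational function whose reduced denominator is a power of $(1+y^2)$. In every case, the reduced denominator of $\lc(\trueder G,x)$ is $(1+y^2)^m$ for some $m\in\N$. By Definition \ref{def:pseudo}, this denominator is exactly the common-term $h(y)$ of $G^{\pder}$, which establishes the first assertion.

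For the second assertion, write $G^{\pder} = \trueder G\cdot(1+y^2)^m$. Its leading coefficient in $x$ is a polynomial in $\Q[y]$ by the choice of $m$. For each non-leading coefficient $p_j(y)/(1+y^2)^{k_j}$ of $\trueder G$, multiplication by $(1+y^2)^m$ yields either the polynomial $p_j(y)(1+y^2)^{m-k_j}$ (when $m\geq k_j$) or $p_j(y)/(1+y^2)^{k_j-m}$ (when $m<k_j$); both fit the shape $a_j(y)/(1+y^2)^{m_j}$ required by \eqref{eq:specialform}. The argument is essentially bookkeeping: the two elementary operations involved, taking $\partial/\partial y$ and multiplying by $1/(1+y^2)$, each preserve the class of rational functions of shape ``polynomial over a power of $1+y^2$,'' and $\Q$-linear combinations do as well, so no real obstacle arises beyond keeping track of which $x$-term controls the leading coefficient and hence fixes the exponent $m$ in $h(y)$.
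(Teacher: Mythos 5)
Your proposal follows essentially the same route as the paper's own proof: expand $\trueder G$ termwise, observe that every $x$-coefficient has the shape $p(y)/(1+y^2)^{k}$, identify $\lc(\trueder G,x)$, conclude that the common-term is a power of $1+y^2$, and multiply back to see that $G^{\pder}$ again has the form \eqref{eq:specialform}. The one substantive omission is in the degenerate case $a_n'(y)=0$: you assert that the leading term of $\trueder G$ ``drops to $x^{n-1}$ (or lower)'' and then read off the reduced denominator of its coefficient, but you never exclude the possibility that $\trueder G$ is the zero polynomial, in which case $\lc(\trueder G,x)$ --- and hence your exponent $m$ --- is undefined and the argument stalls. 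The paper closes exactly this point: a $G$ of the form \eqref{eq:specialform} is well-defined on all of $\R^2$ (the denominators $1+y^2$ never vanish on $\R$), and $\deg(G,x)\ge1$ forces $G\notin\Q$, so by the proof of Lemma \ref{lemma:DerEquiv0} one has $\trueder G\neq0$ and a maximal index with nonzero coefficient exists; the paper also treats $\deg(G,x)=0$ separately. Adding that one appeal makes your argument complete. A minor point in your favor: you correctly work with the reduced denominator, which is a power of $1+y^2$ because $1+y^2$ is irreducible over $\Q$, whereas the paper's proof names the common-term as $(1+y^2)^{d_{k_0}}$ without noting possible cancellation; this discrepancy does not affect the lemma's conclusion.
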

\begin{proof}
If $\deg(G,x)=0$, then $G=a_n(y)\in\Q[y]$, and thus $G^{\star}=G'(y)$, the common-term of $G^{\star}$ is $1$.
If $\deg(G,x)\ge1$, by Eq. \eqref{eq:G} and Eq. \eqref{eq:truederG} in the proof of Lemma \ref{lemma:pseudoproperty}, there exist $b_i(y)\in\Q[y]\;(0\le i\le n)$ and $d_j\in\N~(0\le j\le n-1)$ such that
\begin{align*}
    {\trueder G}=b_n(y)x^n+\sum_{j=0}^{n-1}\frac{b_j(y)}{(1+y^2)^{d_j}}x^{j}.
\end{align*}
If $b_n(y)$ is not the zero polynomial, the common-term of $G^{\pder}$ is $1$.
Otherwise, since $\deg(G,x)\ge1$ and $G$ is well-defined on the whole $\R^2$, by the proof of Lemma \ref{lemma:DerEquiv0}, ${\trueder G}$ is not the zero polynomial.
So, there exists an integer $0\le k\le n-1$ such that $b_k(y)$ is not the zero polynomial. Let $k_0$ be the maximal one of such $k$. Then, the common-term of $G^{\pder}$ is $(1+y^2)^{d_{k_0}}$.
Therefore, $G^{\pder}$ has the same form as in Eq. \eqref{eq:specialform}.
\end{proof}

\begin{theorem}\label{thm:alg:WeakFourierSeqTC}
   Algorithm \ref{alg:WeakFourierSeq} terminates correctly.
\end{theorem}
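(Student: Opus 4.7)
The plan is to verify two things: that the while-loop terminates, and that the returned sequence is indeed a weak Fourier sequence on $\R$ for $G(r\pi+\arctan y,y)$. The whole argument will hinge on an invariant that I will carry through the loop: every iterate $G_i$ belongs to the special class of polynomials described in Eq. \eqref{eq:specialform}. Since $G_1=G(x,y)\in\Q[x,y]$ trivially lies in this class (all $m_j=0$), repeated application of Lemma \ref{lemma:DerForm} preserves it. In particular, each $G_i$ will be well-defined on all of $\R^2$, $\lc(G_i,x)$ will be a polynomial in $\Q[y]$ (consistent with Remark \ref{remark:def:pseudo}), and the common-term of $G_i^{\pder}$ will have the form $(1+y^2)^m$ with $m\in\N$.

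For termination, as long as $G_i\notin\Q$, Lemma \ref{lemma:DerEquiv0} guarantees that $G_{i+1}=G_i^{\pder}$ is nonzero, so by induction none of the iterates can be the zero polynomial. Meanwhile, Lemma \ref{lemma:pseudoproperty} applies to each such $G_i$ (its leading coefficient in $x$ being a polynomial by the invariant), yielding a strict lexicographic decrease
\[
(\deg(G_i,x),\deg(\lc(G_i,x)))\succlex(\deg(G_{i+1},x),\deg(\lc(G_{i+1},x))).
\]
Since $\succlex$ is a well-order on $\N\times\N$, this chain must be finite. Hence the loop exits at some $G_k$, and because it is nonzero and lies in $\Q$, it is a nonzero rational number as required.

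For correctness, I will check the two conditions in Definition \ref{def:WeakFourierSeq}. The condition $G_k(r\pi+\arctan y, y)\ne 0$ is immediate since $G_k$ is a nonzero rational. For the sign-matching condition, applying the chain rule yields, for each $1\le i\le k-1$ and every $y\in\R$,
\[
\frac{d}{dy}G_i(r\pi+\arctan y, y)=\left(\frac{\partial G_i}{\partial x}\cdot\frac{1}{1+y^2}+\frac{\partial G_i}{\partial y}\right)(r\pi+\arctan y, y)=({\trueder G_i})(r\pi+\arctan y, y).
\]
By Definition \ref{def:pseudo} combined with Lemma \ref{lemma:DerForm}, $G_{i+1}=G_i^{\pder}={\trueder G_i}\cdot (1+y^2)^m$ for some $m\in\N$; since $(1+y^2)^m>0$ for every real $y$, multiplication by this common-term preserves sign, establishing the sign-matching identity.

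The main technical obstacle is really Lemma \ref{lemma:DerForm}: it is what simultaneously ensures that $\lc(G_i,x)$ remains a polynomial in $\Q[y]$ (so Lemma \ref{lemma:pseudoproperty} keeps applying and the termination argument goes through), and that the common-term is a strictly positive function on $\R$ (so the sign-preservation step in the correctness argument goes through). Once the invariance of the form in Eq. \eqref{eq:specialform} is in hand, the rest of the proof is routine bookkeeping.
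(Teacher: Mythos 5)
Your proof is correct and follows essentially the same route as the paper's: termination comes from Lemma \ref{lemma:pseudoproperty} plus non-vanishing via Lemmas \ref{lemma:DerEquiv0} and \ref{lemma:DerForm}, and correctness comes from the chain rule together with positivity of the common-term $(1+y^2)^m$. The only stylistic difference is that you state the invariant (all $G_i$ lie in the class of Eq. \eqref{eq:specialform}) upfront, whereas the paper invokes Lemma \ref{lemma:DerForm} implicitly at the points where it is needed; the substance is identical.
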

\begin{proof}
(Termination) 
For any $G(x,y)\in\Q[x,y]\setminus\{0\}$,
let $G_1:=G(x,y),G_2:=G_1^{\pder}(x,y),G_3:=G_2^{\pder}(x,y),\ldots$
We only need to prove that there exists $k\;(k\ge1)$ such that $G_k(x,y)$ is a nonzero rational number.
If $G_1\in\Q\setminus\{0\}$, then we take $k=1$.
If $G_1\in\Q[x,y]\setminus\Q$, then by Lemma \ref{lemma:pseudoproperty}, we have 
$$(\deg(G_1,x),\deg(\lc(G_1,x)))\succlex(\deg(G_2,x),\deg(\lc(G_2,x))).$$
Recall Remark \ref{remark:def:pseudo}. For any $i~(i\ge2)$, $G_i\in\Q(y)[x]$ and $\lc(G_i,x)$ is a polynomial.
Therefore, assume that $G_2\notin\Q$, $G_3\notin\Q$,..., by Lemma \ref{lemma:pseudoproperty}, we have
\begin{align*}
(\deg(G_1,x),\deg(\lc(G_1,x)))&\succlex(\deg(G_2,x),\deg(\lc(G_2,x)))\\
&\succlex(\deg(G_3,x),\deg(\lc(G_3,x)))\succlex\cdots.
\end{align*}
So, there exists $m\;(m\ge2)$ such that $G_m\in\Q$.
Let $k\;(k\ge2)$ be the smallest integer such that $G_k\in\Q$.
Then, $G_{k-1}\notin\Q$.
Note that $G_{k-1}$ is well-defined on the whole $\R^2$ by Lemma \ref{lemma:DerForm}.
So, by Lemma \ref{lemma:DerEquiv0}, $G_k$ is a nonzero rational number.

(Correctness) 
For $i=1,\ldots,k-1$, let $h_i(y)$ be the common-term of $G_i^{\pder}(x,y)$.
Recall Definition \ref{def:pseudo}. 
We have $G_{i+1}(x,y)=h_i(y){\trueder G}_{i}(x,y)$.
Note that 
$${\trueder G}_{i}(r\pi+\arctan{y},y)=G_i(r\pi+\arctan{y},y)'$$
and $G_{i+1}(x,y)\in\Q(y)[x]$.
Thus, we only need to prove that $\sign(h_i(y))>0$ on the whole $\R$ and $G_{i+1}(r\pi+\arctan{y},y)$ is defined on the whole $\R$.
If $\deg(G,x)=0$, then $G\in\Q[y]$.
So, $h_i(y)=1$ and $G_{i+1}(r\pi+\arctan{y},y)$ is defined on the whole $\R$.
If $\deg(G,x)\ge1$, 
by the termination proof, there exists $m\;(1\le m\le k-1)$ such that $\deg(G_1,x)\ge1,\ldots,\deg(G_m,x)\ge1$ but $\deg(G_{m+1},x)=0$.
Applying Lemma \ref{lemma:DerForm} to $G_i\;(i=1,\dots,m)$, we have 
$h_i(y)=(1+y^2)^{n_i}\;(n_i\in\N)$ and $G_{i+1}(r\pi+\arctan y,y)$ is defined on the whole $\R$.
Since $\deg(G_{m+1},x)=0$, $G_{m+1}\in\Q[y]$.
So, for any $m+1\le i\le k-1$, $h_i(x)=1$ and 
$G_{i+1}\in\Q[y]$, which is also defined on the whole $\R$.
\end{proof}

\begin{remark}\label{remark:PolyWeakFourierSeq}
Let $G(x,y)\in\Q[x,y]\setminus\{0\}$, $r\in\Q$ and $G_1(r\pi+\arctan{y},y),\ldots,\allowbreak G_k(r\pi+\arctan{y},y)$ be a weak Fourier sequence for $G(r\pi+\arctan{y},y)$ on $\R$ computed by Algorithm \ref{alg:WeakFourierSeq}.
By the correctness proof of Theorem \ref{thm:alg:WeakFourierSeqTC}, $G_i(x,y)\in\Q[y]$ or
\begin{align*}
    G_i(x,y)=a_{in_i}(y)x^{n_i}+\sum_{j=0}^{n_{i}-1}\frac{a_{ij}(y)}{(1+y^2)^{m_{ij}}}x^{j},
\end{align*}
where $n_i=\deg(G_i,x)\ge1$, $a_{in_i}(y)\in\Q[y]~(1\le i\le k)$ and $m_{ij}\in\N~(0\le j\le n_{i}-1)$.
For $i=1,\ldots,k$, let $m_i:=\max(m_{ij},0\le j\le n_i-1)$ and
\begin{align*}
F_i:=
\left\{
\begin{array}{cc}
G_i, & G_i\in\Q[y],\\
(1+y^2)^{m_{i}}G_i,     &  {\rm otherwise}.
\end{array}
\right.
\end{align*}
We have $F_i\in\Q[x,y]$, and $F_i(r\pi+\arctan{y},y)$ has the same signs and real roots as $G_i(r\pi+\arctan{y},y)$ on $\R$.
Thus, we consider $F_i$ instead of $G_i$ in the next subsection.
\end{remark}

\subsection{Isolating the roots}\label{subsec:IsoRoot}
Recall that the goal of Section \ref{sec:IBR} is to isolate the roots of the following equation: 
\begin{align}\label{meq:boudedeq1_9}
    g(x,\tan{x}),~x\in(k\pi-\frac{\pi}{2},k\pi+\frac{\pi}{2}),~{\rm where}~k\in\Z.
\end{align}

\subsubsection{The algorithm}
Given $g(x,y)\in\Q[x,y]\setminus\{0\}$ and $k\in\Z$,
we propose Algorithm \ref{alg:BoundRSI} to isolate the real roots of Eq. \eqref{meq:boudedeq1_9}.

The process of Algorithm \ref{alg:BoundRSI} is as follows.
We first decide whether $k\pi$, $-\frac{\pi}{4}+k\pi$ or $\frac{\pi}{4}+k\pi$ is a root of $g(x,\tan{x})$ by Lemma \ref{lemma:piRealRoot}.
Then, we compute all irreducible factors $g_1,\ldots,g_m$ of $g(x,y)$ in $\Q[x,y]$.
For every $g_i(x,\tan{x})$, we only need to isolate the real roots of it in four open intervals, i.e., $(k\pi-\frac{\pi}{2},k\pi-\frac{\pi}{4})$, $(k\pi-\frac{\pi}{4},k\pi)$, $(k\pi,k\pi+\frac{\pi}{4})$ and $(k\pi+\frac{\pi}{4},k\pi+\frac{\pi}{2})$.
If $g_i$ is a polynomial in $\Q[x]$, it is easy to deal with (there are many methods to isolate the real roots of univariate polynomials).
Otherwise, we have $\deg(g_i,y)\ge1$. 
Let $g_{i}^{(1)}(x,y):=g_{i}(-x,-y)$.
It is easy to check that isolating the real roots of $g_i(x,\tan{x})$ in the first interval (\emph{resp.}, the second interval) is equivalent to isolating those of $g_{i}^{(1)}(x,\tan{x})$ in $(-k\pi+\frac{\pi}{4},-k\pi+\frac{\pi}{2})$ (\emph{resp.}, $(-k\pi,-k\pi+\frac{\pi}{4})$).
Note that these four cases can all be reduced to the problem of isolating the real roots of $G(x,\tan{x})$ on $(r\pi,r\pi+\frac{\pi}{4})$, where $G\in\{g_i,g_{i}^{(1)}\}$ is a rational bivariate irreducible polynomial with $\deg(G,y)\ge1$ and $r\in\{k_0,k_0+\frac{\pi}{4}\}~(k_0\in\Z)$.  
And, Algorithm \ref{alg:SubIBR} can solve such problem, which we will explain in detail in the next subsection. Finally, in Algorithm \ref{alg:AlgDerBound} the isolating intervals obtained from each irreducible factor $g_i$ are refined so that they do not intersect. This is by using Remark \ref{remark:bisect} and the fact that the roots in them are all distinct simple roots (Prop. \ref{prop:multiplicity-one}--\ref{proposition:nocommonroot}).

\begin{algorithm}[!ht]
\scriptsize
\DontPrintSemicolon
\LinesNumbered
\SetKwInOut{Input}{Input}
\SetKwInOut{Output}{Output}
\Input{
$g(x,y)$, a polynomial in $\Q[x,y]\setminus\{0\}$\\
$k$, an integer
}
\Output{$ans$, a set of pairwise disjoint isolating intervals of real roots of $g(x,\tan{x})$ in $(k\pi-\frac{\pi}{2},k\pi+\frac{\pi}{2})$ 
}
\caption{{\bf IsolatingBoundedRoots}}\label{alg:BoundRSI}
\BlankLine
$ans\leftarrow\emptyset$\;
If $g(k\pi,0)=0$, then $ans\leftarrow ans\cup\{[k\pi,k\pi]\}$\;
If $g(-\frac{\pi}{4}+k\pi,-1)=0$, then $ans\leftarrow ans\cup\{[-\frac{\pi}{4}+k\pi,-\frac{\pi}{4}+k\pi]\}$\;
If $g(\frac{\pi}{4}+k\pi,1)=0$, then $ans\leftarrow ans\cup\{[\frac{\pi}{4}+k\pi,\frac{\pi}{4}+k\pi]\}$\;
$g_1,\ldots,g_m\leftarrow$ all irreducible factors of $g(x,y)$ in $\Q[x,y]$\;
\For{$i$ from $1$ to $m$}
{
\eIf{$g_i$ is a polynomial in $\Q[x]$}
{$S\leftarrow$ a set of isolating intervals of real roots of $g_i(x)$ in $(k\pi-\frac{\pi}{2},k\pi+\frac{\pi}{2})\setminus\{k\pi,-\frac{\pi}{4}+k\pi,\frac{\pi}{4}+k\pi\}$\;
$ans\leftarrow ans\cup S$\;
}
{
let $g_{i}^{(1)}\in\Q[x,y]$ such that $g_{i}^{(1)}(x,y)=g_{i}(-x,-y)$\;
$S_1\leftarrow{\bf SubIBR}(g_{i}^{(1)},-k+\frac{1}{4})$\;
$S_2\leftarrow{\bf SubIBR}(g_{i}^{(1)},-k)$\;
$S_3\leftarrow{\bf SubIBR}(g_{i},k)$\;
$S_4\leftarrow{\bf SubIBR}(g_{i},k+\frac{1}{4})$\;
$ans\leftarrow ans\cup\{(-b,-a)\mid (a,b)\in S_1\cup S_2\}\cup S_3\cup S_4$\;
}
}
Refine the intervals obtained in the loop above such that they do not intersect.\;
\Return{$ans$}\;
\end{algorithm}

\subsubsection{Sub-algorithms}
Let $g(x,y)$ be an irreducible polynomial in $\Q[x,y]$ with $\deg(g,y)\ge1$ and $r\in\{k,k+\frac{1}{4}\}$ where $k\in\Z$. 
According to the frame of \cite[Algorithm 47]{strzebonski2012real}, Algorithm \ref{alg:SubIBR} is proposed to isolate the real roots of $g(x,\tan{x})$ in $(r\pi,r\pi+\frac{\pi}{4})$.
It is worth noting that \cite[Algorithm 47]{strzebonski2012real} and Algorithm \ref{alg:SubIBR} are not the same.
Some steps are added, and some steps are skipped or implemented in another way.

The algorithm \cite[Algorithm 47]{strzebonski2012real} can be used to isolate the real roots of an exp-log-arctan function $w(y)$ in an open interval $(a,b)$, where $a,b$ are rational numbers and $(a,b)$ are contained in the domain of $w(y)$.
Let us review the frame of the algorithm.
We need to compute a weak Fourier sequence $w_1,\ldots,w_k$ for $w(y)$ on $(a,b)$ first.
Let $ans$ be a set of computed isolating intervals of $w(y)$ (initialized as $\emptyset$), and $stack$ be a stack of triples $(c,d,m)$ such that $c<d$ are rational numbers and $w_m$ has a constant nonzero sign in $(c,d)$ (initialized as $(a,b,k)$).
In each loop step, we pick a triple $(c,d,m)$ from $stack$ and compute the difference value $\sgc_{1,m}(c^{+})-\sgc_{1,m}(d^{-})$ by \cite[Algorithm 44]{strzebonski2012real}, 
until $stack$ is empty.
\begin{enumerate}
    \item If the value equals $0$, by Theorem \ref{thm:WeakFourierSeq}, $w(y)$ has no roots on $(c,d)$. So, we continue for the next loop step. 
    \item If the value equals $1$, by Theorem \ref{thm:WeakFourierSeq}, $w(y)$ has exactly one simple root on $(c,d)$. We add $(c,d)$ to $ans$.
    \item Otherwise, let $\ell$ be the smallest element of $\{i\mid 2\le i\le m \;{\rm and}\; \sgc_{i,m}(c^{+})\allowbreak-\sgc_{i,m}(d^{-})=1\}$.
    By Theorem \ref{thm:WeakFourierSeq}, $w_{\ell}(y)$ has exactly one simple root $sr$ on $(c,d)$.
    \begin{enumerate}
        \item If $sr$ is a root of $w(y)$, find two rational numbers $u,v$ such that $c<u<sr<v<d$ and $w(y)$ has only one root $sr$ in $[u,v]$
    (recall that by Lemma \ref{lemma:NoLimit}, we can find such $u,v$).
    Add $(u,v)$ to $ans$, and add $(c,u,\ell)$, $(v,d,\ell)$ to $stack$.
    \item If $sr$ is not a root of $w(y)$. Find two rational numbers $u,v$ such that $c<u<sr<v<d$ and $w(y)$ has no roots in $[u,v]$.     
    Add $(c,u,\ell)$ and $(v,d,\ell)$ to $stack$.
    \end{enumerate}
\end{enumerate}

Before we explain Algorithm \ref{alg:SubIBR}, we first define a map $\varphi$ and introduce Lemma \ref{lemma:mappingirre} and Lemma \ref{lemma:NoCommonRoot}.
The map $\varphi:\;{\mathbb Q}[x,y]\to{\mathbb Q}[x,y_1]$  is defined as
\begin{align}\label{eq:def-map1}
\varphi(g(x,y))=g(x,\frac{1+y_1}{1-y_1})\cdot(1-y_1)^{\deg(g,y)}.
\end{align}

\begin{lemma}\label{lemma:mappingirre}
If $g$ is an irreducible polynomial in $\Q[x,y]\setminus\{c(y+1)\mid c\ne0\}$, then $\varphi(g)$ is an irreducible polynomial in $\Q[x,y_1]$.
\end{lemma}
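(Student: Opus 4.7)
The plan is to construct an explicit ``inverse'' map for $\varphi$ at the level of bivariate polynomials and to transport any non-trivial factorization of $\varphi(g)$ back to $g$. Define $\psi \colon \Q[x, y_1] \to \Q[x, y]$ by
\[
\psi(h)(x, y) := h\!\left(x, \frac{y-1}{y+1}\right) (y+1)^{\deg(h, y_1)}.
\]
Because $\Q[x, y_1]$ is an integral domain, $\deg(pq, y_1) = \deg(p, y_1) + \deg(q, y_1)$, so a direct substitution shows that $\psi$ is multiplicative on nonzero polynomials.

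First I would verify two identities under the hypothesis. Since $g$ is irreducible in $\Q[x,y]$ and $g \ne c(y+1)$ for any nonzero $c \in \Q$, the factor $y+1$ does not divide $g$, so $g(x, -1)$ is a nonzero polynomial in $\Q[x]$. Writing $g = \sum_{i=0}^d a_i(x) y^i$ with $d = \deg(g,y)$ and expanding
\[
\varphi(g) = \sum_{i=0}^d a_i(x) (1+y_1)^i (1-y_1)^{d-i},
\]
one reads off that $\deg(\varphi(g), y_1) = d$ and $\varphi(g)(x, 1) = 2^d\,\lc(g, y) \neq 0$, whence $(y_1 - 1) \nmid \varphi(g)$. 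A symmetric M\"obius computation, using that $(1+y_1)/(1-y_1)$ evaluated at $y_1 = (y-1)/(y+1)$ equals $y$, yields the reciprocity $\psi(\varphi(g)) = 2^d\,g$.

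Suppose, for contradiction, that $\varphi(g) = p\,q$ with both $p, q$ non-units in $\Q[x, y_1]$. Applying $\psi$ gives $2^d g = \psi(p)\,\psi(q)$, and since $g$ is irreducible in $\Q[x, y]$, one of the factors, say $\psi(q)$, must be a nonzero rational constant. I will then show by induction on $D := \deg(q, y_1)$ that this forces $q = c'(y_1 - 1)^D$ for some nonzero $c'\in\Q$: writing $q = \sum_j b_j(x) y_1^j$, the coefficient of $y^D$ in $\psi(q)$ equals $q(x,1)$, which must vanish whenever $D \ge 1$ in order for $\psi(q)$ to be constant, and dividing out the resulting factor $(y_1 - 1)$ reduces $D$ while leaving $\psi$ of the quotient still in $\Q\setminus\{0\}$. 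When $D \ge 1$, this yields $(y_1 - 1) \mid \varphi(g)$, contradicting $\varphi(g)(x, 1) \neq 0$; when $D = 0$, $q$ is itself a unit, contradicting the non-triviality of the factorization. Either way, $\varphi(g)$ is irreducible.

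The main obstacle is keeping track of how the pole at $y_1 = 1$ of the M\"obius substitution $y = (1+y_1)/(1-y_1)$ can distort irreducibility: any factor of $\varphi(g)$ could, a priori, hide inside some power of $(y_1 - 1)$ that is cleared upon application of $\psi$. The induction above pins down powers of $(y_1 - 1)$ as the only such obstruction, and the non-vanishing of $\lc(g, y)$, together with the hypothesis $g \ne c(y+1)$, rules them out.
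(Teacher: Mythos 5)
Your proof is correct, and it takes a genuinely different route from the paper's. The paper observes the identity $\varphi(g)=\mathcal{R}\bigl(g(x,2y-1)\bigr)\big|_{y=1-y_1}$, thereby factoring the M\"obius substitution into a composition of two linear changes of variable (which trivially preserve irreducibility) sandwiched around the reciprocal map $\mathcal{R}$, and then simply invokes Proposition~\ref{proposition:RkeepIrr} (already proven by induction on total degree) for the $\mathcal{R}$ piece. You instead work with the full M\"obius map directly: you build the explicit quasi-inverse $\psi$ satisfying $\psi\circ\varphi = 2^d\,\mathrm{id}$ and observe that $\psi$ is multiplicative, so any non-trivial factorisation of $\varphi(g)$ transports to one of $2^d g$; the crux is then to show that the only polynomials $q$ that $\psi$ can collapse to a unit are powers of $(y_1-1)$, which you rule out via $\varphi(g)(x,1)=2^d\lc(g,y)\neq 0$. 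The paper's argument buys brevity and modularity by reusing its existing machinery for $\mathcal{R}$; yours is self-contained, makes the ``involution'' nature of the M\"obius map completely explicit, and isolates cleanly what the hypothesis $g\neq c(y+1)$ is actually protecting against (namely, $(y_1-1)$ dividing $\varphi(g)$, which would break the degree count $\deg(\varphi(g),y_1)=d$ and let a factor hide in the denominator cleared by $\psi$). One small point worth flagging if you polish this: you should note that $\deg(\varphi(g),y_1)=d$ follows from the leading $y_1^d$-coefficient of $\varphi(g)$ being $(-1)^d g(x,-1)\neq 0$, since that degree equality is what makes the clearing factor $(y+1)^{\deg(\varphi(g),y_1)}$ in $\psi$ match the $(y+1)^d$ produced by the substitution and so yields $\psi(\varphi(g))=2^d g$ exactly.
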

\begin{proof}
Since $\deg(g(x,2y-1),y)=\deg(g(x,y),y)$, $\varphi(g)=\mathcal{R}(g(x,2y-1))|_{y=1-y_1}$ with $\mathcal{R}$ in Definition \ref{definition:recip}. Now that $g(x,y)\neq c(y+1)$ for any nonzero $c\in\Q$, $g(x,2y-1)\neq cy$ for any nonzero $c\in\Q$. Moreover, $g(x,2y-1)$ is irreducible since $g(x,y)$ is. By Proposition \ref{proposition:RkeepIrr}, $\mathcal{R}(g(x,2y-1))$ is irreducible. Thus so is $\varphi(g)=\mathcal{R}(g(x,2y-1))|_{y=1-y_1}$.
\end{proof}
The following lemma is similar to \cite[Lemma 3.2]{chen2022automated} and \cite[Theorem 7]{mccallum2012deciding}.
\begin{lemma}\label{lemma:NoCommonRoot}
Let $F_1,F_2\in\Q[x,y]$, $k\in\Z$ and $r\in\{k,k+\frac{1}{4}\}$.
If ${\gcd}(F_1,F_2)=1$, then $F_1(r\pi+\arctan{y},y)$ and $F_2(r\pi+\arctan{y},y)$ have no common real roots in $(0,1)$.
\end{lemma}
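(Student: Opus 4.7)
The plan is to assume there is a common root $y_0\in(0,1)$, set $x_0:=r\pi+\arctan y_0$, and extract from that assumption a pair $(x_0,\tan x_0)$ of nonzero algebraic numbers, which will contradict Theorem \ref{thm:BothAlgeNum}. The coprimality of $F_1$ and $F_2$ will be used (via resultants) to pin down the algebraicity of both coordinates, while the explicit form $r\in\{k,k+\frac{1}{4}\}$ will yield $\tan x_0$ in closed form as a rational expression in $y_0$.

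The first step is to suppose for contradiction that $y_0\in(0,1)$ is a common real root, so that $F_1(x_0,y_0)=F_2(x_0,y_0)=0$ with $x_0:=r\pi+\arctan y_0$. Since $\gcd(F_1,F_2)=1$ in $\Q[x,y]$, both $\res(F_1,F_2,x)\in\Q[y]$ and $\res(F_1,F_2,y)\in\Q[x]$ are nonzero. The standard specialisation identity for resultants then forces $y_0$ either to be a root of the nonzero polynomial $\res(F_1,F_2,x)$, or to annihilate the leading $x$-coefficient of some $F_i$ (itself a nonzero element of $\Q[y]$ when $\deg_x F_i\geq 1$); in either case $y_0$ is algebraic. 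The symmetric argument applied to $\res(F_1,F_2,y)$ shows that $x_0$ is algebraic.

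The second step is to put $\tan x_0$ in closed form. Since $y_0\in(0,1)$, $\arctan y_0\in(0,\pi/4)$. If $r=k\in\Z$, then $\tan x_0=\tan(\arctan y_0)=y_0$; if $r=k+\frac{1}{4}$, then the tangent-addition formula gives $\tan x_0=\tan(\pi/4+\arctan y_0)=\frac{1+y_0}{1-y_0}$. In both cases $\tan x_0$ is algebraic. Moreover $x_0\ne 0$: equality would force $\arctan y_0$ to be $-k\pi$ or $-k\pi-\pi/4$, neither of which lies in $(0,\pi/4)$ for any $k\in\Z$. Thus $(x_0,\tan x_0)$ is a pair of algebraic numbers with $x_0\ne 0$, contradicting Theorem \ref{thm:BothAlgeNum}.

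The main obstacle is really only the bookkeeping around the resultant specialisation when the leading $x$-coefficient of some $F_i$ vanishes at $y_0$. This case, however, only strengthens the algebraicity conclusion rather than obstructing it: the leading coefficient is a nonzero element of $\Q[y]$ (and in the degenerate case $\deg_x F_i=0$, one has $F_i\in\Q[y]$ and $F_i(y_0)=0$ directly). All the transcendence input is packaged inside Theorem \ref{thm:BothAlgeNum}, so no further analytic work is required.
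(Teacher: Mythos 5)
Your proof is correct and follows essentially the same route as the paper: a common root in $(0,1)$ forces both $y_0$ and $x_0=r\pi+\arctan y_0$ to be algebraic, and the tangent-addition computation ($\tan x_0=y_0$ or $\tfrac{1+y_0}{1-y_0}$, with $x_0\neq 0$) then contradicts Theorem \ref{thm:BothAlgeNum}. The only difference is presentational: you write out the resultant/specialisation argument explicitly, whereas the paper imports that algebraicity step from the proof of Theorem 7 of McCallum and Weispfenning and packages the transcendence step as Claim \ref{claim:atleastonetrans}.
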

\begin{proof}
Replacing those ``$p,q$" in the proof of \cite[Theorem 7]{mccallum2012deciding} by $F_1$ and $F_2$ in this lemma, those ``$f^*(x),g^*(x)$" therein by $F_1(r\pi+\arctan{y},y)$ and $F_2(r\pi+\arctan{y},y)$, and the function ``\text{trans}" therein by the function $r\pi+\arctan$, we see that: If $\beta\in(0,1)$ is a common root of $F_1(r\pi+\arctan{y},y)$ and $F_2(r\pi+\arctan{y},y)$, then both $\beta$ and $\alpha=r\pi+\arctan{\beta}$ would be algebraic, which contradicts Claim \ref{claim:atleastonetrans} below.
\end{proof}
\begin{claim}\label{claim:atleastonetrans}
For any $y\in(0,1)$, at least one of $r\pi+\arctan{y}$ and $y$ is transcendental. 
\end{claim}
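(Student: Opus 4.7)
The plan is to prove the contrapositive: assume both $y \in (0,1)$ and $\alpha := r\pi + \arctan y$ are algebraic, and derive a contradiction using Theorem \ref{thm:BothAlgeNum} applied to the pair $(\alpha, \tan \alpha)$.

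First, I would rule out the degenerate case $\alpha = 0$. Since $y \in (0,1)$, we have $\arctan y \in (0, \pi/4)$, so $\alpha = 0$ would force $-r\pi \in (0, \pi/4)$, i.e., $r \in (-1/4, 0)$. Checking the two possible forms $r = k$ and $r = k + 1/4$ with $k \in \mathbb{Z}$, neither can lie in $(-1/4, 0)$, so $\alpha \ne 0$.

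Next, the crux is to compute $\tan \alpha$ explicitly and show it is algebraic whenever $y$ is. In the case $r = k \in \mathbb{Z}$, periodicity of $\tan$ gives $\tan \alpha = \tan(\arctan y) = y$, which is algebraic by assumption. In the case $r = k + 1/4$, periodicity reduces us to $\tan \alpha = \tan(\pi/4 + \arctan y)$, and the tangent addition formula yields
\[
\tan \alpha = \frac{1 + y}{1 - y},
\]
well-defined because $y \in (0,1)$, and algebraic as a rational function of an algebraic number. In either case, both $\alpha$ and $\tan \alpha$ are algebraic with $\alpha \ne 0$, contradicting Theorem \ref{thm:BothAlgeNum}.

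There is no serious obstacle: the argument is essentially an application of the Lindemann–type result quoted as Theorem \ref{thm:BothAlgeNum}, and the only points requiring care are (i) handling $\alpha = 0$ separately, which is elementary once one uses $y \in (0,1)$ to pin down $\arctan y \in (0, \pi/4)$, and (ii) being explicit about the two forms of $r$ so that $\tan \alpha$ simplifies cleanly via tangent's $\pi$-periodicity and the addition formula at $\pi/4$. The whole claim will be a short paragraph.
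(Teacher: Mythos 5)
Your proposal is correct and follows essentially the same route as the paper: compute $\tan(r\pi+\arctan y)$ explicitly as $y$ or $(1+y)/(1-y)$, note it is algebraic when $y$ is, and invoke Theorem \ref{thm:BothAlgeNum} for the nonzero argument $r\pi+\arctan y$. The only cosmetic difference is how nonvanishing of $r\pi+\arctan y$ is justified (you use the range of $\arctan y$ and the form of $r$, while the paper simply observes that $\tan(r\pi+\arctan y)\neq 0$), which changes nothing essential.
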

\begin{proof}
Note that \[
\tan(r\pi+\arctan y)=\left\{
\begin{array}{ccl}\vspace{2mm}
     y,& \text{ if } &r=k\in\mathbb{Z},\\
     \frac{1+y}{1-y},& \text{ if }  &r=k+\frac{1}{4}\in\mathbb{Z}+\frac{1}{4}.
\end{array}
\right.
\]
Suppose $y$ is algebraic, then $\tan(r\pi+\arctan y)$ is also algebraic in either case. Since neither $y$ nor $(1+y)/(1-y)$ is zero, $r\pi+\arctan y\neq0$. Moreover, it is transcendental by Theorem \ref{thm:BothAlgeNum}.
\end{proof}
\begin{remark}\label{remark:bisect}
Suppose $h(x_1,x_2)=\sum_{i=0}^{\deg(h,x_1)}c_i(x_2)x_1^i\in\mathbb{Q}[x_1,x_2]$. For any $y\in(0,1)\cap\mathbb{Q}$, $r\pi+\arctan y$ is transcendental. Thus $h(r\pi+\arctan y,y)=0$ iff $c_i(y)=0$ for all $i$. We can therefore decide whether $h(r\pi+\arctan y,y)$ is zero for given $y\in(0,1)\cap\mathbb{Q}$. If it is nonzero, we can further compute its sign by approximating it more accurately. More generally, if $\phi$ is a real function $($\emph{e.g.}, $\phi=\tan)$ such that $\phi(y)$ is transcendental for any nonzero rational $y$ in its domain, then we can decide whether $h(\phi(y),y)$ is zero, and compute its sign whenever it is not zero.
\end{remark}

Recall that our goal is to isolate the real roots of $g(x,\tan{x})$ in $(r\pi,r\pi+\frac{\pi}{4})$, where $g$ is an irreducible polynomial in $\Q[x,y]$ with $\deg(g,y)\ge1$ and $r\in\{k,k+\frac{1}{4}\}~(k\in\Z)$. 
The process of Algorithm \ref{alg:SubIBR} is as follows.
If $g=y+1$, then $\tan{x}+1$ has no roots on $(r\pi,r\pi+\frac{\pi}{4})$, and thus return $\emptyset$.
Otherwise, let $F_1:=g$ when $r=k$, and $F_1:=\varphi(g)$ when $r=k+\frac{1}{4}$.
It is not difficult to check that our goal is equivalent to solving $F_1(x+r\pi,\tan{x})$ on $(0,\frac{\pi}{4})$.
Let $y:=\tan{x}$.
We only need to solve $F_1(r\pi+\arctan{y},y)$ on $(0,1)$.
Recall Remark \ref{remark:PolyWeakFourierSeq}.
We can compute a weak Fourier sequence
\begin{align}\label{eq:ComputeFourierSeq}
    w_1=F_1(r\pi+\arctan{y},y),\ldots,w_k=F_k(r\pi+\arctan{y},y)~{\rm on}~(0,1),
\end{align}
where $F_i\in\Q[x,y]$.
Thus, the frame of \cite[Algorithm 47]{strzebonski2012real} can be used.
However, there are two main differences between \cite[Algorithm 47]{strzebonski2012real} and Algorithm \ref{alg:SubIBR}.

First, when $w_{\ell}$ has exactly one simple root $sr$ on $(c,d)$ (see Line \ref{algline:SubIBR-sr}),
we propose a new method to decide whether $sr$ is a root of $w_1$. 
If $F_{\ell}=QF_1~(Q\in\Q[y])$ holds and $Q$ has no roots on $(c,d)$, by \eqref{eq:ComputeFourierSeq}, $sr$ is the only root of $w_1$ on $(c,d)$.
If $F_{\ell}=QF_1$ holds and $Q$ has roots on $(c,d)$, then it is clear that $w_1$ has no roots on $(c,d)$.
If $F_{\ell}=QF_1$ does not hold, then $sr$ is not a root of $w_1$.
This method is correct: By Lemma \ref{lemma:mappingirre}, $F_1$ is an irreducible polynomial.
If $F_1\mid F_{\ell}$, we have $\deg(F_1,x)\le\deg(F_{\ell},x)$.
Then, by Lemma \ref{lemma:pseudoproperty}, $\deg(F_1,x)=\deg(F_{\ell},x)$.
So, $Q$ is a polynomial in $\Q[y]$.
Otherwise, ${\gcd}(F_1,F_{\ell})=1$. Then, by Lemma \ref{lemma:NoCommonRoot}, $sr$ is not a root of $w_1$.

Second, if $sr$ is not a root of $w_1$, we find two rational numbers $u,v$ such that $c\leq u<sr<v\leq d$ and $w_1$ has no roots in $[u,v]$ by Algorithm \ref{alg:RationalEndpoint} (see Line \ref{algline:SubIBR-RationalEP} of Algorithm \ref{alg:SubIBR}), instead of by \cite[Algorithm 45]{strzebonski2012real}. The process of Algorithm \ref{alg:RationalEndpoint} is as follows: Let $n:=1$.
While $n\ge1$, we compute an upper limit polynomial $W_{\max,n}\in\Q[y]$ and a lower limit polynomial $W_{\min,n}\in\Q[y]$ for $w_1(y)$ by two steps.
Since $w_1=F_1(r\pi+\arctan{y},y)$, we first estimate $\arctan{y}$ through Taylor series, and $T_{\max}(n,w_1),T_{\min}(n,w_1)\in\Q[\pi][y]$ are computed by \cite[Def. 3.1]{chen2020automated}.
Then, we use a bounded interval with rational endpoints to estimate $\pi$ in $T_{\max}(n,w_1)$ and $T_{\min}(n,w_1)$, and the polynomials $W_{\max,n},W_{\min,n}\in\Q[y]$ are computed by \cite[Def. 3.1]{xia2006real}.
Let $q:=\frac{c+d}{2}$.
If $W_{\max,n}$ has no roots on $[c,d]$ and $W_{\max,n}(q)<0$, or if $W_{\min,n}$ has no roots on $[c,d]$ and $W_{\min,n}(q)>0$, then $w_1$ has no roots on $[c,d]$.
Otherwise, increase $n$ by $1$.
Since $w_{\ell}$ has exactly one simple root $sr$ on $(c,d)$, we use a bisection method based on Remark \ref{remark:bisect} to find the half open interval where $sr$ is.
Update $(c,d)$ as the half interval and go for the next loop step.
Remark that if the interval $(c,d)$ is small enough and the upper (lower) limit polynomial is close enough, Algorithm \ref{alg:RationalEndpoint} terminates by Lemma \ref{lemma:NoLimit}.

\begin{algorithm}[!t]
\scriptsize
\DontPrintSemicolon
\LinesNumbered
\SetKwInOut{Input}{Input}
\SetKwInOut{Output}{Output}
\Input{
$g(x,y)$, an irreducible polynomial in $\Q[x,y]$ with $\deg(g,y)\ge1$\\
$r$, $r\in\{k,k+\frac{1}{4}\}$ where $k\in\Z$
}
\Output{$ans$, a set of isolating intervals of real roots of $g(x,\tan{x})$ in $(r\pi,r\pi+\frac{\pi}{4})$ 
}
\caption{{\bf SubIBR} 
}\label{alg:SubIBR}
\BlankLine
{\bf if} $g=y+1$, {\bf then} {\bf return} $\emptyset$\;
\eIf{$r=k$}
{$F_1\leftarrow g$}
{$F_1\leftarrow g(x,\frac{1+y}{1-y})\cdot(1-y)^{\deg(g,y)}$\;}
let $w_1=F_1(r\pi+\arctan{y},y),\ldots,w_k=F_k(r\pi+\arctan{y},y)$, where $F_i\in\Q[x,y]$, be the sequence in Remark \ref{remark:PolyWeakFourierSeq}\label{algline:SubIBR-weakFseq}\;
$ans\leftarrow\emptyset$\;
$stack\leftarrow\{(0,1,k)\}$\;
\While{$stack\ne\emptyset$}
{
$(c,d,m)\leftarrow First(stack),\;stack\leftarrow Rest(stack)$\;
$sgc\leftarrow\sgc_{1,m}(c^{+})-\sgc_{1,m}(d^{-})$\;
{\bf if} $sgc=0$, {\bf then} continue\;
\If{$sgc=1$}{$ans\leftarrow ans\cup\{(r\pi+\arctan{c},r\pi+\arctan{d})\}$\; continue\;}
$\ell\leftarrow\min\;\{i\mid 2\le i\le m,~\sgc_{i,m}(c^{+})-\sgc_{i,m}(d^{-})=1\}$\;
$sr\leftarrow\Root(w_{\ell},c,d)$\label{algline:SubIBR-sr}\;
By Theorem \ref{thm:WeakFourierSeq}, $w_{\ell}$ has exactly one simple root $sr$ in $(c,d)$.
We use the notation $\Root(w_{\ell},c,d)$ to represent the only simple root of $w_{\ell}$ in $(c,d)$.\;
\eIf{$F_{\ell}=QF_1$ where $Q\in\Q[y]$\label{algline:SubIBR-wheHold}}
{
\eIf{$Q$ has no real roots on $(c,d)$}
{$ans\leftarrow ans\cup\{(r\pi+\arctan{c},r\pi+\arctan{d})\}$}
{continue}
}
{
$u,v\leftarrow{\bf RationalEndpoint}(w_1,w_{\ell},sr)$\label{algline:SubIBR-RationalEP}\;
add $(c,u,\ell)$ and $(v,d,\ell)$ to $stack$\;
}
}
\Return{$ans$}
\end{algorithm}

\begin{algorithm}[htbp]
\scriptsize
\DontPrintSemicolon
\LinesNumbered
\SetKwInOut{Input}{Input}
\SetKwInOut{Output}{Output}
\Input{
    $w_1(y)=F_1(r\pi+\arctan{y},y)$, where $F_1(x,y)\in\Q[x,y]$, $r\in\Q$\\
    $w_{\ell}(y)=F_{\ell}(r\pi+\arctan{y},y)$, where $F_{\ell}(x,y)\in\Q[x,y]$\\
    $sr=\Root(w_{\ell},c,d)$, $w_{\ell}$ has exactly one simple root $sr$ in $(c,d)$ and $w_{1}(sr)\ne0$
}
\Output{two rational numbers $u$ and $v$ $(c\leq u<sr<v\leq d)$ such that $w_{1}$ has no roots on $[u,v]$
}
\caption{{\bf RationalEndpoint}}\label{alg:RationalEndpoint}
\BlankLine
$n\leftarrow1$\;
\While{$n\ge1$}{
Compute an upper limit polynomial $W_{\max,n}\in\Q[y]$ and a lower limit polynomial $W_{\min,n}\in\Q[y]$ for $w_1(y)$\;
$q\leftarrow\frac{c+d}{2}$\;
\eIf{$(W_{\max,n}$ has no roots on $[c,d]$ and $W_{\max,n}(q)<0)$
{\bf or} \\$\quad\,(W_{\min,n}$ has no roots on $[c,d]$ and $W_{\min,n}(q)>0)$
}
{\Return{$c,d$}}
{
 $n\leftarrow n+1$\;
\uIf{$w_{\ell}(q)=0$}
{$c\leftarrow \frac{63q+c}{64},\;d\leftarrow\frac{63q+d}{64}$}
\uElseIf{$\sign(w_{\ell}(c)w_{\ell}(q))=-1$}
{$d\leftarrow q$}
\Else{$c\leftarrow q$}
}
}
\end{algorithm}

\section{The Whole Isolation Algorithm: Taking Root Multiplicity into Account}\label{sec:wholealgorithm}

In this section, we shall combine the results in previous sections to give an algorithm to isolate all real roots of an MTP. Furthermore, the multiplicity of each root is also computed by the algorithm. To do this, we shall first introduce a proposition on root multiplicity.
\begin{proposition}\label{prop:multiplicity-one}
    Let $g\in \mathbb{Q}[x,y]\backslash \mathbb{Q}$ be irreducible. Then except $x=0$, all real roots of $g(x,\tan x)$ in its domain are simple roots.
\end{proposition}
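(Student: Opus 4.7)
The plan is to exploit the chain-rule identity
\[
\frac{d}{dx}g(x,\tan x) \;=\; g_x(x,\tan x)+(1+\tan^2 x)\,g_y(x,\tan x) \;=\; p(x,\tan x),
\]
where $p(x,y):=g_x(x,y)+(1+y^2)g_y(x,y)\in\mathbb{Q}[x,y]$. Thus a nonzero $\alpha$ in the domain is a multiple root of $g(x,\tan x)$ iff $(\alpha,\tan\alpha)$ is a common real zero of $g$ and $p$. I would first dispose of the degenerate cases $g\in\mathbb{Q}[x]$ and $g\in\mathbb{Q}[y]$: an irreducible univariate rational polynomial has no multiple roots in characteristic zero, so neither $g(x)$ nor $g(\tan x)$ does (using also that $\sec^2 x\ne 0$ in the latter case). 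I may therefore assume $\deg_x g,\deg_y g\ge 1$. Irreducibility of $g$ in $\mathbb{Q}[x,y]$ then splits the problem into two cases: either $g\nmid p$, or $g\mid p$.

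In the case $g\nmid p$, since $g$ is irreducible in $\mathbb{Q}[x,y]$ and $\deg_y g\ge 1$, $g$ is primitive in $\mathbb{Q}(x)[y]$ and hence irreducible there by Gauss's lemma, so $\gcd(g,p)=1$ in $\mathbb{Q}(x)[y]$. Bezout followed by clearing denominators yields $A,B\in\mathbb{Q}[x,y]$ and a nonzero $R\in\mathbb{Q}[x]$ with $Ag+Bp=R$. Evaluating at $(\alpha,\tan\alpha)$ forces $R(\alpha)=0$, so $\alpha$ is algebraic. The polynomial $g(\alpha,y)$ is nonzero (else the minimal polynomial of $\alpha$ over $\mathbb{Q}$ would divide every $y$-coefficient of $g$ and hence divide $g$, contradicting irreducibility together with $\deg_y g\ge 1$), so $\tan\alpha$ is algebraic as one of its roots. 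Both $\alpha$ and $\tan\alpha$ algebraic with $\alpha\ne 0$ directly contradicts Theorem~\ref{thm:BothAlgeNum}.

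The case $g\mid p$ is the main obstacle. Write $p=q\cdot g$ with $q\in\mathbb{Q}[x,y]$. Then on the connected component $I=(k\pi-\pi/2,\,k\pi+\pi/2)$ of the domain containing $\alpha$, the real-analytic function $h(x):=g(x,\tan x)$ satisfies the first-order linear ODE $h'(x)=q(x,\tan x)\,h(x)$. Uniqueness of solutions forces $h\equiv 0$ on $I$ from $h(\alpha)=0$. Substituting $s=x-k\pi$ and using $\tan(s+k\pi)=\tan s$, and writing $g(x,y)=\sum_{j=0}^n a_j(x)y^j$, we obtain $\sum_j a_j(s+k\pi)\tan^j s\equiv 0$ on $(-\pi/2,\pi/2)$. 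Multiplying through by $\cos^n s$ yields the MTP-type identity $N(s,\sin s,\cos s)\equiv 0$ with $N(s,u,v)=\sum_j a_j(s+k\pi)\,u^j v^{n-j}\in\mathbb{R}[s,u,v]$; the proof of Lemma~\ref{lemma:NoLimit} uses only the algebraic independence of $z,\sin z,\cos z$ as entire functions on $\mathbb{C}$, hence applies over $\mathbb{R}$-coefficients verbatim and forces $N\equiv 0$. Therefore each $a_j(s+k\pi)$ vanishes identically in $s$, so $a_j\equiv 0$ and $g\equiv 0$—contradicting irreducibility. The subtle point of this last case is really that $g$ being a ``Darboux polynomial'' of the vector field $\partial_x+(1+y^2)\partial_y$ is incompatible with $g(x,\tan x)$ having any real zero at all; the ODE step followed by Lemma~\ref{lemma:NoLimit} is the cleanest way to get this.
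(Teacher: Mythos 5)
Your proof takes a genuinely different route from the paper's: the paper disposes of this proposition by citing \cite[Theorem 2.1]{chen2020automated} and observing that its proof already covers all nonzero roots in the domain, whereas you give a self-contained argument. Your reduction of ``multiple root'' to the common vanishing of $g$ and $p=g_x+(1+y^2)g_y$ at $(\alpha,\tan\alpha)$, the univariate cases, and the case $g\nmid p$ (Gauss's lemma, a B\'ezout/resultant identity $Ag+Bp=R(x)$, algebraicity of $\alpha$ and then of $\tan\alpha$, contradiction with Theorem \ref{thm:BothAlgeNum} since $\alpha\ne0$) are all correct; this is the same transcendence mechanism the paper uses elsewhere (Lemma \ref{lemma:NoCommonRoot}, Proposition \ref{proposition:nocommonroot}), and your explicit treatment of the Darboux case $g\mid p$ is a point the citation hides.

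However, the last step of the case $g\mid p$ has a genuine gap. You claim $N(s,\sin s,\cos s)\equiv 0$ forces $N\equiv 0$ because ``the proof of Lemma \ref{lemma:NoLimit} uses only the algebraic independence of $z,\sin z,\cos z$''. But $z,\sin z,\cos z$ are \emph{not} algebraically independent: $u^2+v^2-1$ is a nonzero polynomial annihilated by the substitution, so the general principle you invoke is false (and, read literally, Lemma \ref{lemma:NoLimit} itself needs the hypothesis that the MTP, not merely the polynomial, is nonzero). The conclusion you need is still true here and can be repaired in either of two ways: (i) your $N$ is homogeneous in $(u,v)$ of degree $n$, and no nonzero $(u,v)$-homogeneous polynomial can be a multiple of $u^2+v^2-1$ (compare its top and bottom $(u,v)$-homogeneous components), while the ideal of relations of $(s,\sin s,\cos s)$ over $\mathbb{R}$ is exactly the one generated by $u^2+v^2-1$; or, more directly, (ii) skip the passage to sine and cosine altogether: $\tan s$ is transcendental over $\mathbb{C}(s)$ (it has infinitely many poles, whereas a function algebraic over $\mathbb{C}(s)$ has finitely many), so the identity $\sum_j a_j(s+k\pi)\tan^j s\equiv 0$ on $(-\pi/2,\pi/2)$, extended by the identity theorem, already forces every coefficient $a_j$ to vanish, i.e.\ $g=0$, the desired contradiction. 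With that repair your proof is complete.
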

\begin{proof}
If $g\in\Q[x]\cup\Q[y]$, the conclusion is trivially true. When $g$ is bivariate the conclusion follows from \cite[Theorem 2.1]{chen2020automated}. Although the theorem only claims that any root of $g(x,\tan x)$ in $(0,\frac{\pi}{4}]$ is simple, its proof actually shows that any root in $(\mathbb{R}\backslash\{0\})\backslash(\mathbb{Z}+\frac{1}{2})\pi$ of $g(x,\tan x)$ is simple, which is just what we want.
\end{proof}
The following observation is similar to Lemma \ref{lemma:NoCommonRoot}. The proof given below is according to the proof of \cite[Theorem 7]{mccallum2012deciding}.

\begin{proposition}\label{proposition:nocommonroot}
Suppose that $p,q\in\mathbb{Q}[x,y]$ are co-prime, then there is no common root of $p(x,\tan x)$ and $q(x,\tan x)$ except $x=0$.
\end{proposition}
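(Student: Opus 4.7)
The plan is to assume for contradiction that some $\alpha\in\mathbb{R}\setminus\{0\}$ satisfies $p(\alpha,\tan\alpha)=q(\alpha,\tan\alpha)=0$ and to combine Theorem \ref{thm:BothAlgeNum} with a coprimality argument to derive a contradiction. The strategy parallels that used in Lemma \ref{lemma:NoCommonRoot} (and the proof of \cite[Theorem 7]{mccallum2012deciding}), but without the $r\pi+\arctan$ preprocessing.

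First I would establish that $\alpha$ is algebraic over $\mathbb{Q}$. If either $p$ or $q$ lies in $\mathbb{Q}[x]$ then this is immediate from $p(\alpha)=0$ or $q(\alpha)=0$ (the subcase where both lie in $\mathbb{Q}[x]$ is impossible, since coprime univariate polynomials share no root in $\overline{\mathbb{Q}}$). In the remaining case $\deg(p,y)\ge 1$ and $\deg(q,y)\ge 1$, and one considers the resultant $R(x):=\res(p,q,y)\in\mathbb{Q}[x]$. Since $p,q$ are coprime in $\mathbb{Q}[x,y]$, they are coprime in $\mathbb{Q}(x)[y]$ by Gauss's lemma, so $R\not\equiv 0$. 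The standard fact that $R\in(p,q)\subset\mathbb{Q}[x,y]$ (as $R=Ap+Bq$ for some $A,B\in\mathbb{Q}[x,y]$) yields $R(\alpha)=0$, and hence $\alpha$ is algebraic.

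Next, because $\alpha\ne 0$ is algebraic, Theorem \ref{thm:BothAlgeNum} forces $\tan\alpha$ to be transcendental over $\mathbb{Q}$. View $p(\alpha,y),q(\alpha,y)\in\overline{\mathbb{Q}}[y]$; each has the transcendental element $\tan\alpha$ as a root, so both must be identically zero in $y$. Writing $p=\sum_i p_i(x)y^i$ and $q=\sum_j q_j(x)y^j$ with $p_i,q_j\in\mathbb{Q}[x]$, this gives $p_i(\alpha)=q_j(\alpha)=0$ for every $i,j$. Letting $m(x)\in\mathbb{Q}[x]$ be the minimal polynomial of $\alpha$ over $\mathbb{Q}$ (a non-constant polynomial, even when $\alpha\in\mathbb{Q}$, in which case $m=x-\alpha$), we see that $m$ divides every coefficient of $p$ and every coefficient of $q$ in $\mathbb{Q}[x]$, hence $m\mid p$ and $m\mid q$ in $\mathbb{Q}[x,y]$, contradicting $\gcd(p,q)=1$.

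I expect the main obstacle to be the clean bookkeeping of the degenerate cases in which one of $p,q$ is independent of $y$, together with the correct invocation of the two standard facts about $R(x)$ (non-vanishing under coprimality, and membership in the elimination ideal); once $\alpha$ is shown to be algebraic, the transcendence of $\tan\alpha$ via Theorem \ref{thm:BothAlgeNum} and the minimal-polynomial argument close the proof in one short step.
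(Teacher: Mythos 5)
Your proof is correct, and its first half (using $R(x)=\res(p,q,y)$ together with $R\in(p,q)$ to conclude that a common root $\alpha\ne 0$ must be algebraic) is the same resultant step that lies at the heart of the cited proof of \cite[Theorem 7]{mccallum2012deciding}, to which the paper's argument for this proposition simply delegates. Where you diverge is the second half: the paper (following McCallum) proceeds symmetrically, applying the analogous resultant in $x$ to show that $\tan\alpha$ is \emph{also} algebraic, and then invokes Theorem~\ref{thm:BothAlgeNum} to say that $\alpha$ and $\tan\alpha$ cannot both be algebraic. You instead invoke Theorem~\ref{thm:BothAlgeNum} in the opposite direction, concluding that $\tan\alpha$ is transcendental, observe that a nonzero polynomial in $\overline{\mathbb{Q}}[y]$ cannot vanish at a transcendental number so $p(\alpha,y)\equiv q(\alpha,y)\equiv 0$, and let the minimal polynomial $m(x)$ of $\alpha$ be a common non-unit divisor of $p$ and $q$. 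This one-sided route has a small advantage: it uses only one resultant and so sidesteps the bookkeeping needed when $\deg(p,x)$ or $\deg(q,x)$ is zero, at the cost of the extra minimal-polynomial step. Both arguments are valid, and yours has the virtue of being self-contained rather than a cross-reference. One minor wording nit: in the subcase ``exactly one of $p,q$ lies in $\mathbb{Q}[x]$'' you should note (as you implicitly do) that the univariate one cannot be a nonzero constant (else it would not vanish at $\alpha$) and cannot be zero (else the other would be a unit), so it is genuinely non-constant and $\alpha$ is indeed algebraic.
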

\begin{proof}
Replacing those ``$p$, $q$" in the proof of \cite[Theorem 7]{mccallum2012deciding} by $p$ and $q$ in this proposition, those ``$f^*(x)$, $g^*(x)$" therein by $p(x,\tan x)$ and $q(x,\tan x)$, and the function ``\text{trans}" therein by the tangent function, we see that if $\alpha\in(\mathbb{R}\backslash\{0\})\backslash(\mathbb{Z}+\frac{1}{2})\pi$ is a common root of $p(x,\tan x)$ and $q(x,\tan x)$, then both $\alpha$ and $\tan \alpha$ would be algebraic, which contradicts Theorem \ref{thm:BothAlgeNum}.
\end{proof}
Then the following theorem is straightforward.

\begin{theorem}\label{thm:counting-multiplicity}
    Let $g\in\mathbb{Q}[x,y]\backslash\mathbb{Q}$, and the factorization of $g$ in $\mathbb{Q}[x,y]$ is 
    $g=c_0\prod_{i=1}^{n} g_i^{p_i}$,
    where $g_i$ are distinct irreducible factors and $c_0\in\mathbb{Q}$. If $x_0\in\mathbb{R}\backslash\{0\}$ is a root of $g(x,\tan x)$, then there is a unique $g_i$ such that $g_i(x_0,\tan x_0)=0$ and the multiplicity of $x_0$ is $p_i$.\hfill\qed
\end{theorem}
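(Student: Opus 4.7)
The plan is to combine Propositions \ref{prop:multiplicity-one} and \ref{proposition:nocommonroot} in a routine way. Fix $x_0 \in \mathbb{R}\setminus\{0\}$ with $g(x_0,\tan x_0)=0$. Since $g = c_0\prod_{i=1}^{n} g_i^{p_i}$ with $c_0 \in \mathbb{Q}$, at least one irreducible factor $g_{i_0}$ must vanish at $(x_0,\tan x_0)$, so the existence part is immediate.

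For uniqueness, I would argue by contradiction: suppose two distinct irreducible factors $g_i$ and $g_j$ both satisfy $g_i(x_0,\tan x_0) = g_j(x_0,\tan x_0) = 0$. Distinct irreducible factors in $\mathbb{Q}[x,y]$ are coprime, so Proposition \ref{proposition:nocommonroot} forces the only possible common root of $g_i(x,\tan x)$ and $g_j(x,\tan x)$ to be $x=0$, contradicting $x_0 \neq 0$. Hence $g_{i_0}$ is unique.

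For the multiplicity claim, factor $g(x,\tan x) = c_0 \prod_{i=1}^{n} g_i(x,\tan x)^{p_i}$ as an identity of analytic functions near $x_0$. For each $i\neq i_0$, the uniqueness step shows $g_i(x_0,\tan x_0) \neq 0$, so $g_i(x,\tan x)^{p_i}$ contributes multiplicity $0$ at $x_0$. For $g_{i_0}$, since $g_{i_0}$ is irreducible and vanishes at $(x_0,\tan x_0)$ it must lie in $\mathbb{Q}[x,y]\setminus\mathbb{Q}$, so Proposition \ref{prop:multiplicity-one} applies and tells us that $x_0$ (being nonzero) is a simple root of $g_{i_0}(x,\tan x)$. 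Therefore the multiplicity of $x_0$ as a root of $g_{i_0}(x,\tan x)^{p_{i_0}}$ is exactly $p_{i_0}$, and the total multiplicity of $x_0$ as a root of $g(x,\tan x)$ is $p_{i_0}$, as claimed.

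There is no real obstacle here; the statement is essentially a corollary of the two preceding propositions. The only point that deserves care is verifying that $g_{i_0}$ is genuinely non-constant before invoking Proposition \ref{prop:multiplicity-one}, which is automatic since a nonzero constant in $\mathbb{Q}$ cannot vanish anywhere. The rest is a standard multiplicity computation for a product of analytic functions at a point where exactly one factor vanishes simply.
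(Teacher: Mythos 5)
Your proof is correct and follows exactly the route the paper intends: the theorem is stated there as an immediate consequence of Propositions \ref{prop:multiplicity-one} and \ref{proposition:nocommonroot} with no written proof, and your argument (coprimality of distinct irreducible factors for uniqueness, simplicity of the root of the unique vanishing factor for the multiplicity count) is precisely that intended deduction.
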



    By the discussions above, Algorithms \ref{alg:AlgDerBound} and \ref{alg:BoundRSI} can be extended to compute the multiplicity of each root: Factorize $g$ into $c_0\prod_{i=1}^{n} g_i^{p_i}$ and call Algorithms \ref{alg:AlgDerBound} and \ref{alg:BoundRSI} for each irreducible factor $g_i$, then the multiplicity is just the exponent of $g_i$ in $g$ except at $x=0$. 
    
    \begin{remark}\label{remark:thesameforcot}
    Replacing the tangent function in Propositions \ref{prop:multiplicity-one}--\ref{proposition:nocommonroot} and Theorem \ref{thm:counting-multiplicity} by the cotangent function, the corresponding conclusions also hold. The proofs are all similar.
    \end{remark}
    
    The next lemma detects the possible roots of $f(x,\sin x,\cos x)$ at $k\pi~(k\in \mathbb{Z})$, of which the function $g(\frac{x}{2},\tan \frac{x}{2})$ does not take care.
\begin{lemma}\label{lemma:multiplicityofkpi}
    It can be effectively decided whether $f(x,\sin x,\cos x)$ is identically zero for $f\in\mathbb{Q}[x,y,z]$. Moreover, the root multiplicity of $f(x,\sin x,\cos x)$ at $0$, $2k\pi$ $(k\in \mathbb{Z}\backslash\{0\})$, $(2k+1)\pi$ $(k\in \mathbb{Z})$ can be effectively computed.
\end{lemma}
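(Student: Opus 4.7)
The plan is to split the lemma into two independent algorithmic tasks. For deciding whether $f(x,\sin x,\cos x)\equiv 0$, I would exploit the fact that the only algebraic relation among $x,\sin x,\cos x$ is the Pythagorean identity. First divide $f$ by $y^{2}+z^{2}-1$ viewed as a polynomial in $z$ over $\mathbb{Q}[x,y]$, obtaining $f=q(x,y,z)(y^{2}+z^{2}-1)+a(x,y)+z\,b(x,y)$. The implication $(y^{2}+z^{2}-1)\mid f\Rightarrow f(x,\sin x,\cos x)\equiv 0$ is immediate. For the converse, I would argue that $a(x,\sin x)+\cos x\cdot b(x,\sin x)\equiv 0$ forces $a=b=0$: squaring the relation $a(x,\sin x)=-\cos x\cdot b(x,\sin x)$ and using $\cos^{2}x=1-\sin^{2}x$ yields $\phi(x,\sin x)\equiv 0$ for $\phi(x,y):=a(x,y)^{2}-(1-y^{2})b(x,y)^{2}\in\mathbb{Q}[x,y]$. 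For each fixed $v\in[-1,1]$, the univariate polynomial $\phi(x,v)$ has infinitely many zeros (the fibre of $\sin$ over $v$) and hence vanishes identically in $x$; letting $v$ range over infinitely many values then gives $\phi\equiv 0$ as a bivariate polynomial. Finally, $a^{2}=(1-y^{2})b^{2}$ in the UFD $\mathbb{Q}[x,y]$ forces $b=0$ (since $(1-y)(1+y)$ is square-free), and hence $a=0$. The whole test thus reduces to a polynomial divisibility check, which is algorithmic.

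For the multiplicity part I would fix a target $x_{0}\in\{0,\,2k\pi,\,(2k+1)\pi\}$ and, once Step~1 has certified $F(x):=f(x,\sin x,\cos x)\not\equiv 0$, iterate the derivative operator $D\colon g\mapsto\partial_{x}g+z\,\partial_{y}g-y\,\partial_{z}g$ acting on $\mathbb{Q}[x,y,z]$. Setting $f_{0}:=f$ and $f_{n}:=D(f_{n-1})$, a straightforward induction gives $F^{(n)}(x)=f_{n}(x,\sin x,\cos x)$, so $F^{(n)}(x_{0})=f_{n}(x_{0},0,\varepsilon)$ with $\varepsilon=1$ when $x_{0}\in 2\pi\mathbb{Z}$ and $\varepsilon=-1$ when $x_{0}\in(2\mathbb{Z}+1)\pi$. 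In each case $F^{(n)}(x_{0})$ equals $P(\pi)$ for an explicitly computed $P\in\mathbb{Q}[T]$ (and is simply a rational number when $x_{0}=0$), and by the transcendence of $\pi$ (the engine of Lemma~\ref{lemma:piRealRoot}) the vanishing of $P(\pi)$ is equivalent to $P$ being the zero polynomial in $\mathbb{Q}[T]$, which is decidable. The algorithm then returns the least $n$ with $F^{(n)}(x_{0})\neq 0$ as the multiplicity $\nu$.

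The main obstacle, and the only delicate point, is the termination of this ``differentiate-and-test'' loop: I must be sure that some $F^{(n)}(x_{0})$ is nonzero. This comes from analyticity, in the spirit of Lemma~\ref{lemma:NoLimit}: the complex function $z\mapsto f(z,\sin z,\cos z)$ is entire, so if every $F^{(n)}(x_{0})$ vanished then the Taylor series of $F$ at $x_{0}$ would be identically zero, so $F$ would vanish on a neighbourhood of $x_{0}$, and analytic continuation would force $F\equiv 0$ on all of $\mathbb{R}$, contradicting Step~1. Consequently the loop halts after finitely many iterations, and no a priori upper bound on the multiplicity is needed for effectiveness.
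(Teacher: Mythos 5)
Your proposal is correct, but it takes a genuinely different route from the paper on both halves of the lemma. For the zero-test, the paper rewrites $f(x,\sin x,\cos x)$ as $g(\tfrac{x}{2},\tan\tfrac{x}{2})$ via the half-angle substitution of Eq.~(3) and quotes a cited lemma to conclude that the MTP vanishes identically iff $g$ is the zero polynomial; you instead reduce $f$ modulo $y^{2}+z^{2}-1$ and prove from scratch, via the squaring trick and square-freeness of $(1-y)(1+y)$ in the UFD $\mathbb{Q}[x,y]$, that the remainder $a+zb$ must vanish --- a self-contained and more elementary argument that avoids the external citation. For the multiplicities, the paper again passes to $g(x,\tan x)$ (resp.\ $\hat{g}(x,\cot x)$) and shows, using the transcendence of $k\pi$ together with its Theorem 5.3 on multiplicities of irreducible factors, that the multiplicity at every $2k\pi$ (resp.\ $(2k+1)\pi$), $k\neq 0$, equals the exponent of the factor $t$ in $g$ (resp.\ $\hat{g}$); only the point $0$ is handled by successive differentiation. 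You handle all three kinds of points uniformly by iterating the derivation $D=\partial_x+z\,\partial_y-y\,\partial_z$, testing vanishing of $F^{(n)}(x_0)$ via transcendence of $\pi$, and using analyticity for termination --- this is simpler and bypasses the simple-root/co-primality machinery the paper leans on, at the cost of an iterative loop rather than a closed-form answer. One point worth making explicit in your write-up: your test ``$P(\pi)=0$ iff $P\equiv 0$'' is equivalent to ``$f_n(x,0,\varepsilon)$ is the zero polynomial'', which does not depend on $k$; hence your procedure automatically yields the same multiplicity at all $2k\pi$ ($k\neq 0$) and at all $(2k+1)\pi$. This uniformity, which the paper gets for free as the exponent of $t$ in the factorization, is exactly what the main algorithm needs in order to describe infinitely many periodic roots by a single finite datum, so it deserves a sentence rather than being left implicit.
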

\begin{proof}
     Rewrite $f(x,\sin x,\cos x)$ as $g(\frac{x}{2},\tan\frac{x}{2})$ with $g\in\Q[x,t]$ in Eq. (\ref{equation:gTurnedToTan}), then $f(x,\sin x,\cos x)$ is identically zero iff $g(\frac{x}{2},\tan \frac{x}{2})$ is. Lemma 2.3 in \cite{chen2020automated} then shows $f(x,\sin x,\cos x)$ is identically zero iff $g$ is a zero polynomial.
     
     If $f(x,\sin x,\cos x)$ is not identically zero, then the root multiplicity of $x=0$ is finite. Otherwise the analytic function $f(z,\sin z,\cos z)$ with $z$ varying in $\C$ would be identically zero, and so would be $f(x,\sin x,\cos x)$. By computing $\frac{\mathrm{d}^{n} }{\mathrm{d} x^{n}}f(x,\sin x,\cos x)|_{x=0}$ for $n=0,1,2,...,$ until $\frac{\mathrm{d}^{n_0} }{\mathrm{d} x^{n_0}}f(x,\sin x,\cos x)|_{x=0}\neq0$ for some $n_0$. Then $n_0$ is the multiplicity of $x=0$.
     
    For $x=2k\pi$ with $k\in\Z\backslash\{0\}$, $f(x,\sin x,\cos x)|_{x=2k\pi}=0$ iff $g(x,\tan x)|_{x=k\pi}$\\$=0$ and the root multiplicity of $2k\pi$ \emph{w.r.t.} $f(x,\sin x,\cos x)$ equals the one of $k\pi$  \emph{w.r.t.} $g(x,\tan x)$. Moreover, we claim $g(k\pi,\tan k\pi)=0$ iff $t$ is a factor of $g$. The ``if" part is clear. Suppose $g(k\pi,\tan k\pi)=0$, then $0=h(k\pi,\tan k\pi)=h(k\pi,0)$ for some irreducible factor $h=\sum_{j=0}^{\deg(h,x)}a_j(t)x^j$ of $g$. Since $k\pi$ is transcendental, $a_j(0)=0$ for any $j$. Hence $t$ divides $h$ and $h=rt$ (with an $r\in\Q$), which proves the claim. Then, by Theorem \ref{thm:counting-multiplicity}, the multiplicity of $2k\pi$ is the exponent of $y$ in the factorization of $g$.
    
    Similarly, we can rewrite $f(x,\sin x,\cos x)$ as $\hat{g}(\frac{x}{2},\cot \frac{x}{2})$, with 
    \[\hat{g}(x,t)=f(2x,\frac{2t}{1+t^2},\frac{t^2-1}{1+t^2})(1+t^2)^{\text{deg}(f,[y,z])}\in\Q[x,t].\]
Then, using Remark \ref{remark:thesameforcot} and the discussion similar to the last paragraph, one observes that the root multiplicity of $(2k+1)\pi$ equals the exponent of $t$ in the factorization of $\hat{g}$.\end{proof}
    

It is time to give the main algorithm (Algorithm \ref{alg:CompleteAlgorithm}) for isolating all real roots of $f(x,\sin x,\cos x)$:
\begin{enumerate}
    \item Decide whether the MTP $f(x, \sin x,\cos x)$ is identically zero (Lines \ref{algline:whetherg0}--\ref{algline:error}).
    \item Compute the root multiplicity of the points $x=k\pi$ with $k\in\Z$ using the method in the proof of Lemma \ref{lemma:multiplicityofkpi} and compute a polynomial $g\in\Q[x,y]$ according to the input $f$ such that $y\nmid$ $g$ (Lines \ref{algline:m1}--\ref{algline:m3}).
    \item Compute the isolating intervals of $\lc(g,x)$ together with $(a_0,b_0)$, $(a_{s+1},b_{s+1})$, forming a potential periodic interval set $I$ and factorize $g$ (Lines \ref{algline:isolate}--\ref{algline:factorize}).
    \item Call Algorithm \ref{alg:AlgDerBound} with each irreducible factor $g_i$ with $\deg(g_i,y)\geq1$ together with $I$ to compute some $k^*_1$, $k^*_2$ and isolate the real roots of $g_i(x,\tan x)$ outside $(-k^*_2\pi-\pi/2,k^*_1\pi+\pi/2)$ (Lines \ref{algline:deggy>0part}--\ref{algline:renewk-}). 
    \item Isolate nonzero roots of the factors $g_i$ in $\Q[x]$ (Lines \ref{algline:isolateAlge}--\ref{algline:remove00}). Note that these correspond to the only nonzero algebraic roots of $f(x,\sin x,\cos x)$. We call $[0,0]$ and the intervals obtained in the above steps the ``algebraic" ones and others the ``transcendental" ones: the former represent 0 or algebraic roots and the latter represent transcendental roots. An algebraic interval may intersect a transcendental one, but it won't cause any trouble since we know they represent different roots.
    \item Unify those $k^*_{1\cdot 2}$ of each factor $g_i$ (with $\deg(g_i,y)\geq1$) by some $k_+$ and $k_-$, which characterize the bound of the ``bounded" roots (Lines \ref{algline:renewk+}, \ref{algline:renewk-} and \ref{algline:unifyk+}--\ref{algline:unifyk-}).
    \item For each factor $g_i$ such that $\deg(g_i,y)\geq1$, isolate the real roots of $g_i(x,\tan x)$ in $(k_-\pi-\pi/2,k_+\pi+\pi/2)$ by Algorithm \ref{alg:BoundRSI} (Lines \ref{algline:k-tok+}--\ref{algline:refineboundedroots}).
    \item Return all the results.
\end{enumerate}
In addition, for any real numbers $\alpha,\beta$, we define the interval $[[\alpha,\beta]]$ as
\begin{equation*}
[[\alpha,\beta]]:=\begin{cases}
(\alpha,\beta), &\text{if}~\alpha<\beta,\\
[\alpha,\beta], &\text{if}~\alpha=\beta.
\end{cases}  
\end{equation*}
The notation is used in the pseudocode of Algorithm \ref{alg:CompleteAlgorithm}.

\begin{algorithm}[!t]\label{alg:CompleteAlgorithm}
\tiny
\DontPrintSemicolon
\LinesNumbered
\SetKwInOut{Input}{Input}
\SetKwInOut{Output}{Output}
\Input{ 
  $f(x,\tau,\upsilon)\in \Q[x,\tau,\upsilon]$;\\
  a positive rational number $\varepsilon$ that specifies the maximal length of isolating intervals.
}
\Output{\\
a finite set of some $2$-tuples $\{([[\alpha_i,\beta_i]],m_i)\}_i$, where $\alpha_i,\beta_i\in (2\arctan(\Q)+\frac{\Z}{2}\pi)\cup\Q$, such that for every $i$, there is $1$ real root  of $f(x,\sin x,\cos x)$ in $[[\alpha_i,\beta_i]]$ and the root is of multiplicity $m_i$;\\
a finite set of some $3$-tuples $\{([[\alpha_i,\beta_i]],(c_i,m_i),k_i)\}_i $, where $\alpha_i,\beta_i\in 2\arctan(\Q)+\frac{\Z}{2}\pi$, such that for every $i$: if $k_i\ge 0$,  there are exactly $c_i$ roots of $f(x,\sin x,\cos x)$ in $[[\alpha_i,\beta_i]]+2k\pi\ (k\ge k_i)$ and each root is of multiplicity $m_i$;
if $k_i< 0$, there are exactly $c_i$ roots of $f(x,\sin x,\cos x)$ in $[[\alpha_i,\beta_i]]+2k\pi\ (k\le k_i)$ and each root is of multiplicity $m_i$.
}
\caption{\bf Main Algorithm}\label{alg:complete-algorithm}
\BlankLine
 $ans_1\leftarrow \emptyset$; $ans_{2}\leftarrow \emptyset$; $g\leftarrow f(2 x,\frac{2 y}{1+y^2},\frac{1-y^2}{1+y^2})\cdot (y^2+1)^{\tdeg(f,[\tau,\upsilon])}$\;
\If{\label{algline:whetherg0}$g$ is the zero polynomial}{
    \textbf{ERROR} ``$f(x,\sin x,\cos x)$ is identically zero" \;\label{algline:error}
}
 $gcot\leftarrow f(2 x,\frac{2 y}{1+y^2};  \frac{y^2 - 1}{1+y^2})\cdot (1 + y^2)^{\tdeg(f, [\tau,\upsilon])}$\;
Let $m_1$ be the maximal positive integer such that $y^{m_1}|gcot$\;\label{algline:m1}
\If{$m_1\neq 0$}{
$ans_2$.add($([\pi,\pi],(1,m_1),0)$); $ans_2$.add($([\pi,\pi],(1,m_1),-1)$)\;
}
\If{$f(0,\sin0,\cos0)=0$}{
    $m_2\leftarrow$ the multiplicity of $f(x,\sin{x},\cos x)$ at $0$; $ans_1$.add($([0,0],m_2)$)\;
}
Let $m_3$ be the maximal positive integer such that $y^{m_3}|g$\;
\If{$m_3\neq 0$}{
    $ans_2$.add($([0,0],(1,m_3),1)$); $ans_2$.add($([0,0],(1,m_3),-1)$); $g\leftarrow g/y^{m_3}$\;
}\label{algline:m3}
$(a_1,b_1),\ldots, (a_s,b_s)\leftarrow$ the isolating  intervals of $\lc(g,x)$ such that $\arctan b_i -\arctan a_i <\varepsilon/2~(1\le i\le s)$\;\label{algline:isolate}
$a_0\leftarrow-\infty$; Let $b_0\in\Q$ such that $\arctan b_0- (-\frac{\pi}{2})< \frac{\varepsilon}{2}$ and $b_0< a_1$.\; 
$b_{s+1}\leftarrow+\infty$; Let $a_{s+1}\in\Q$ such that $ \frac{\pi}{2}-\arctan a_{s+1}< \frac{\varepsilon}{2}$ and $a_{s+1}> b_s$.\; 
$I\leftarrow [(a_0,b_0),\ldots,(a_{s+1},b_{s+1})]$\;\label{algline:PPIS}
Factorize $g$ as ${c_0}\prod_{i=1}^{n} g_i^{p_i}$\;\label{algline:factorize}
$ans_{21}\leftarrow \emptyset;~ans_{22}\leftarrow \emptyset$; $k_-\leftarrow 0$; $k_+\leftarrow 0$\;
\For{$i$ from 1 to $n$}{
    \eIf{$\deg(g_i,y)\geq 1$}{\label{algline:deggy>0part}
        $k^*_1, \{((\alpha_{j},\beta_{j}),c_{j}) \}_j$ $\leftarrow$ Call Algorithm \ref{alg:AlgDerBound} with $g_i(x,y)$ and $I$\;
        $k_+\leftarrow\max\{k^*_1, k_+\}$; $ans_{21}$.add($\{((2\alpha_{j},2\beta_{j}),(c_{j},p_i)) \}_j$)\;\label{algline:renewk+}
        $k^*_2, \{((\alpha_{j},\beta_{j}),c_{j}) \}_j$ $\leftarrow$ Call Algorithm \ref{alg:AlgDerBound} with $g_i(-x,-y)$ and $I$\;
        $k_-\leftarrow\min\{-k^*_2,k_-\}$; $ans_{22}$.add($\{((-2\beta_{j},-2\alpha_{j}),(c_{j},p_i)) \}_j$)\;\label{algline:renewk-}
    }
    {
        $\{[[\alpha_{j},\beta_{j}]]\}_j\leftarrow$   the isolating  intervals of nonzero roots of ${g_i}$ such that they do not intersect $[0,0]$ and $\beta_{j} - \alpha_{j}<\varepsilon/2$.\;\label{algline:isolateAlge}
        $ans_1$.add($\{([[2\alpha_{j},2\beta_{j}]],p_i)\}_j$)\;\label{algline:remove00}
        
    }
}
Refine the intervals obtained in the ``\textbf{else}" part of the above loop so that they do not intersect.\;
\For{each $((\alpha^*,\beta^*),(c^*,p^*))$ in $ans_{21}$\label{algline:unifyk+}}{
    $ans_2$.add($((\alpha^*,\beta^*),(c^*,p^*),k_++1)$)\;
}
\For{each $((\alpha^*,\beta^*),(c^*,p^*))$ in $ans_{22}$}{
    $ans_2$.add($((\alpha^*,\beta^*),(c^*,p^*),k_--1)$)\;
}\label{algline:unifyk-}

\For{$k$ from $k_-$ to $k_+$}{\label{algline:k-tok+}
    \For{$i$ from 1 to $n$}{
        \If{$\deg(g_i,y)\geq 1$}{
            $\{[[\alpha_{j},\beta_{j}]]\}_j\leftarrow$ Call Algorithm \ref{alg:BoundRSI} with $g_i(x,y)$ and $k$\;
             Remove $[0, 0]$ in  $\{[[\alpha_{j},\beta_{j}]]\}_j$; 
          $ans_1$.add($\{([[2\alpha_{j},2\beta_{j}]],p_i)\}_j$)\;\label{algline:Remove0}
        }
    }
    Refine the intervals  $[[\alpha_{j},\beta_{j}]]$  obtained in the above loop so that any two of them do not intersect and $\beta_{j} - \alpha_{j}<\varepsilon/2$.\;\label{algline:refineboundedroots}
}

\Return  $ans_1$, $ans_2$\;
\end{algorithm}

\section{Experiments}\label{sec:experiments}

The main algorithm (Algorithm \ref{alg:complete-algorithm}) is implemented with {\tt Maple2021} as a tool named {\tt RootOfMTP}. 
In order to indicate the effectiveness and efficiency of the tool, we prepare two types of testing examples, where one type is collected from the literature
and another is generated randomly.
On the one hand, we test the correctness of the tool on these examples through some numerical methods.
On the other hand, the experimental results of these examples show that the tool is powerful and is able to deal with some complicated examples.

\subsection{Commands}
The tool {\tt RootOfMTP} can be downloaded at
\begin{center}
    {\bf \url{https://github.com/lihaokun/RootOfMTP}}.
\end{center}
All testing examples can also be downloaded there.
All tests were conducted on 16-Core Intel Core i7-12900KF@3.20GHz
with 128GB of memory and Windows 11.

We illustrate how to use {\tt RootOfMTP} by a simple example.
Suppose we want to isolate all the real roots of the following MTP by {\tt RootOfMTP}:
\begin{align*}
x\sin x + \cos x - 1.
\end{align*}
We only need to run the following commands in {\tt Maple2021}:

{\scriptsize
\begin{verbatim}
read ".../RootOfMTP.mpl";
RootOfMTP(x*sin(x)+cos(x)-1,x,1);
\end{verbatim}}

Herein, the first command is to read the file, and the inputs of the second command are an MTP, the variable of the MTP and a rational number $\epsilon>0$ which specifies the maximal length of isolating intervals.
The output is

{\scriptsize
\begin{verbatim}
For every k <= -1 (k in Z), 2kPi+(0) is a real root with multiplicity 1.
For every k >= 1 (k in Z), 2kPi+(0) is a real root with multiplicity 1.
There is 1 real root with multiplicity 2 at 0.
For every k >= 2 (k in Z), there is 1 real root with multiplicity 1 
in (2kPi+(2*arctan(63/16)), 2kPi+(Pi)).
For every k <= -2 (k in Z), there is 1 real root with multiplicity 1 
in (2kPi+(-Pi), 2kPi+(-2*arctan(63/16))).
There is 1 real root with multiplicity 1 in each open interval of the list
[[-5/2*Pi-2*arctan(29666650363354128505/36893488147419103232),
-5/2*Pi-2*arctan(7242537696610193/9007199254740992)],
[-1/2*Pi-2*arctan(7030038563941/17592186044416),
-1/2*Pi-2*arctan(14741934773129570377/36893488147419103232)],
[1/2*Pi+2*arctan(14741934773129570377/36893488147419103232),
1/2*Pi+2*arctan(7030038563941/17592186044416)],
[5/2*Pi+2*arctan(7242537696610193/9007199254740992),
5/2*Pi+2*arctan(29666650363354128505/36893488147419103232)]].
\end{verbatim}}

\subsection{Examples from the literature}
We collect $21$ testing examples from the literature \cite{chenliu2016,chen2022automated}.
Since some examples are rational-function inequalities, we need to transform them to MTPs.
Take \cite[Ex.4]{chenliu2016} for example:
\begin{align*}
    \frac{\sin^2 x}{x^2}+\frac{\tan x}{x}>\frac{x^2}{\sin^2 x}+\frac{x}{\tan x}.
\end{align*}
First, we move the terms in the right hand side to the left hand side. 
Second, we replace $\tan x$ with $\frac{\sin x}{\cos x}$.
Third, we cancel the denominators and obtain an MTP: $-x^4\cos x-x^3\sin x\cos^2 x+x\sin^3 x+\sin^4 x\cos x$.

The experimental results are presented in Table \ref{tab:chen-result}.
Note that these examples are kind of simple.
The bounds $k_{-}$ and $k_{+}$ are close to $0$ and {\tt RootOfMTP} can deal with each of them within $2$ seconds.

\begin{table}[ht]
\scriptsize
    \centering
    \scalebox{0.75}{
    \begin{tabular}{|c|c|}
\hline
$k_{-}$ & an integer such that $2k_{-}\pi-\pi$ is a lower bound for all bounded real roots\\
\hline
$k_{+}$ & an integer such that $2k_{+}\pi+\pi$ is an upper bound for all bounded real roots
\\
\hline
PRoots & periodic real roots\\
\hline
BRoots & bounded real roots\\
\hline
wSeq & the lengths of four computed weak Fourier sequences \\
\hline
wTime & time for computing four weak Fourier sequences\\
\hline
Time & total computing time \\
\hline
$n(m)$ & the number of bunches of periodic real roots / 
the number of bounded real roots (the multiplicity of each root)\\
\hline
-&no periodic or bounded real roots / no need to compute weak Fourier sequences\\
\hline
    \end{tabular}}
    \caption{Notation.} 
    \label{tab:notation}
\end{table}

\begin{table}[ht]
\scriptsize
    \centering
    \scalebox{0.55}{
    \begin{tabular}{|c|c|c|c|c|c|c|c||c|c|c|c|c|c|c|c|}
\hline
Example&$k_{-}$&$k_{+}$&PRoots&BRoots&wSeq&wTime&Time&Example&$k_{-}$&$k_{+}$&PRoots&BRoots&wSeq&wTime&Time\\
\hline
\cite[Ex.1]{chen2022automated}&-1&1&4(1)&7(1)&6;5;5;6&0.005s&0.251s&
\cite[Ex.2]{chen2022automated}&0&0&-&1(5)&6;5;5;6&0.006s&0.054s\\
\hline
\cite[Ex.4]{chen2022automated}&0&0&1(2)&-&-&0s&0.003s&
\cite[Ex.1, $k=1$]{chenliu2016} &0&0&4(1)&1(3)&8;7;7;8&0.009s&0.075s\\
\hline
\cite[Ex.1, $k=2$]{chenliu2016}&0&0&4(1)&1(5)&8;7;7;8&0.009s&0.058s&
\cite[Ex.1, $k=3$]{chenliu2016}&0&0&4(1)&1(3);2(1)&8;7;7;8&0.01s&0.085s\\
\hline
\cite[Ex.2, $c=9/45$]{chenliu2016}&-1&1&4(1)&1(6);6(1)&21;43;43;21&0.124s&1.089s&
\cite[Ex.2, $c=8/45$]{chenliu2016}&-1&1&4(1)&1(8);6(1)&21;43;43;21&0.084s&1.065s\\
\hline
\cite[Ex.2, $c=7/45$]{chenliu2016}&-1&1&4(1)&1(6);4(1)&21;43;43;21&0.123s&1.013s&
\cite[Ex.2, $c=6/45$]{chenliu2016}&-1&1&4(1)&1(6);4(1)&21;43;43;21&0.074s&0.964s\\
\hline
\cite[Ex.3]{chenliu2016}&-3&3&4(1)&1(5);12(1)&28;51;51;28&0.238s&1.719s&
\cite[Ex.4]{chenliu2016}&0&0&4(1)&1(8)&6;6;6;6&0.005s&0.246s\\
\hline
\cite[Ex.5]{chenliu2016}&0&0&-&2(1)&7;7;7;7&0.007s&0.098s&
\cite[Ex.6]{chenliu2016}&0&0&4(1)&1(7)&17;15;15;17&0.053s&0.18s\\
\hline
\cite[Ex.7]{chenliu2016}&0&0&-&1(5)&6;5;5;6&0.005s&0.048s&
\cite[Ex.8]{chenliu2016}&-1&1&4(1)&1(2);4(1)&4;3;3;4&0.001s&0.129s\\
\hline
\cite[Ex.9]{chenliu2016}&0&1&-&1(2)&5;5;5;5&0.005s&0.138s&
\cite[Ex.10]{chenliu2016}&0&0&4(1)&1(3)&8;7;7;8&0.056s&0.105s\\
\hline
\cite[Ex.11]{chenliu2016}&0&0&-&1(1)&4;4;4;4&0.003s&0.042s&
\cite[Ex.12]{chenliu2016}&-1&1&4(1)&8(1)&7;8;8;7&0.006s&0.187s\\
\hline
\cite[Ex.13]{chenliu2016}&0&0&4(1)&1(3)&8;7;7;8&0.007s&0.053s&&&&&&&&\\
\hline

    \end{tabular}}
    \caption{Results for Examples from the Literature (s: seconds).}
    \label{tab:chen-result}
\end{table}

\subsection{Random examples}

Now, we present the experimental results for randomly generated
polynomials. A randomly generated polynomial 
\begin{align*}
    {\bf rp}(d,p,c)\in\Q[x,y,z]
\end{align*}
is constructed as follows:
First, pick some monomials independently from the set 
$\{x^{a_1}y^{a_2}z^{a_3}\mid a_i\in\N,~a_1+a_2+a_3\le d\}$ in a manner such that each monomial is picked with probability $p$ while discarded with probability $1-p$. 
Then, randomly assign coefficients between $-c$ and $c$ to each picked monomial and sum them to obtain a polynomial.
We generate $160$ random polynomials $F_1,\ldots,F_{160}$ in this way from $16$ different classes, where
\begin{align*}
    F_{1},\ldots,F_{10}&\in{\bf rp}(5,0.1,10), & F_{11},\ldots,F_{20}&\in{\bf rp}(10,0.015,10),\\
    F_{21},\ldots,F_{30}&\in{\bf rp}(10,0.05,10), & F_{31},\ldots,F_{40}&\in{\bf rp}(10,0.1,10),\\
    F_{41},\ldots,F_{50}&\in{\bf rp}(15,0.01,10), & F_{51},\ldots,F_{60}&\in{\bf rp}(15,0.05,10),\\
    F_{61},\ldots,F_{70}&\in{\bf rp}(15,0.1,10), & F_{71},\ldots,F_{80}&\in{\bf rp}(20,0.0025,10),\\
    F_{81},\ldots,F_{90}&\in{\bf rp}(5,0.1,1000), & F_{91},\ldots,F_{100}&\in{\bf rp}(10,0.015,1000),\\
    F_{101},\ldots,F_{110}&\in{\bf rp}(10,0.05,1000), & F_{111},\ldots,F_{120}&\in{\bf rp}(10,0.1,1000),\\
    F_{121},\ldots,F_{130}&\in{\bf rp}(15,0.01,1000), & F_{131},\ldots,F_{140}&\in{\bf rp}(15,0.05,1000),\\
    F_{141},\ldots,F_{150}&\in{\bf rp}(15,0.1,1000), & F_{151},\ldots,F_{160}&\in{\bf rp}(20,0.0025,1000).
\end{align*}
Then, we replace $y$ and $z$ with $\sin x$ and $\cos x$ respectively in $F_{1},\ldots,F_{160}$ to get MTPs $M_{1},\ldots,M_{160}$.

The experimental results are presented in Table \ref{tab:random-result}.
The tool {\tt RootOfMTP} performs powerfully on these random examples: The tool can solve every example within $3000$ seconds, and $93\%$ of the examples were solved within 1 minute.
We notice that the computation of weak Fourier sequences is not very efficient.
For $59\%$ of the examples, the time for computing the weak Fourier sequences is more than one third of the total computing time. 
Another surprising observation is that the bounds $k_{-}$ and $k_{+}$ are still small, even on these more difficult examples.

Finally, we present Figure \ref{fig:num-within-time} to show the performance of {\tt RootOfMTP} on all examples (either from the literature or randomly generated).
The horizontal axis represents time, while the vertical axis represents the number of solved examples within the corresponding time.

\begin{table}[ht]
\scriptsize
    \centering
    \scalebox{0.45}{
    \begin{tabular}{|c|c|c|c|c|c|c|c||c|c|c|c|c|c|c|c|}
\hline
Example&$k_{-}$&$k_{+}$&PRoots&BRoots&wSeq&wTime&Time&Example&$k_{-}$&$k_{+}$&PRoots&BRoots&wSeq&wTime&Time\\
\hline
$M_{1}$&0&0&-&2(1)&27;27;27;27&0.186s&0.532s&$M_{2}$&0&0&4(1)&4(1)&26;26;26;26&0.148s&0.634s\\
\hline
$M_{3}$&0&0&4(1)&2(1)&22;24;24;22&0.08s&0.286s&$M_{4}$&-3&3&4(1)&15(1)&18;14;14;18&0.032s&0.592s\\
\hline
$M_{5}$&0&2&8(1)&11(1)&16;20;20;16&0.054s&0.542s&$M_{6}$&0&0&4(1)&4(1)&29;53;53;29&0.212s&0.62s\\
\hline
$M_{7}$&0&0&7(1)&2(1)&20;20;20;20&0.078s&0.361s&$M_{8}$&-2&2&4(1)&11(1)&49;26;26;49&0.215s&1.402s\\
\hline
$M_{9}$&-1&1&6(1)&7(1)&39;21;21;39&0.146s&0.878s&$M_{10}$&0&0&-&-&13;12;12;13&0.013s&0.101s\\
\hline
$M_{11}$&0&0&10(1)&5(1)&99;88;88;99&1.325s&2.167s&$M_{12}$&0&0&5(1);2(2)&1(1)&-&0s&0.018s\\
\hline
$M_{13}$&0&0&3(1)&4(1)&53;54;54;53&0.421s&1.367s&$M_{14}$&0&0&2(1)&1(4)&-&0s&0.009s\\
\hline
$M_{15}$&-1&1&4(1)&9(1)&87;192;192;87&2.765s&8.305s&$M_{16}$&-1&0&12(1)&15(1)&175;174;174;175&4.416s&13.173s\\
\hline
$M_{17}$&0&0&2(2)&1(2)&-&0s&0.001s&$M_{18}$&0&0&3(2)&1(3);1(1)&20;23;23;20&0.06s&0.238s\\
\hline
$M_{19}$&-5&1&10(1)&29(1)&70;212;212;70&3.87s&15.224s&$M_{20}$&0&0&3(8)&1(10)&-&0s&0.001s\\
\hline
$M_{21}$&0&1&-&2(1)&107;107;107;107&1.566s&7.036s&$M_{22}$&0&0&-&1(2);2(1)&79;79;79;79&1.001s&3.233s\\
\hline
$M_{23}$&-1&1&12(1)&16(1)&216;195;195;216&5.432s&23.089s&$M_{24}$&-4&4&4(1)&20(1)&104;213;213;104&4.035s&22.34s\\
\hline
$M_{25}$&0&0&8(1)&4(1)&107;243;243;107&5.278s&9.108s&$M_{26}$&-1&2&8(1)&16(1)&193;154;154;193&5.656s&19.36s\\
\hline
$M_{27}$&-2&2&8(1)&26(1)&143;161;161;143&4.508s&19.111s&$M_{28}$&0&0&-&1(2);1(1)&92;92;92;92&1.49s&2.893s\\
\hline
$M_{29}$&-1&1&8(1)&12(1)&108;177;177;108&3.516s&10.417s&$M_{30}$&0&1&4(1)&5(1)&99;233;233;99&4.659s&10.586s\\
\hline
$M_{31}$&0&0&4(1)&2(1)&103;247;247;103&5.227s&7.974s&$M_{32}$&0&1&4(1)&6(1)&105;249;249;105&5.022s&14.39s\\
\hline
$M_{33}$&-1&1&8(1)&12(1)&250;97;97;250&7.234s&19.997s&$M_{34}$&0&0&4(1)&4(1)&270;111;111;270&6.47s&11.022s\\
\hline
$M_{35}$&-1&1&8(1)&12(1)&535;111;111;535&55.14s&83.45s&$M_{36}$&0&0&4(1)&1(1)&259;106;106;259&8.499s&10.169s\\
\hline
$M_{37}$&0&1&4(1)&3(1)&103;260;260;103&7.776s&11.775s&$M_{38}$&0&0&4(1)&5(1)&100;100;100;100&1.575s&4.608s\\
\hline
$M_{39}$&0&1&4(1)&4(1)&103;103;103;103&2.115s&6.234s&$M_{40}$&-1&1&4(1)&7(1)&232;104;104;232&4.875s&12.368s\\
\hline
$M_{41}$&0&0&-&1(1)&74;90;90;74&1.325s&1.669s&$M_{42}$&-1&1&4(1)&1(5);10(1)&521;204;204;521&43.424s&84.569s\\
\hline
$M_{43}$&0&0&4(1)&1(6);3(1)&106;51;51;106&0.787s&1.46s&$M_{44}$&-6&6&8(1)&1(6);53(1)&463;382;382;463&44.619s&374.288s\\
\hline
$M_{45}$&0&0&4(1)&1(3);2(1)&183;611;611;183&97.672s&116.082s&$M_{46}$&-1&1&8(1)&1(2);13(1)&283;95;95;283&10.331s&29.126s\\
\hline
$M_{47}$&0&0&8(1)&2(1)&167;218;218;167&13.509s&17.624s&$M_{48}$&-1&1&10(1)&13(1)&336;538;538;336&70.131s&190.27s\\
\hline
$M_{49}$&0&0&2(1)&1(3)&84;91;91;84&1.13s&2.598s&$M_{50}$&-1&2&11(1)&17(1)&141;151;151;141&5.713s&19.468s\\
\hline
$M_{51}$&-2&1&8(1)&14(1)&200;905;905;200&230.777s&610.059s&$M_{52}$&0&0&-&1(2);2(1)&223;223;223;223&7.84s&19.67s\\
\hline
$M_{53}$&0&0&4(1)&1(2);5(1)&709;222;222;709&140.087s&173.988s&$M_{54}$&-1&1&8(1)&10(1)&605;787;787;605&233.809s&518.236s\\
\hline
$M_{55}$&0&1&-&2(1)&197;197;197;197&7.109s&25.48s&$M_{56}$&0&0&4(1)&1(4)&219;219;219;219&6.292s&16.779s\\
\hline
$M_{57}$&0&0&-&4(1)&223;223;223;223&7.016s&25.085s&$M_{58}$&-1&1&8(1)&1(2);14(1)&701;593;593;701&126.757s&368.925s\\
\hline
$M_{59}$&0&0&-&1(2);3(1)&230;230;230;230&10.187s&24.844s&$M_{60}$&-1&1&8(1)&12(1)&1133;219;219;1133&382.221s&806.921s\\
\hline
$M_{61}$&-1&0&4(1)&7(1)&245;650;650;245&60.368s&153.185s&$M_{62}$&0&0&-&4(1)&233;233;233;233&11.5s&32.585s\\
\hline
$M_{63}$&-1&1&4(1)&8(1)&243;617;617;243&50.81s&146.751s&$M_{64}$&-1&1&4(1)&7(1)&565;236;236;565&44.672s&97.59s\\
\hline
$M_{65}$&-7&7&4(1)&31(1)&234;537;537;234&43.514s&284.004s&$M_{66}$&0&1&4(1)&1(2);8(1)&237;597;597;237&52.063s&121.855s\\
\hline
$M_{67}$&-1&1&8(1)&13(1)&236;616;616;236&53.379s&171.749s&$M_{68}$&-1&1&4(1)&10(1)&242;631;631;242&50.009s&162.565s\\
\hline
$M_{69}$&-1&0&4(1)&5(1)&612;234;234;612&58.078s&91.915s&$M_{70}$&-1&0&4(1)&5(1)&609;238;238;609&52.149s&102.304s\\
\hline
$M_{71}$&0&0&3(1)&1(8);2(1)&164;164;164;164&3.571s&6.428s&$M_{72}$&-1&0&3(6);2(2);4(1)&1(13);6(1)&30;32;32;30&0.13s&5.488s\\
\hline
$M_{73}$&0&0&3(4);2(1)&1(6);3(1)&122;122;122;122&2.128s&4.072s&$M_{74}$&0&0&3(5);4(1)&1(7);5(1)&120;309;309;120&11.067s&23.141s\\
\hline
$M_{75}$&0&0&3(4)&1(4)&86;96;96;86&1.869s&2.906s&$M_{76}$&-1&0&15(1)&9(1)&1005;385;385;1005&338.574s&940.537s\\
\hline
$M_{77}$&-1&0&3(2);2(3);4(1)&1(4);2(1)&34;42;42;34&0.274s&0.819s&$M_{78}$&0&0&3(3);4(1)&1(7);3(1)&168;437;437;168&36.463s&45.945s\\
\hline
$M_{79}$&0&0&7(1)&1(2);1(1)&278;803;803;278&161.677s&170.291s&$M_{80}$&0&0&13(1)&1(4)&767;1009;1009;767&515.406s&518.845s\\
\hline
$M_{81}$&-1&0&-&4(1)&23;20;20;23&0.07s&0.379s&$M_{82}$&0&0&4(1)&2(1)&31;22;22;31&0.095s&0.298s\\
\hline
$M_{83}$&0&0&8(1)&3(1)&50;53;53;50&0.35s&0.787s&$M_{84}$&0&0&-&1(2);3(1)&17;18;18;17&0.104s&0.532s\\
\hline
$M_{85}$&0&0&4(1)&1(2);2(1)&16;14;14;16&0.04s&0.15s&$M_{86}$&0&0&-&3(1)&28;28;28;28&0.185s&0.739s\\
\hline
$M_{87}$&-1&1&11(1)&14(1)&17;18;18;17&0.049s&0.61s&$M_{88}$&0&0&-&1(2);1(1)&30;30;30;30&0.188s&0.588s\\
\hline
$M_{89}$&0&0&-&1(1)&19;19;19;19&0.109s&0.355s&$M_{90}$&-1&1&4(1)&6(1)&28;51;51;28&0.241s&1.035s\\
\hline
$M_{91}$&-1&0&-&1(2);1(1)&58;49;49;58&0.69s&2.015s&$M_{92}$&0&0&3(1)&1(7)&55;55;55;55&0.533s&1.269s\\
\hline
$M_{93}$&0&0&5(2);4(1)&1(3);2(1)&20;17;17;20&0.062s&0.243s&$M_{94}$&0&0&5(2)&1(6)&-&0s&0.002s\\
\hline
$M_{95}$&0&0&3(5)&1(9)&-&0s&0.0s&$M_{96}$&0&0&6(1)&1(6);1(1)&155;67;67;155&1.855s&2.914s\\
\hline
$M_{97}$&0&0&3(2);2(3)&1(5)&-&0s&0.006s&$M_{98}$&0&0&13(1)&5(1)&46;38;38;46&0.24s&0.667s\\
\hline
$M_{99}$&-10&10&6(1)&1(3);24(1)&5;5;5;5&0.003s&0.818s&$M_{100}$&0&0&4(1)&1(4);4(1)&28;48;48;28&0.205s&0.627s\\
\hline
$M_{101}$&0&0&-&1(2)&88;88;88;88&1.345s&3.151s&$M_{102}$&0&13&4(1)&1(3);78(1)&320;88;88;320&11.792s&95.276s\\
\hline
$M_{103}$&-7&7&8(1)&1(2);58(1)&327;87;87;327&15.814s&98.576s&$M_{104}$&-3&2&12(1)&32(1)&202;307;307;202&11.731s&66.217s\\
\hline
$M_{105}$&-1&0&7(1)&1(3);3(1)&94;222;222;94&4.1s&8.278s&$M_{106}$&-1&1&8(1)&9(1)&223;98;98;223&8.054s&16.822s\\
\hline
$M_{107}$&-1&0&8(1)&6(1)&335;242;242;335&14.961s&29.263s&$M_{108}$&0&0&8(1)&5(1)&179;175;175;179&8.592s&11.608s\\
\hline
$M_{109}$&-2&2&8(1)&21(1)&237;100;100;237&4.992s&28.19s&$M_{110}$&-1&0&-&1(1)&68;68;68;68&0.871s&3.634s\\
\hline
$M_{111}$&-1&1&4(1)&7(1)&110;245;245;110&5.441s&20.069s&$M_{112}$&-1&0&-&3(1)&82;82;82;82&1.066s&2.765s\\
\hline
$M_{113}$&-1&1&8(1)&13(1)&266;286;286;266&12.064s&42.789s&$M_{114}$&-1&0&4(1)&5(1)&107;245;245;107&6.711s&12.651s\\
\hline
$M_{115}$&-1&1&4(1)&7(1)&104;253;253;104&5.52s&15.223s&$M_{116}$&-1&4&8(1)&24(1)&271;271;271;271&14.857s&66.69s\\
\hline
$M_{117}$&-1&0&-&1(1)&104;104;104;104&1.941s&5.852s&$M_{118}$&-1&1&8(1)&8(1)&431;111;111;431&32.513s&49.383s\\
\hline
$M_{119}$&-1&0&4(1)&6(1)&111;279;279;111&11.232s&19.781s&$M_{120}$&0&0&-&-&107;107;107;107&1.763s&3.846s\\
\hline
$M_{121}$&-1&1&4(1)&8(1)&397;151;151;397&13.506s&39.123s&$M_{122}$&0&0&11(1)&1(10);4(1)&1443;182;182;1443&1273.551s&1766.71s\\
\hline
$M_{123}$&-6&0&17(1)&1(3);30(1)&833;543;543;833&267.967s&1832.919s&$M_{124}$&-68&1&6(1)&1(3);141(1)&151;623;623;151&88.491s&1008.551s\\
\hline
$M_{125}$&0&0&10(1)&3(1)&429;389;389;429&43.514s&65.364s&$M_{126}$&0&1&8(1)&6(1)&499;981;981;499&292.23s&870.745s\\
\hline
$M_{127}$&-1&1&2(2);4(1)&1(2);8(1)&311;129;129;311&25.591s&35.585s&$M_{128}$&0&0&-&3(1)&32;32;32;32&0.396s&0.928s\\
\hline
$M_{129}$&-1&1&8(1)&11(1)&132;132;132;132&2.595s&12.191s&$M_{130}$&-1&0&10(1)&1(3);10(1)&329;448;448;329&48.12s&93.598s\\
\hline
$M_{131}$&0&0&8(1)&5(1)&687;661;661;687&153.159s&249.382s&$M_{132}$&-1&0&4(1)&5(1)&225;560;560;225&40.575s&79.235s\\
\hline
$M_{133}$&0&0&-&4(1)&217;217;217;217&12.072s&23.128s&$M_{134}$&-1&1&8(1)&10(1)&513;1395;1395;513&1017.194s&2496.384s\\
\hline
$M_{135}$&-1&1&4(1)&6(1)&547;225;225;547&40.101s&99.09s&$M_{136}$&-1&1&8(1)&12(1)&649;533;533;649&106.612s&259.645s\\
\hline
$M_{137}$&0&1&4(1)&6(1)&623;241;241;623&58.741s&98.687s&$M_{138}$&-3&3&8(1)&1(2);28(1)&503;459;459;503&72.135s&325.427s\\
\hline
$M_{139}$&0&0&4(1)&4(1)&595;607;607;595&105.192s&138.137s&$M_{140}$&0&0&-&2(1)&215;215;215;215&8.132s&26.755s\\
\hline
$M_{141}$&0&0&4(1)&4(1)&242;621;621;242&60.554s&102.052s&$M_{142}$&-23&1&8(1)&18(1)&240;1361;1361;240&574.013s&1821.513s\\
\hline
$M_{143}$&-1&0&4(1)&6(1)&245;245;245;245&10.063s&55.621s&$M_{144}$&-1&1&4(1)&9(1)&227;586;586;227&51.978s&140.508s\\
\hline
$M_{145}$&0&0&-&1(1)&246;246;246;246&10.292s&32.055s&$M_{146}$&0&1&8(1)&8(1)&226;607;607;226&67.49s&115.328s\\
\hline
$M_{147}$&-4&4&4(1)&18(1)&247;647;647;247&55.513s&278.322s&$M_{148}$&-1&0&4(1)&4(1)&245;645;645;245&69.837s&131.675s\\
\hline
$M_{149}$&-1&0&8(1)&7(1)&234;630;630;234&73.299s&138.112s&$M_{150}$&0&1&12(1)&9(1)&1117;610;610;1117&542.28s&1267.414s\\
\hline
$M_{151}$&0&0&3(6);6(1)&1(13);2(1)&46;46;46;46&0.604s&1.933s&$M_{152}$&0&0&11(1)&1(4);2(1)&154;856;856;154&455.72s&663.41s\\
\hline
$M_{153}$&0&0&4(1)&1(3);1(1)&170;430;430;170&27.564s&43.251s&$M_{154}$&0&0&3(10);2(5)&1(12)&-&0s&0.023s\\
\hline
$M_{155}$&0&0&3(5);4(1)&1(5);2(1)&25;35;35;25&0.163s&0.568s&$M_{156}$&0&0&2(2);8(1)&4(1)&200;635;635;200&235.405s&275.278s\\
\hline
$M_{157}$&-1&1&4(1)&6(1)&433;213;213;433&15.891s&26.104s&$M_{158}$&0&0&2(3);4(1)&5(1)&94;138;138;94&3.508s&9.013s\\
\hline
$M_{159}$&-1&15&6(1)&64(1)&345;145;145;345&12.958s&168.032s&$M_{160}$&-1&1&11(1);2(2)&1(3);10(1)&203;279;279;203&15.479s&30.66s\\
\hline

    \end{tabular}}
    \caption{Results for Random Examples (s: seconds).}
    \label{tab:random-result}
\end{table}

\begin{figure}[ht]
    \centering
    \includegraphics[width=0.6\textwidth]{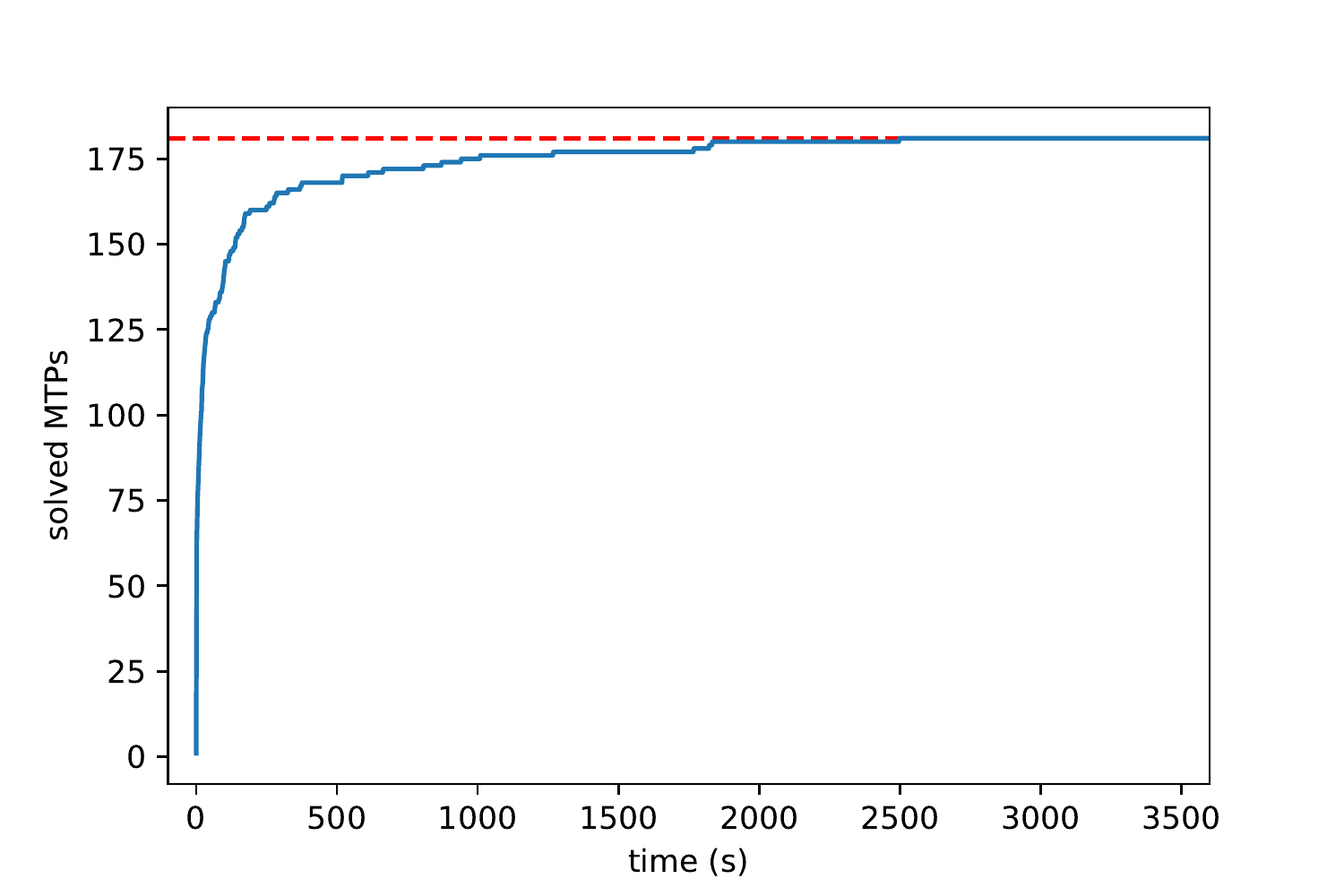}
    \caption{Number of Solved MTPs within Given Time (s: seconds). }
    \label{fig:num-within-time}
\end{figure}

\section{Conclusions}\label{sec:conclusion}
This paper proposes, for the first time, a complete algorithm ``isolating" all the real roots of a given MTP. More precisely, the algorithm computes
\begin{itemize}
    \item arbitrarily small intervals whose union covers all real roots of the given MTP (both the roots and the intervals can be countably many), and
    
    \item the exact number of distinct roots in each interval together with the multiplicity of each of them.
\end{itemize}
As a result, this leads to the first complete algorithm (as far as we know) deciding whether there is any, or whether there is infinitely many real roots for a given MTP. The correctness of the algorithm indicates a phenomenon of mathematical interest: the real roots in every period sufficiently far away from zero share a same quantity, a same multiset of root multiplicity and even similar relative positions in each period.

The algorithm has been implemented and a large number of examples have been tested to illustrate its efficiency.

Instead of focusing only on real roots in a certain bounded interval (as some other approaches in the literature doing), our (sub-)algorithms detect and ``isolate" the roots in any bounded intervals of the form $[2k_-\pi-\pi,2k_+\pi+\pi]$, and unbounded intervals of the forms $(2k_+\pi+\pi,\infty)$ and $(-\infty,-2k_-\pi-\pi)$ with sufficiently large $k_\pm$. Based on this, our methods can easily be modified to ``isolate" and compute the multiplicity of all real roots of an MTP in any interval $I$ of the form $(a,b)$, $[a,b)$, $(a,b]$ or $[a,b]$, with $a,b\in\mathbb{Q}\cup\{\pm\infty\}$: 
\begin{itemize}
    \item Choose an interval $I^\text{e}\supset I$ so that the sub-algorithms ``isolate" the real roots in $I^\text{e}$ naturally.
    \item If some of the ``transcendental" intervals cover some endpoints of $I$, replace them by some refined sub-intervals, each of which does not cover any endpoints but contains exactly one transcendental root.
    \item  After that, every ``isolating" interval covering some endpoints of $I$ is algebraic and contains exactly one root (which is $0$ or algebraic). If there are such intervals, then decide whether those covered endpoints are roots.
    \item If the covered endpoints are not roots, refine these endpoint-covering intervals via bisection so that they do not cover the endpoints after refinement. Otherwise, refine them to $[a,a]$ or $[b,b]$ if they cover $a$ or $b$, respectively.
\end{itemize}
In that way we can compute the exact number of real roots in each ``isolating" interval, together with multiplicity of each of them. Based on that, one can further prove MTP inequalities over the interval $I$ if one uses the following simple claim:
\begin{claim}
For any MTP $f$ not identically zero and any interval $I$ $($bounded or not$)$ of positive length,
\begin{itemize}
    \item $f>0$ on $I$ iff $f$ has no root in $I$ and $f(x_0)>0$ for some $x_0\in I$;
    \item $f\geq0$ on $I$ iff every root of $f$ in the interior of $I$ $($if there are any$)$ has even multiplicity and $f(x_0)>0$ for some $x_0\in I$.
\end{itemize}
\end{claim}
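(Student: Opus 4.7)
The plan is to reduce both biconditionals to two facts: (a) the continuity of the MTP $f$ viewed as a function $\R\to\R$, and (b) the discreteness of its real zero set, which is Lemma \ref{lemma:NoLimit}. The second fact is what turns a potentially hard real-analysis statement into a local one, because on a short enough neighborhood of any point of $I$, there is at most one zero of $f$, and the local behavior at such a zero is governed by the order of vanishing of the real-analytic function $f(x,\sin x,\cos x)$.

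For the first bullet, the forward direction is immediate. For the converse, suppose $f$ has no zero on $I$ and $f(x_0)>0$ for some $x_0\in I$. Since $f$ is continuous and never vanishes on the connected set $I$, its sign is constant there, and the witness $x_0$ forces $f>0$ throughout $I$. This direction uses only continuity and the intermediate value theorem.

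For the second bullet, the forward direction uses two observations. First, the existence of $x_0\in I$ with $f(x_0)>0$ follows from $f\not\equiv 0$ and Lemma \ref{lemma:NoLimit}: the zero set of $f$ is discrete in $\R$, hence countable, so the uncountable interval $I$ of positive length contains points at which $f\ne 0$; combined with $f\geq 0$ on $I$, any such point gives $f(x_0)>0$. Second, suppose toward contradiction that some root $r$ in the interior of $I$ has odd multiplicity $m$, meaning $f^{(j)}(r)=0$ for $0\leq j<m$ and $f^{(m)}(r)\ne 0$. The MTP $f$ is the restriction to $\R$ of the entire function $f(z,\sin z,\cos z)$ (recall the proof of Lemma \ref{lemma:NoLimit}), hence real-analytic, so the Taylor expansion $f(x)=\tfrac{f^{(m)}(r)}{m!}(x-r)^m+O((x-r)^{m+1})$ is valid on a neighborhood of $r$ contained in $I$, and the odd-$m$ leading term forces $f$ to change sign across $r$, contradicting $f\geq 0$ on $I$.

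For the reverse direction of the second bullet, assume every interior root has even multiplicity and $f(x_0)>0$ for some $x_0\in I$, and suppose for contradiction that $f(x_1)<0$ for some $x_1\in I$; say $x_0<x_1$. Then $[x_0,x_1]\subseteq I$, and by Lemma \ref{lemma:NoLimit} there are only finitely many zeros $r_1<\cdots<r_k$ of $f$ in this segment, all of which lie strictly between $x_0$ and $x_1$ and hence in the interior of $I$, so each has even multiplicity. On each of the open subintervals $(x_0,r_1),(r_1,r_2),\ldots,(r_k,x_1)$, $f$ has constant nonzero sign by continuity, and at each $r_j$ the even-multiplicity Taylor analysis above prevents a sign change. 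So all subintervals share $\sign(f(x_0))=+1$, contradicting $f(x_1)<0$. The main (mild) obstacle is to justify uniformly that ``multiplicity'' for an MTP root is the vanishing order of $f(x,\sin x,\cos x)$ and that this order is finite whenever $f\not\equiv 0$; both follow from the real-analyticity already invoked above. Handling unbounded $I$ or half-open $I$ needs no new idea because the argument only uses the two witnesses $x_0,x_1\in I$ and the segment between them, which is always contained in $I$.
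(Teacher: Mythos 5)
Your proof is correct. The paper states this claim without proof (it appears as a ``simple claim'' in the concluding section), so there is no paper argument to compare against; your route --- continuity plus the discreteness of the zero set from Lemma \ref{lemma:NoLimit}, with real-analyticity of $x\mapsto f(x,\sin x,\cos x)$ controlling the sign across a root according to the parity of its vanishing order --- is the natural one and is complete, including the identification of multiplicity with (finite) vanishing order and the observation that any root strictly between the witnesses $x_0,x_1$ automatically lies in the interior of $I$.
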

\noindent{}In this way an algorithm proving MTP inequalities over any interval (bounded or not) with endpoints in $\mathbb{Q}\cup\{\pm\infty\}$ can then be derived. To our knowledge, this is the first algorithm proving MTP inequalities over unbounded intervals.

\section*{Acknowledgement}
This work was supported by the NSFC under grants No. 61732001 and No. 12071467, and by the National Key Research Project of China under grant No. 2018YFA0306702.
\bibliographystyle{elsarticle-num}
\bibliography{ros}
\end{document}